\renewcommand{\cite}[1]{\citep{#1}}
\newtheorem{theorem}{Theorem}
\newtheorem{definition}{Definition}
\newtheorem{lemma}{Lemma}
\newtheorem{example}{Example}
\newtheorem{corollary}{Corollary}
\newcommand{\newtext}{\color{black}}
\newcommand{\oldtext}{\color{black}}
\newcommand{\m}[1]{\ensuremath{\mathsf{#1}}}
\newcommand{\pid}[1]{\m{#1}}
\newcommand{\pids}[1]{\til {\pid #1}}
\newcommand{\role}[1]{\mathtt{#1}}
\newcommand{\roles}[1]{\tilde {\mathtt {#1}}}
\newcommand{\carr}[1]{\langle #1  \rangle}
\newcommand{\lgbox}[1]{\fcolorbox{lightgray}{white}{#1}}
\newcommand{\mlgbox}[1]{\fcolorbox{lightgray}{white}{\ensuremath{#1}}}
\newcommand{\emptyN}{\nil} 
\newcommand{\gseq}{\fatsemi}
\newcommand{\swapC}{\simeq_{C}}
\newcommand{\code}[1]{\texttt{#1}}
\newcommand{\nil}{\boldsymbol 0}
\newcommand{\res}[1]{\boldsymbol( \boldsymbol\nu #1 \boldsymbol)\:}
\newcommand{\com}[2]{#1\;\code{-\hspace{-0.3mm}>}\;#2}
\newcommand{\gencom}{\com{\pid p.e}{\pid q}}
\newcommand{\sel}[3]{\com{#1}{#2 [#3]}}
\newcommand{\gensel}{\sel{\pid p}{\pid q}{l}}
\newcommand{\lto}[1]{\mathrel{\stackrel{{\;\;#1\;\;}}{\mbox{\rightarrowfill}}}}
\newcommand{\lmto}[1]{\mathrel{\stackrel{#1}{\mbox{\rightarrowfill}}\!\!\raisebox{1ex}{\scriptsize$\ast$}}}
\newcommand{\lleft}{\textsc{l}}
\newcommand{\lright}{\textsc{r}}
\renewcommand{\merge}{\sqcup}
\newcommand{\cond}[3]{\m{if}\, #1 \, \m{then} \, #2 \, \m{else} \, #3}
\newcommand{\eqcom}[2]{#1 \! \stackrel{\code{<\!-}}{\code{=}}\! #2}
\newcommand{\gencond}{\cond{\eqcom{\pid p}{\pid q}}{C_1}{C_2}}
\newcommand{\tpid}[2]{\pid #1[\role #2]}
\newcommand{\scgencond}{\cond{\pid p.(e = e')}{C_1}{C_2}}
\newcommand{\scgencondv}{\cond{\pid p.(v = w)}{C_1}{C_2}}
\newcommand{\sccom}[3]{\com{#1}{#2} : #3}
\newcommand{\scsel}[4]{\sccom{#1}{#2}{#3[#4]}}
\newcommand{\scdel}[4]{\sccom{#1}{#2}{#3\carr{#4}}}
\newcommand{\scgensel}{\scsel{\tpid pA}{\tpid qB}{k}{l}}
\newcommand{\scgenselj}{\scsel{\tpid pA}{\tpid qB}{k}{l_j}}
\newcommand{\scgendel}{\scdel{\tpid pA}{\tpid qB}{k}{k'[\role C]}}
\newcommand{\scgencom}{\sccom{\tpid pA.e}{\tpid qB.x}{k}}
\newcommand{\scgencomv}{\sccom{\tpid pA.v}{\tpid qB.x}{k}}
\newcommand{\scstart}[4]{#1 \, \m{start} \, #2 : #3 (#4)}
\newcommand{\scgenstart}{\scstart{\wtil{\tpid pA}}{\wtil{\tpid qB}}{a}{k}}
\newcommand{\rec}[3]{\m{def} \, #1  =  #2 \, \m{in} \, #3}
\newcommand{\genrec}{\rec{X}{C_2}{C_1}}
\newcommand{\scgenrec}{\rec{X(\til D)}{C_2}{C_1}}
\newcommand{\pn}{\m{pn}}
\newcommand{\pnast}[1]{|{#1}|}
\newcommand{\call}[1]{#1}
\newcommand{\gencall}{X}
\definecolor{light-gray}{gray}{0.928}
\newcommand{\hl}[1]{\colorbox{light-gray}{\ensuremath{#1}}}
\newcommand{\tjudge}[3]{#1 \vdash \hl{#2} \triangleright #3}
\newcommand{\sccall}[2]{#1\carr{#2}}
\newcommand{\scgencall}{\sccall X{\til E}}
\newcommand{\pcont}{\mathtt{c}}
\newcommand{\til}{\tilde}
\newcommand{\wtil}{\widetilde}
\newcommand{\amend}{\m{Amend}}
\newcommand{\NN}{\mathbb N}
\newcommand{\enc}[3]{\{\!|#1|\!\}^{#2}_{#3}}
\newcommand{\altenc}[3]{\{\!\!\{#1\}\!\!\}^{#2}_{#3}}
\newcommand{\s}{\mathtt s}
\newcommand{\suc}{\mathtt s \cdot}
\newcommand{\Sum}{\mathsf{add}}
\newcommand{\Sub}{\mathsf{sub}}
\newcommand{\Eq}{\mathsf{eq}}
\newcommand{\epp}[2]{[\![#1]\!]_{#2}}
\newcommand{\RR}{\mathcal{R}}
\newcommand{\macro}[1]{\textsc{#1}}
\newcommand{\rname}[2]{\ensuremath{\mbox{{#1}$|${#2}}}}
\newcommand{\co}[1]{\overline{#1}}
\newcommand{\asend}[2]{{#1}!\carr{#2}}
\newcommand{\arecv}[1]{#1?}
\newcommand{\actor}[3]{#1 \triangleright_{#2} #3}
\newcommand{\parp}{\, \boldsymbol{|} \, }
\newcommand{\asel}[2]{{#1}\oplus#2}
\newcommand{\abranch}[2]{{#1}\&{#2}}
\newcommand{\precongr}{\preceq}
\newcommand{\defeq}{\stackrel{\Delta}{=}}
\newcommand{\zero}{\varepsilon}
\newcommand{\sstate}[1]{\left\{\!\small\begin{array}l #1\end{array}\!\right\}}
\newcommand{\CPP}{\ensuremath{\mathrm{SP}^\mathrm{CC}}}
\newcommand{\MPP}{\ensuremath{\mathrm{SP}^\mathrm{MC}}}
\newcommand{\ChPP}{\ensuremath{\mathrm{ChP}^\mathrm{ChC}}}
\newcommand{\gtcom}[3]{\com{#1}{#2}\!:\!\carr{#3}}
\newcommand{\gtgencom}{\gtcom{\role A}{\role B}{\tnat}}
\newcommand{\gtgenbranch}{\com{\role A}{\role B}:\{l_i:G_i\}_{i\in I}}
\newcommand{\gtrecvar}[1]{\mathbf{#1}}
\newcommand{\gtgenrec}{\mu \gtrecvar t}
\newcommand{\gtgencall}{\gtrecvar t}
\newcommand{\gtend}{\m{end}}
\newcommand{\tnat}{\mathbf{nat}}
\newcommand{\tstring}{\mathbf{string}}
\newcommand{\smallpar}[1]{\smallskip

\noindent {\emph{#1}}}
\newcommand{\cpreq}[3]{\co{#1}[#2](#3)}
\newcommand{\cpgenreq}{\cpreq{a}{\roles A}{k}}
\newcommand{\cpacc}[3]{#1[#2](#3)}
\newcommand{\cpgenacc}{\cpacc{a}{\role A}{k}}
\newcommand{\cpserv}[3]{!#1[#2](#3)}
\newcommand{\cpgenserv}{\cpserv{a}{\role A}{k}}
\newcommand{\cpsend}[4]{#1[#2]!#3\carr{#4}}
\newcommand{\cpgensend}{\cpsend {k}{\role A}{\role B}{e}}
\newcommand{\cpgensendv}{\cpsend {k}{\role A}{\role B}{v}}
\newcommand{\cprecv}[4]{#1[#2]?#3(#4)}
\newcommand{\cpgenrecv}{\cprecv {k}{\role B}{\role A}{x}}
\newcommand{\cpsel}[4]{#1[#2]!#3 \oplus #4}
\newcommand{\cpgensel}{\cpsel{k}{\role A}{\role B}{l}}
\newcommand{\cpgenselj}{\cpsel{k}{\role A}{\role B}{l_j}}
\newcommand{\cpbranch}[4]{#1[#2]?#3 \& #4}
\newcommand{\cpgenbranch}{\cpbranch{k}{\role B}{\role A}{\{ l_i : P_i \}_{i \in I}}}
\newcommand{\cpgenbranchQ}{\cpbranch{k}{\role B}{\role A}{\{ l_i : Q_i \}_{i \in I}}}
\newcommand{\cpgendels}{\cpsend {k}{\role A}{\role B}{k'[\role C]}}
\newcommand{\cpdelr}[4]{#1[#2]?#3(#4)}
\newcommand{\cpgendelr}{\cpdelr {k}{\role B}{\role A}{k'[\role C]}}
\newcommand{\cpgenrec}{\rec{X(\til x, \til k)}{Q}{P}}
\newcommand{\cpgencall}{\sccall{X}{\til e, \til k}}
\newcommand{\cpgencond}{\cond{e = e'}{P}{Q}}
\newcommand{\cpspar}{\ \; | \ \; }
\newcommand{\acspar}{\hspace{0.8mm} | \hspace{0.8mm} }
\newcommand{\seq}{\vdash}
\newcommand{\emb}[1]{\left\{\!\!\left[{#1}\right]\!\!\right\}}
\newcommand{\numeral}[1]{\ensuremath{\ulcorner{#1}\urcorner}}
\newcommand{\gencondmin}{\cond{\eqcom{\pid p}{\pid q}}{S(\pid p,\pids r,\lleft,C_1)}{S(\pid p,\pids r,\lright,C_2)}}
\newcommand{\delsel}[1]{(\!|{#1}|\!)}
\newcommand{\delselp}[2]{(\!|{#1},{#2}|\!)}
\newcommand{\pidb}[1]{\m{#1}^\bullet}
\begin{document}
\pagestyle{plain}



\title{A Core Model for Choreographic Programming}
\author{Lu\'\i s Cruz-Filipe}
\ead{lcf@imada.sdu.dk}
\author{Fabrizio Montesi}
\ead{fmontesi@imada.sdu.dk}

\address{University of Southern Denmark, Department of Mathematics and Computer Science, Campusvej 55, 5230 Odense M, 
Denmark}

\begin{abstract}
Choreographic Programming is a paradigm for developing concurrent programs that are deadlock-free
by construction,
\newtext
as a result of
\oldtext
programming communications declaratively and then synthesising 
process implementations automatically. Despite strong interest on 
choreographies, a foundational model that explains which computations can be performed with the hallmark
constructs of choreographies is still missing.

In this work, we introduce Core Choreographies (CC), a
model that includes only the core primitives of choreographic programming.
Every computable function can be implemented as a choreography in CC, from which we can synthesise a
process implementation where independent computations run in parallel.
We discuss the design of CC and argue that it constitutes a canonical model for choreographic programming.
%

\end{abstract}

\begin{keyword}
Choreography \sep Computability \sep Process Calculi
\end{keyword}

\maketitle

\section{Introduction}
\label{sec:intro}
Programming concurrent and distributed systems is hard, because it is challenging to predict how 
programs executed at 
the same time in different computers will interact. Empirical studies reveal two important lessons: (i) while 
programmers have clear intentions about the order in which communication actions should be performed, tools do not 
adequately support them in translating these wishes to code~\cite{LPSZ08}; (ii) combining different communication 
protocols in a single application is a major source of mistakes~\cite{LLLG16}.

The paradigm of Choreographic Programming~\cite{M13:phd} was introduced to address these problems.
In this paradigm, programmers declaratively write the communications that they wish to take place, as programs called 
\emph{choreographies}.
Choreographies are descriptions of concurrent systems that syntactically disallow writing
mismatched I/O actions, inspired by the ``Alice and Bob'' notation of security protocols~\cite{NS78}.
An EndPoint Projection (EPP) can then be used to synthesise implementations in process models, which faithfully realise 
the communications given in the choreography and are guaranteed to be deadlock-free by construction 
even in the presence of arbitrary protocol compositions~\cite{CHY12,CM13}.


So far, work on choreographic programming focused on features of practical value -- including web
services~\cite{CHY12}, multiparty sessions~\cite{CM13,chor:website}, modularity~\cite{MY13}, and runtime
adaptation~\cite{DGGLM17}.
The models proposed all come with differing domain-specific syntaxes, semantics and EPP definitions (e.g., for
channel mobility or runtime adaptation), and cannot be considered minimal.
Another problem, arguably a consequence of the former, is that choreographic programming is meant for
implementation, but we still know little of what can be computed with the code obtained
from choreographies (\emph{choreography projections}).
The expressivity of the aforementioned models is evaluated just by showing some examples.

In this paper, we propose a canonical model for choreographic programming, called Core Choreographies (CC).
CC includes only the core primitives that can be found in most choreography languages, restricted to
the minimal requirements to achieve the computational power of Turing machines.
In particular, local computation at processes is severely restricted, and therefore nontrivial computations must be
implemented by using communications.
Therefore, CC is both representative of the paradigm and simple enough to analyse from a theoretical perspective.
Our technical development is based on a natural notion of function implementation, and the proof of Turing completeness
yields an algorithm for constructing a choreography that implements any given computable function.
Since choreographies describe concurrent systems, it is also natural to ask how much parallelism choreographies
exhibit. CC helps us in formally defining parallelism in choreographies; we exemplify how to use this notion to 
reason about the concurrent implementation of functions.
%

\begin{figure}
  \centering
  \resizebox{\textwidth}{!}{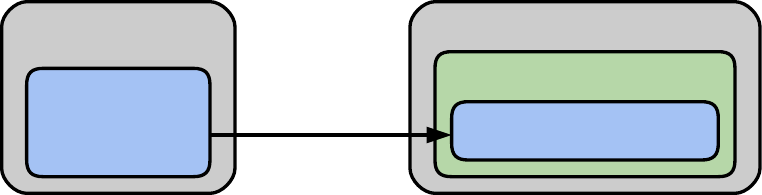}
  \label{fig:classes}
  \caption{Choreographic Programming}
\end{figure}

Yet, analysing the expressivity of choreographies is not enough.
What we are ultimately interested in is what can be computed with choreography projections, since those are
the terms that represent executable code.
However, the expressivity of choreographies does not translate directly to expressivity of projections, because EPP is
typically an incomplete procedure: it must guarantee deadlock-freedom, which in
previous models is obtained by complex requirements, e.g., type systems~\cite{CHY12,CM13}.
Therefore, only a subset of choreographies (projectable choreographies) can be used to synthesise process
implementations. The EPPs of such projectable choreographies form the set of choreography projections, which are
deadlock-free processes (see Figure~\ref{fig:classes}).

The main technical contribution of this paper is showing that the set of projectable choreographies in CC is still 
Turing complete. Therefore, by EPP, the set of corresponding choreography projections is also Turing complete, leading 
us to a characterisation of a Turing complete and deadlock-free fragment of a process calculus (which follows the same 
minimal design of CC). Furthermore, the parallel behaviour observed in CC choreographies for function implementations
translates directly to parallel execution of the projected processes.

More importantly, the practical consequence of our results is that CC is a
simple common setting for the study of foundational questions in choreographies.
This makes CC an appropriate foundational model for choreographic programming, akin to
$\lambda$-calculus for functional programming and $\pi$-calculus for mobile processes.
As an example of such foundational questions, we describe how the standard communication primitive
of label selection can be removed from CC without altering its computational power, yielding a truly
minimal choreography language wrt computation called Minimal Choreographies (MC).
However, doing so eliminates the clean separation between data and behaviour in message exchanges, which makes the 
resulting choreography hard to read.
\newtext
This result suggests that implementations of choreography languages -- 
like \cite{M13:phd}, \cite{DGGLM17}, and \cite{wscdl} -- may
adopt a simpler intermediate representation of choreographies that does not include label 
selections, by applying our elimination technique to the input choreography program 
(where programmers should be able to use selections, for readability).
A key advantage of having a simpler model without selections is that it bypasses the need for the 
standard notion of merging~\cite{CHY12}, which is typically one of the most complicated steps in 
EPP. We formally illustrate this point by showing that our EPP for MC enjoys an elegant definition.
\oldtext

\paragraph{Structure of the paper}
\S~\ref{sec:cc} defines Core Choreographies (CC) and its subcalculus of Minimal Choreographies (MC).
\S~\ref{sec:sp} introduces Stateful Processes (SP), our target process model, and its sublanguage of
Minimal Processes (MP), together with an EndPoint Projection (EPP) from CC to SP (and 
from MC to MP). We
also show that every unprojectable choreography in CC can be amended (transformed into a projectable one) by adding 
only label selections.
We prove that CC and its set of choreography projections are Turing complete in \S~\ref{sec:rec}.
In \S~\ref{sec:selection} we discuss label selections, and show that they can be encoded by
communications; this yields an amendment strategy for MC.
In \S~\ref{sec:implications}, we show that all the remaining primitives of CC are necessary to achieve
Turing completeness, and discuss the implications of our work for other choreography languages and process calculi -- 
in particular, we identify a Turing complete and deadlock-free fragment of value-passing CCS.
Related work and discussion are given in \S~\ref{sec:related}.

\newtext
\paragraph{Publication history}
This is an extended version of the material previously presented in~\cite{ourstuff}.
Noteworthy improvements include: detailed proofs of all results (in particular, 
Theorems~\ref{teo:sound1}, \ref{thm:delsel-wd}, \ref{lem:nocond-dec}, and \ref{lem:ifc-dec});
full definitions of our models (namely, the definitions of structural precongruence for CC and SP);
full definitions of EPP and the merge operator. Moreover, the encoding of selections into 
communications (\S~\ref{sec:selection}) and the encoding of CC into Channel Choreographies 
(\S~\ref{sec:translation}) are new results and have not been published previously.
\oldtext



\section{Core Choreographies and Minimal Choreographies}
\label{sec:cc}
We introduce Core Choreographies (CC), define function implementation and parallel execution of
choreographies, and prove some key properties of CC.

\subsection{Syntax of CC}
The syntax of CC is displayed in Figure~\ref{fig:cc_syntax},
where $C$ ranges over choreographies.
\begin{figure}
\begin{align*}
 C & ::= \eta; C \acspar \gencond 
\acspar  \genrec \acspar \gencall \acspar \nil
\\
\eta & ::= \gencom \acspar \gensel
\qquad
e ::= \zero \  | \ \pcont \ | \ \suc \pcont
\qquad
l ::= \lleft \ |\  \lright
\end{align*}
\caption{Core Choreographies, Syntax.}
\label{fig:cc_syntax}
\end{figure}

We use two (infinite) disjoint sets of names:
processes ($\pid p,\pid q,\ldots$) and procedures
($X,\ldots$).
Processes run in parallel, and each process stores a
value -- a string of the form $\s\cdots\s\cdot\zero$ -- in a local memory cell.
Each process can access its own value, but it cannot read the contents of another process (no data
sharing).

Term $\eta;C$ is an interaction between two processes, read
``the system may execute $\eta$ and proceed as $C$''.
An interaction $\eta$ is either a value communication -- $\gencom$ -- or a label
selection -- $\gensel$.
In $\gencom$, $\pid p$ sends its local
evaluation of expression $e$ to $\pid q$, which stores the
received value.
Expressions are either the constant $\zero$, the
value of the sender (written as $\pcont$), or an
application of the successor operator to $\pcont$.
In $\gensel$, $\pid p$ communicates label $l$ (either $\lleft$ or $\lright$) to
$\pid q$.
\newtext
Labels are used to communicate decisions about control flow, rather than data (see 
\S~\ref{sec:selections}).
\oldtext
In a conditional $\gencond$, $\pid q$ sends its value to $\pid p$,
which checks if the received value is
equal to its own; the choreography proceeds as $C_1$, if
that is the case, or as $C_2$, otherwise.
In value communications,
selections and conditionals, the two interacting processes must be different (no
self-communications).
Definitions and invocations of recursive
procedures are standard.
The term $\nil$, also called \emph{exit point}, is the terminated
choreography.

\newtext
In the remainder, we write $\pn(C)$ for the set of all process names that appear in $C$.

\oldtext

\subsection{Semantics of CC}
The semantics of CC uses reductions of the form
$C,\sigma \to C',\sigma'$.
The total state function $\sigma$ maps each process
name to its value.
We use $v$, $w$, $\ldots$ to range over values:
$v,w,\ldots ::= \zero \mid \suc v$.
Values are isomorphic to natural numbers via
$\numeral{n}=\s^n\cdot\zero$.
The reduction relation $\to$ is defined by the rules given in Figure~\ref{fig:cc_semantics}.
%
\begin{figure}
\begin{eqnarray*}
&\infer[\rname{C}{Com}]
{
	\gencom;C,\sigma
	\to
	C, \sigma[\pid q \mapsto v]
}
{
	v = e[\sigma(\pid p)/\pcont]
}
\qquad
\infer[\rname{C}{Sel}]
{
	\gensel;C, \sigma \to C, \sigma
}
{}
\\[1ex]
&\infer[\rname{C}{Cond}]
{
	\gencond, \sigma  \to    C_i, \sigma
}
{
	i = 1 \ \text{if } \sigma(\pid p) = \sigma(\pid q),\ 
	i = 2 \ \text{o.w.}
}
\\[1ex]
&
\infer[\rname{C}{Ctx}]
{
	\genrec, \sigma \ \to \ 
	\rec{X}{C_2}{C'_1}, \sigma'
}
{
	C_1, \sigma \ \to \   C'_1, \sigma'
}
\\[1ex]
&\infer[\rname{C}{Struct}]
{
	C_1, \sigma \to  C'_1, \sigma'
}
{
	C_1 \precongr C_2
	& C_2, \sigma \to   C'_2, \sigma'
	& C'_2  \precongr C'_1
}
\end{eqnarray*}
\caption{Core Choreographies, Semantics.}
\label{fig:cc_semantics}
\end{figure}

Rule~\rname{C}{Com} models a value communication $\gencom$.
In the premise, we write $e[\sigma(\pid p)/\pcont]$ for the result of replacing $\pcont$ with 
$\sigma(\pid p)$ in $e$. In the reductum, $\sigma[\pid q \mapsto v]$ denotes the updated state function $\sigma$ where 
$\pid q$ now maps to $v$.
\newtext
Rule~\rname{C}{Sel}
states that selections are no-ops from the choreography's point of view (they do not 
change the state) -- see \S~\ref{sec:selections} for a more detailed explanation of why selections 
are useful.
\oldtext
Rule~\rname{C}{Ctx} is standard.

%
\begin{figure}
\begin{eqnarray*}
&
\infer[\rname{C}{Eta-Eta}]
{
	\eta;\eta'\ \equiv \ \eta';\eta
}
{
	\pn(\eta) \cap \pn(\eta') = \emptyset
}
\qquad
\infer[\rname{C}{ProcEnd}]
{
	\rec{X}{C}{\nil} \ \, \precongr \ \, \nil
}
{}
\\[1ex]
&
\infer[\rname{C}{Eta-Cond}]
{
	\cond{\eqcom{\pid p}{\pid q}}{(\eta;C_1)}{(\eta;C_2)}
	\quad \equiv\quad 
	\eta;\gencond
}
{
	\{ \pid p, \pid q\} \cap \pn(\eta) = \emptyset
}
\\[1ex]
&
\infer[\rname{C}{Eta-Rec}]
{
	\rec{X}{C_2}{(\eta;C_1)}
	\quad \equiv\quad 
	\eta;\genrec
}
{
	\pn(C_i) \cap \pn(\eta) = \emptyset
}
\\[1ex]
&\infer[\rname{C}{Cond-Cond}]
{
	\begin{array}{c}
	\cond{\eqcom{\pid p}{\pid q}}{
 		(\cond{\eqcom{\pid r}{\pid s}}
 		{C_1}{C_2})
	}{
 		(\cond{\eqcom{\pid r}{\pid s}}
 		{C'_1}{C'_2})
	}
	\\
	\ \equiv \
	\\
	\cond{\eqcom{\pid r}{\pid s}}{
		(\cond{\eqcom{\pid p}{\pid q}}
		{C_1}{C'_1})
	}{
		(\cond{\eqcom{\pid p}{\pid q}}
		{C_2}{C'_2})
	}
	\end{array}
}
{
	\{ \pid p, \pid q \} \cap \{ \pid r, \pid s \} = \emptyset
}
\\[1ex]
&
\infer[\rname{C}{Unfold}] 
{
	\rec{X}{C_2}{C_1[\gencall]}
\quad \precongr\quad
	\rec{X}{C_2}{C_1[C_2]}
}
{}
\end{eqnarray*}
\caption{Core Choreographies, Structural precongruence $\precongr$.}
\label{fig:cc_precongr}
\end{figure}

\newtext
Rule~\rname{C}{Struct} closes the reduction relation under the \emph{structural precongruence} 
$\precongr$, which is the smallest precongruence satisfying the rules
in Figure~\ref{fig:cc_precongr}.
%
\oldtext
We write $C \equiv C'$ for $C \precongr C'$ and $C' \precongr C$.

\newtext
Structural precongruence allows non-interfering actions to be executed in 
any order, modelling concurrent process execution. This is achieved by rules \rname{C}{Eta-Eta}, 
\rname{C}{Eta-Cond} and \rname{C}{Cond-Cond}, which swap two terms describing actions performed by 
independent processes.
They observe the same conditions as in previous choreography models, where these rules are used for 
the swapping relation $\swapC$, e.g.,
in~\cite{CM13}. For example, rule~\rname{C}{Eta-Eta} allows swapping of two interactions $\eta$ and 
$\eta'$ -- $\eta;\eta' \equiv \eta';\eta$ -- whenever the processes that enact 
them are distinct -- $\pn(\eta) \cap \pn(\eta') = \emptyset$.

Rules~\rname{C}{ProcEnd} and \rname{C}{Unfold} are standard, and respectively deal with garbage 
collection of procedure definitions and unfolding of recursive procedures.

In rule~\rname{C}{Unfold}, we use the standard notion of context, denoted $C[]$. We 
borrow its definition (adapted to our language) from~\cite{SW01}.
A context $C[]$ is obtained when the hole $\bullet$ (a new reserved term) replaces one occurrence of 
$\nil$ in a choreography term given by the grammar in Figure~\ref{fig:cc_syntax}.
We write $C[C']$ for the choreography obtained by replacing the hole $\bullet$ in $C[]$ with $C'$.
Therefore, in rule~\rname{C}{Unfold}, we are identifying a specific subterm $X$ in the choreography 
on the left ($C_1[X]$) and then replacing it with the body of the recursive procedure on the right 
($C_1[C_2]$).
\oldtext

\subsection{Label Selection and Minimal Choreographies}
\label{sec:selections}

To the reader unfamiliar with choreographies, the role of selection -- $\gensel$ -- may be unclear 
at this point.
In existing choreography calculi, they are crucial in making choreographies projectable, as we illustrate with
an example.

\begin{example}
\label{ex:selection}
Consider the following choreography.
\[
C = \cond{\eqcom{\pid p}{\pid q}}{
	(\com{\pid p.\pcont}{\pid r};\nil)
}{
	(\com{\pid r.\pcont}{\pid p}; \nil)
  }
\]
Here, $\pid p$ checks whether its value is the same as that of $\pid q$. If so, $\pid p$
communicates its value to $\pid r$; otherwise, it is $\pid r$ that communicates its value to $\pid p$.
Recall that processes are assumed to run independently and share no data. Here, $\pid p$ is the only
process that knows which branch of the conditional should be executed.
However, $\pid r$ also needs to know this information, since it must behave differently.
Intuitively, we have a problem because we are asking
$\pid r$ to act differently based on a decision made by another process, $\pid p$, and there is no propagation of
this decision from $\pid p$ to $\pid r$ (either directly or indirectly, through other processes).
We can easily fix the example by adding selections:
\[
C' = \cond{\eqcom{\pid p}{\pid q}}{
  (\sel{\pid p}{\pid r}{\lleft}; \com{\pid p.\pcont}{\pid r};\nil)
}{
  (\sel{\pid p}{\pid r}{\lright}; \com{\pid r.\pcont}{\pid p}; \nil)
  }
\,.\]
Now, $\pid p$ tells $\pid r$ about its choice by sending a different label.
This intuition will be formalised in our definition of
EndPoint Projection in~\S~\ref{sec:epp}.
The choreography $C$ (without label selections) is not projectable, whereas $C'$ is.
\qed
\end{example}

\newtext
The example illustrates the role of label selections in choreography languages:
they are not meant to carry data, but rather to select 
one from several possible different behaviours offered by the process receiving the label.
Thus, they model interface selections, as in the invocation of a particular method in 
object-oriented programming, or an operation in service-oriented computing.
\oldtext

One of the main results in this paper is that
the same effect can actually be achieved in a language without label selections.
\newtext
However, this hides the way in which control flow is propagated from one process to another.
Furthermore, label selections are a typical ingredient of choreography calculi, and it makes sense 
to include them in a core model. This is also reflected in typical implementations of 
endpoint languages, where label selections and value communications are implemented as different 
primitives. See \S~\ref{sec:related} for a more thorough discussion.
\oldtext

In this work, we also consider the fragment of CC without label selections,
which we call Minimal Choreographies (MC).
The syntax of MC is the same as that of CC, except that the action $\gensel$ is disallowed; the semantics of
MC is the same as that of CC, disregarding the rules that pertain to terms outside the language.

\subsection{Properties}

In the remainder of this section we discuss some properties of Core Choreographies.
Unless otherwise stated, those properties also hold for Minimal Choreographies.

CC enjoys the usual deadlock-freedom-by-design property of
choreographies.
\begin{theorem}[Deadlock-freedom by design]
  \label{thm:df-by-design}
  If $C$ is a choreography, then either:
  \begin{itemize}
  \item $C \precongr \nil$ ($C$ has terminated);
  \item or, for all $\sigma$, $C,\sigma \to C',\sigma'$ for some $C'$ and $\sigma'$ ($C$ can reduce).
  \end{itemize}
\end{theorem}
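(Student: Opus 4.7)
The plan is to proceed by induction on the structure of $C$, using the structural precongruence $\precongr$ (via rule \rname{C}{Struct}) to absorb administrative steps such as unfolding recursion and propagating $\nil$. The base case $C = \nil$ is immediate. For the action prefixes, the rules \rname{C}{Com} and \rname{C}{Sel} fire unconditionally: the value $v = e[\sigma(\pid p)/\pcont]$ is always defined (since $e$ only uses $\zero$, $\pcont$, and $\suc\pcont$, and $\sigma$ is total), and selections carry no data. For a conditional $\gencond$, rule \rname{C}{Cond} applies in any state $\sigma$, since equality of values (strings of the form $\s^n\cdot\zero$) is decidable.

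The interesting case is $C = \genrec$. I do an auxiliary induction on $C_1$. If $C_1 \precongr \nil$, then by congruence $\genrec \precongr \rec{X}{C_2}{\nil}$ and one application of \rname{C}{ProcEnd} gives $\genrec \precongr \nil$. If $C_1$ begins with an interaction $\eta$ or is a conditional, the sub-induction hypothesis yields $C_1,\sigma \to C_1',\sigma'$, and \rname{C}{Ctx} lifts this to $\genrec,\sigma \to \rec{X}{C_2}{C_1'},\sigma'$. The remaining sub-case is $C_1 = \gencall$, handled by first applying \rname{C}{Unfold} to rewrite $\genrec \precongr \rec{X}{C_2}{C_2}$ and then continuing on $C_2$; rule \rname{C}{Struct} ties the reduction of $\rec{X}{C_2}{C_2}$ back to a reduction of the original $\genrec$.

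The two sub-cases of the main induction that deserve some care are the prefix rules and the conditional: in all three one must check that the rule's premise is satisfied regardless of $\sigma$, which is exactly what the restricted expression grammar and the simple shape of the state buy us (no side-effecting evaluation, no partial arithmetic, no blocking tests).

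The main obstacle is the recursion case: naive structural induction does not decrease across \rname{C}{Unfold}, so a literal ``induct on $C$'' argument would loop on a body like $C_2 = X$. The clean fix is to assume (or fold into the statement of ``choreography'') the standard guardedness condition: every occurrence of $\gencall$ reachable by unfolding lies under some action prefix or conditional. Under that hypothesis the auxiliary induction on $C_1$ terminates, because each unfolding step is immediately followed by a structurally smaller case analysis whose leading constructor is not a bare procedure call. I would state this as an invariant of the MC fragment we consider and invoke it only in the recursion case.
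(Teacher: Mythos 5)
Your argument is essentially the one the paper intends: the paper offers no written proof, stating only that the theorem ``follows directly from the definition of our semantics,'' so the intended reasoning is exactly your structural induction --- prefixes and conditionals always fire because expression evaluation is total and value equality is decidable, and recursion is discharged by unfolding under $\precongr$. Your treatment is in fact more careful than the paper's in one respect that matters: you correctly observe that the recursion case is not handled by naive structural induction, and that some guardedness invariant is required. This is not pedantry --- for the literal syntax of Figure~\ref{fig:ac_syntax} the theorem is false without it: $\rec{X}{\gencall}{\gencall}$ is not $\precongr\nil$ (rule $\rname{C}{ProcEnd}$ needs a $\nil$ body) and has no reduction (rule $\rname{C}{Unfold}$ rewrites it to itself, and no other rule applies). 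The paper silently assumes well-formed recursive definitions; making that assumption explicit, as you do, is the right call.

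Two small gaps remain. First, your main induction omits the case of a top-level, unbound procedure call $C=\gencall$, which is likewise stuck and not $\precongr\nil$; you need the companion assumption that choreographies are closed (every call occurs under a binding $\m{def}$), which should be folded into the same well-formedness invariant as guardedness. Second, in the auxiliary induction inside the recursion case you list the sub-cases ``$\precongr\nil$, prefix or conditional, bare call'' but not a nested $\rec{Y}{D_2}{D_1}$; this is harmless (the outer induction hypothesis applies to the structurally smaller $C_1$ and rule $\rname{C}{Ctx}$ lifts the step), but it should be mentioned for completeness. With those two clauses added, the proof is complete.
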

\begin{proof}
  Assume that $C\not\precongr\nil$ (otherwise the result is trivial).
  The proof is by structural induction on $C$.
  If $C$ is not of the form $\genrec$, then the thesis is a consequence of the semantics, since there is always an applicable reduction rule.
  If $C$ is of the form $\genrec$, then also $C_1\not\preceq\nil$, and the thesis follows by 
induction hypothesis applied to $C_1$.
\end{proof}

Using this theorem,
in \S~\ref{sec:epp} we prove that the process implementations obtained by projecting choreographies is deadlock-free.

The semantics of CC suggests a natural definition of computation.
We write $\to^*$ for the transitive closure of $\to$ and $C,\sigma \not\to^\ast \nil$ for
$C,\sigma \not\to^* \nil,\sigma'$ for any~$\sigma'$.
\begin{definition}
  \label{defn:implementation}
  A choreography $C$ \emph{implements} a function
  $f:\NN^n\to\NN$ with input processes $\pid p_1,\ldots,\pid p_n$ and
  output process $\pid q$ if, for all $x_1,\ldots,x_n\in\NN$ and for
  every state $\sigma$ s.t. $\sigma(\pid p_i)=\numeral{x_i}$:
  \begin{itemize}
  \item if $f(\til x)$ is defined, then
    $C,\sigma\to^\ast\nil,\sigma'$ where
    $\sigma'(\pid q)=\numeral{f(\til x)}$;
  \item if $f(\til x)$ is undefined, then $C,\sigma\not\to^\ast\nil$.
  \end{itemize}
\end{definition}
By Theorem~\ref{thm:df-by-design}, in the second case $C,\sigma$ must
reduce infinitely (diverge).

In later sections, we need to characterize choreographies that are equivalent wrt a set of processes.
We use the state function $\sigma$ for this purpose.
\begin{definition}[Computational equivalence]
  Two states $\sigma_1,\sigma_2$ are \emph{equivalent} wrt a set of process names $\pids p$, written
  $\sigma_1\equiv_{\pids p}\sigma_2$, if $\sigma_1(\pid p)=\sigma_2(\pid p)$ for every $\pid p\in\pids p$.

  Two choreographies $C_1$ and $C_2$ are \emph{equivalent} wrt a set of process names $\pids p$ if: whenever
  $\sigma_1\equiv_{\pids p}\sigma_2$, if $C_1,\sigma_1\to^\ast\nil,\sigma'_1$ then $C_2,\sigma_2\to^\ast\nil,\sigma'_2$
  with $\sigma'_1\equiv_{\pids p}\sigma'_2$, and conversely.
\end{definition}

Throughout this paper, we focus on choreographies with only one exit point (a single occurence of $\nil$).
When $C$ has a single exit point, we write $C \gseq C'$ for the choreography obtained by replacing
$\nil$ in $C$ with $C'$.
(Requiring $C$ to have a single exit point makes this construction linear in the sizes of $C$ and $C'$, and
simplifies its theoretical analysis.)
This does not add expressivity to CC, but it allows for the usage of
macros (as in the examples below).
Then, $C\gseq C'$ behaves as a ``sequential composition'' of $C$ and $C'$, as induction over $C$ shows.
\begin{lemma}
  \label{lem:seqcomp}
  Let $C$ have one exit point, $C'$ be a choreography, $\sigma,\sigma',\sigma''$ be states.
  \begin{enumerate}
  \item If $C,\sigma\to^\ast\nil,\sigma'$ and
    $C',\sigma'\to^\ast\nil,\sigma''$, then
    $C\gseq C',\sigma\to^\ast\nil,\sigma''$.
  \item If $C,\sigma\not\to^\ast\nil$, then
    $C\gseq C',\sigma\not\to^\ast\nil$.
  \item If $C,\sigma\to^\ast\nil,\sigma'$ and
    $C',\sigma'\not\to^\ast\nil$, then
    $C\gseq C',\sigma\not\to^\ast\nil$.
  \end{enumerate}
\end{lemma}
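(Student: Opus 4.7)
The plan is to prove all three parts together by establishing a tight correspondence between reductions of $C \gseq C'$ and reductions of $C$ followed by reductions of $C'$. The guiding intuition is that the exit point $\nil$ never appears as an active redex in any reduction rule from Figure~\ref{fig:actor_chor_semantics}, so grafting $C'$ onto $C$ in place of its unique $\nil$ neither enables nor disables any reduction step performed ``inside $C$''; symmetrically, any step that fires ``inside $C'$'' must wait until $C$ has been fully consumed, up to harmless reordering of independent actions via structural precongruence.

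I would first prove a forward simulation lemma by induction on $\to$: whenever $C,\sigma \to D,\sigma_1$ (with $C$ having a single exit point), then $D$ also has a single exit point and $C \gseq C',\sigma \to D \gseq C',\sigma_1$. The only non-routine cases are \rname{C}{Struct} and \rname{C}{Ctx}, which reduce to checking that structural precongruence commutes with the substitution, i.e., that $C_1 \precongr C_2$ implies $C_1 \gseq C' \precongr C_2 \gseq C'$; this is a straightforward induction on the derivation of $\precongr$, since each axiom in Figure~\ref{fig:actor_chor_precongr} is oblivious to what lies beyond the topmost $\nil$ (in particular, \rname{C}{ProcEnd} is applied to the body of a procedure, which by hypothesis contains no free exit point of the outer term). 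Iterating this simulation turns $C,\sigma \to^\ast \nil,\sigma'$ into $C \gseq C',\sigma \to^\ast \nil \gseq C',\sigma' = C',\sigma'$, which, concatenated with the assumed reduction of $C'$, proves Part~1.

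For Parts 2 and 3 I would prove the converse direction: if $C \gseq C',\sigma \to^\ast \nil,\sigma''$, then there exists an intermediate $\sigma'$ such that $C,\sigma \to^\ast \nil,\sigma'$ and $C',\sigma' \to^\ast \nil,\sigma''$. The argument proceeds by induction on the length of the reduction, classifying each step according to whether the redex lies in the ``$C$-part'' or the ``$C'$-part'' of the residual. When $C$ has not yet terminated, the only way a step from $C'$ can fire is after a use of \rname{C}{Eta-Eta} that swapped an action of $C'$ past a pending action of $C$; since the swap requires disjoint process names, this step commutes with all remaining $C$-actions and can be postponed to the $C'$-phase without changing the final state or reachability of $\nil$. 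Collecting all postponements yields the claimed decomposition. Part~2 then follows by contraposition, and Part~3 by applying the decomposition to obtain a terminating reduction of $C'$ from $\sigma'$, contradicting the hypothesis.

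The main obstacle is exactly this postponement step for the converse simulation: one must show that arbitrary interleavings of $C'$-actions inside the $C$-prefix enabled by structural precongruence can always be brought back to a canonical ``$C$-then-$C'$'' form. This is essentially a local confluence argument at the level of $\precongr$, leveraging the side condition $\pn(\eta) \cap \pn(\eta') = \emptyset$ in \rname{C}{Eta-Eta} (and its analogues in \rname{C}{Eta-Cond}, \rname{C}{Eta-Rec}, \rname{C}{Cond-Cond}) to guarantee that swapped actions commute not only syntactically but also semantically with respect to the state update rules of Figure~\ref{fig:actor_chor_semantics}. Once this commutation is in hand, the three statements of the lemma are immediate consequences of the forward and backward simulations.
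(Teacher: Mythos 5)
The paper itself dispatches this lemma with a single phrase (``as induction over $C$ shows''), so there is no detailed argument to measure you against; your plan, which replaces structural induction on $C$ by induction on the length of reduction sequences together with a forward simulation and a backward postponement argument, is a legitimate and arguably more informative route, and it correctly identifies the real difficulty: commuting $C'$-actions that have been pulled forward by \rname{C}{Eta-Eta} (and its variants) back past the remaining $C$-actions, using the disjointness side conditions to ensure that the corresponding state updates commute.

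Two points need repair, however. First, your forward simulation asserts that a single reduction step preserves the single-exit-point property, which is false as stated: rule \rname{C}{Cond} discards the branch not containing the $\nil$ (possibly leaving zero exit points in the residual), and \rname{C}{Unfold} can duplicate a $\nil$ sitting inside a procedure body. The simulation still goes through, but you must either generalise $\gseq$ to replace \emph{every} occurrence of $\nil$, or track the particular exit point that the given terminating run actually reaches; otherwise $D \gseq C'$ is not even well defined for all residuals $D$. Second, and more seriously, your proof of Part~3 derives from a hypothetical run $C\gseq C',\sigma\to^\ast\nil,\sigma''$ a decomposition $C,\sigma\to^\ast\nil,\tau$ and $C',\tau\to^\ast\nil,\sigma''$, and then claims a contradiction with the hypothesis $C',\sigma'\not\to^\ast\nil$. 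This only works if $\tau=\sigma'$, i.e.\ if the terminal state of $C$ from $\sigma$ is unique. That confluence property does hold in MC (the paper asserts determinism of results in its discussion of nondeterminism), and it follows from essentially the same commutation argument you already need for postponement, but your proof uses it silently; without it, Part~3 as stated would in fact be inconsistent with Part~1 whenever $C$ could terminate in two different states. You should state and prove, or at least explicitly invoke, uniqueness of terminal states before closing Part~3.
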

\begin{proof}
  Straightforward by structural induction on $C$.
\end{proof}

Structural precongruence gives $C\gseq C'$ fully parallel behaviour in some cases.
Intuitively, $C_1$ and $C_2$ run in parallel in $C_1\gseq C_2$ if
their reduction paths to $\nil$ can be interleaved in any possible
way. Below, we write $C \lmto{\til\sigma} \nil$ for $C,\sigma_1\to C_2,\sigma_2 \to \cdots \to \nil,\sigma_n$,
where $\til\sigma=\sigma_1,\ldots,\sigma_n$, and $\wtil{\sigma(\pid p)}$ for the sequence
$\sigma_1(\pid p),\ldots,\sigma_n(\pid p)$.

\begin{definition}
  Let $\pids p$ and $\pids q$ be disjoint.
  Then, $\til\sigma$ is an \emph{interleaving} of $\wtil{\sigma_1}$ and $\wtil{\sigma_2}$ wrt $\pids p$ and $\pids q$
  if $\til\sigma$ 
  contains two subsequences
  $\wtil{\sigma'_1}$ and $\wtil{\sigma'_2}$ such that:
  \begin{itemize}
  \item $\wtil{\sigma'_2} = \wtil\sigma \setminus \wtil{\sigma'_1}$;
  \item $\wtil{\sigma'_1(\pid p)}=\wtil{\sigma_1(\pid p)}$ for all
    $\pid p\in\pids p$, and
    $\wtil{\sigma'_2(\pid q)}=\wtil{\sigma_2(\pid q)}$ for all
    $\pid q\in\pids q$;
  \item $\wtil{\sigma(\pid r)}$ is a constant sequence for all
    $\pid r\not\in\pids p\cup\pids q$.
  \end{itemize}
\end{definition}

\begin{definition}[Parallel Run]
\label{defn:parrun}
  Let $C_1$ and $C_2$ be choreographies such that
  $\pn(C_1)\cap\pn(C_2)=\emptyset$ and $C_1$ has only one exit point.
  We say that $C_1$ and $C_2$ \emph{run in parallel} in $C_1\gseq C_2$
  if:
  whenever $C_i\lmto{\til{\sigma_i}}\nil$, then
  $C_1\gseq C_2\lmto{\til\sigma}\nil$ for every interleaving
  $\til\sigma$ of $\til{\sigma_1}$ and $\til{\sigma_2}$ wrt $\pn(C_1)$ and $\pn(C_2)$.
\end{definition}

\begin{theorem}[Parallelisation]
  \label{thm:par}
  Let $C_1$ and $C_2$ be choreographies such that
  $\pn(C_1)\cap\pn(C_2)=\emptyset$ and $C_1$ has only one exit point.
  Then $C_1$ and $C_2$ run in parallel in $C_1\gseq C_2$.
\end{theorem}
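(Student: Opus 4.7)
The plan is to induct on the length of the interleaved sequence $\til\sigma$, simulating each step by a corresponding reduction of $C_1 \gseq C_2$ via one of two auxiliary lemmas. The first, a \emph{lifting} lemma, handles steps that originate in $\til{\sigma_1}$: if $C_1$ has a single exit point and $C_1,\sigma\to C_1',\sigma'$, then $C_1'$ still has a single exit point and $C_1 \gseq C_2,\sigma \to C_1' \gseq C_2,\sigma'$. This follows by a routine induction on the derivation of the step, noting that each reduction rule (including $\rname{C}{Struct}$ and $\rname{C}{Ctx}$) and each rule of $\precongr$ has a side condition depending only on process names, which is preserved under the substitution $\nil \mapsto C_2$.

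The second, a \emph{commutation} lemma, handles steps that originate in $\til{\sigma_2}$: if $C_2,\sigma\to C_2',\sigma'$, then $C_1 \gseq C_2,\sigma \to C_1 \gseq C_2',\sigma'$. Its proof is by induction on $C_1$. The base case $C_1 = \nil$ is immediate, since $\nil \gseq C_2 = C_2$. In the inductive cases, the strategy is to bring the initiating action of $C_2$ to the head of $C_1 \gseq C_2$ via $\precongr$ and then apply rule $\rname{C}{Struct}$. When $C_1 = \eta;C_1''$, the inductive hypothesis exposes the $C_2$-action at the head of $C_1'' \gseq C_2$, after which $\rname{C}{Eta-Eta}$ swaps $\eta$ past it, using $\pn(\eta) \subseteq \pn(C_1)$ disjoint from $\pn(C_2)$ to discharge the side condition. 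When $C_1 = \rec{X}{D}{C_1''}$, rule $\rname{C}{Ctx}$ reduces the problem directly to the inductive hypothesis on $C_1''$. When $C_1$ begins with a conditional, rules $\rname{C}{Eta-Cond}$ and $\rname{C}{Cond-Cond}$ play the analogous role of $\rname{C}{Eta-Eta}$, using $\rname{C}{Unfold}$ if needed to expose a recursive invocation.

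With the two lemmas in hand, the main theorem follows by induction on $|\til\sigma|$: at each step, dispatch to the lifting lemma if the next state belongs to the $\wtil{\sigma'_1}$ portion of the partition, and to the commutation lemma otherwise. The lifting lemma ensures that every intermediate residual of $C_1$ retains its single exit point, so the commutation lemma remains applicable throughout; the constant-state condition on processes outside $\pn(C_1)\cup\pn(C_2)$ is preserved because each simulated reduction touches only processes on its own side. I expect the main obstacle to be the conditional case of the commutation lemma: when $C_1$ begins with $\gencond$ and its unique exit point lies in only one branch, the $C_2$-substitution happens asymmetrically across branches, so combining $\rname{C}{Eta-Cond}$ and $\rname{C}{Cond-Cond}$ with the inductive hypothesis on the exit-point-containing branch requires care to realize the $C_2$-step without disturbing the other branch.
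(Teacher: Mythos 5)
Your overall strategy coincides with the paper's: the paper's entire argument for this theorem is the single sentence ``[t]his result is proved by induction over $C_1$'', and your commutation lemma --- proved by induction over $C_1$, using \rname{C}{Eta-Eta}, \rname{C}{Eta-Cond} and \rname{C}{Cond-Cond} to hoist a $C_2$-redex to the head of $C_1\gseq C_2$ --- is precisely the content that sentence elides. The outer induction on the interleaving and the lifting lemma are routine scaffolding around it, and the lifting lemma does go through (with the mild caveat that one should track the residual of $C_1$ up to the substitution of $C_2$ for its exit point, rather than insist that the residual literally has a single $\nil$, since \rname{C}{Unfold} can duplicate or drop exit points).

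The one substantive issue is the case you flag yourself, and it is worse than ``requires care'': the commutation lemma is false there. Take $C_1=\cond{\eqcom{\pid p}{\pid q}}{(\com{\pid p.\zero}{\pid q};\nil)}{C_b}$ where $C_b$ is a $\nil$-free loop on $\pid p,\pid q$, and $C_2=\com{\pid r.\zero}{\pid s};\nil$. Then $C_1\gseq C_2$ grafts $C_2$ into the $\m{then}$ branch only. Rule \rname{C}{Eta-Cond} hoists an action out of a conditional only when that same action heads \emph{both} branches, and \rname{C}{Cond-Cond} only permutes balanced conditionals; since $C_b$ contains no action of $C_2$, no $\precongr$-rearrangement can expose a $C_2$-redex before the conditional fires, and there is no reduction rule that acts inside a branch. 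Hence every reduction sequence of $C_1\gseq C_2$ must resolve the conditional first, and the interleavings that schedule a $C_2$-step before it are not realisable. This is not a defect you introduced --- the statement as literally given in Definition~\ref{defn:parrun} has the same problem, and the paper's one-line proof does not confront it --- but your plan cannot be completed as written. Closing it requires either restricting attention to $C_1$ whose unique exit point is not guarded by a one-sided conditional (which covers all the encodings used later in the paper, whose conditionals are balanced by construction or amended with selections in both branches), or weakening ``every interleaving'' to those that respect the causal dependency of $C_2$ on the conditionals of $C_1$ guarding its insertion point.
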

\begin{proof}
  The result follows directly by induction over $C_1$.
\end{proof}

Definition~\ref{defn:parrun} and Theorem~\ref{thm:par} straightforwardly generalise to an arbitrary number of processes.
We provide an example of such parallel behaviour in Theorem~\ref{thm:par-compute}.

\subsection{Examples}

We present examples of choreographies in CC, writing them as macros (syntax shortcuts).
We use the notation $\macro{m}( \textit{params} ) \defeq C$, where $\macro{m}$ is the name of the macro,
$\textit{params}$ its parameters, and $C$ its body.

\begin{example}
\label{ex:inc_add}
The macro $\macro{inc}(\pid p, \pid t)$ increments the value of $\pid p$ using an auxiliary process $\pid t$.
\[
\macro{inc}(\pid p, \pid t) \quad \defeq\quad \com{\pid p.\pcont}{\pid t};\ \com{\pid t.(\suc \pcont)}{\pid p};\ \nil
\]

Using $\macro{inc}$, we write a macro $\macro{add}(\pid p, \pid q, \pid r, \pid t_1, \pid t_2)$ that adds the
values of $\pid p$ and $\pid q$ and stores the result in $\pid p$, using auxiliary processes $\pid r$, $\pid t_1$
and $\pid t_2$.
We follow the intuition as in low-level abstract register machines.
First, $\pid t_1$ sets the value of $\pid r$ to zero, and then calls procedure $X$, which increments the value of
$\pid p$ as many times as the value in $\pid q$.
In the body of $X$, $\pid r$ checks whether its value is the same as $\pid q$'s.
If so, it informs the other processes that the recursion will terminate (selection of $\lleft$); otherwise, it asks
them to do another step (selection of $\lright$).
In each step, the values of $\pid p$ and $\pid r$ are incremented using $\pid t_1$ and $\pid t_2$
as auxiliary processes.
The compositional usage of $\macro{inc}$ is allowed, as it has exactly one exit point.
\begin{align*}
\lefteqn{\macro{add}(\pid p, \pid q, \pid r, \pid t_1, \pid t_2) \defeq}\qquad\\
&\m{def} X =
  \m{if}\,{\eqcom{\pid r}{\pid q}}\,
  \m{then}\,
    \sel{\pid r}{\pid p}{\lleft};
    \sel{\pid r}{\pid q}{\lleft};
    \sel{\pid r}{\pid t_1}{\lleft};
    \sel{\pid r}{\pid t_2}{\lleft};
    \nil
  \\
  &\hspace{3.5em}
    \m{else}\,
    \sel{\pid r}{\pid p}{\lright};
    \sel{\pid r}{\pid q}{\lright};
    \sel{\pid r}{\pid t_1}{\lright};
    \sel{\pid r}{\pid t_2}{\lright};
    \macro{inc}(\pid p,\pid t_1)\gseq
    \macro{inc}(\pid r,\pid t_2)\gseq
    \call X\\
  &\m{in}\,\com{\pid t_1.\zero}{\pid r}; \call X
\end{align*}
By Theorem~\ref{thm:par}, the calls to $\macro{inc}(\pid p,\pid{t_1})$
and $\macro{inc}(\pid r,\pid{t_2})$ can be executed in parallel.
Indeed, applying rule $\rname{C}{Eta-Eta}$ for $\precongr$ repeatedly we can check that:
\begin{multline*}
\underbrace{\com{\pid p.\pcont}{\pid t_1};\ \com{\pid t_1.(\suc \pcont)}{\pid p};}_{\mbox{\footnotesize expansion of $\macro{inc}(\pid p,\pid t_1)$}}
\underbrace{\com{\pid r.\pcont}{\pid t_2};\ \com{\pid t_2.(\suc \pcont)}{\pid r};}_{\mbox{\footnotesize expansion of $\macro{inc}(\pid r,\pid t_2)$}}
X
\\ \precongr \
\underbrace{\com{\pid r.\pcont}{\pid t_2};\,\com{\pid t_2.(\suc \pcont)}{\pid r};}_{\mbox{\footnotesize expansion of $\macro{inc}(\pid r,\pid t_2)$}}
\underbrace{\com{\pid p.\pcont}{\pid t_1};\,\com{\pid t_1.(\suc \pcont)}{\pid p};}_{\mbox{\footnotesize expansion of $\macro{inc}(\pid p,\pid t_1)$}}
X
\end{multline*}
\qed
\end{example}


\section{Stateful Processes, Minimal Processes and EndPoint Projections}
\label{sec:sp}
We present Stateful Processes (SP), our target process model, and define an EndPoint Projection (EPP) that
synthesises process implementations from choreographies in CC.
By restricting SP adequately, we obtain a target process calculus for MC, which we call Minimal Processes
(MP), with a corresponding, simpler, EPP.

\subsection{Syntax of SP and MP}
The syntax of SP is given in 
Figure~\ref{fig:sp_syntax}.
%
\begin{figure}
\begin{align*}
B ::={} & \asend{\pid q}{e};B \ \mid \ \arecv{\pid p};B \ \mid \ \asel{\pid q}{l};B \ \mid \abranch{\pid p}{\{ l_i : B_i\}_{i\in I}} \ \mid \\
\mid{} & \lefteqn{\nil
 \mid \cond{\eqcom{\pcont}{\pid q}}{B_1}{B_2} \ \mid \ \rec{X}{B_2}{B_1} \ \mid \ \call X} \\
N,M ::={} & \actor{\pid p}{v}{B} \ \mid\  \emptyN\ \mid\ N \parp M
\end{align*}
\caption{Stateful Processes, syntax.}
\label{fig:sp_syntax}
\end{figure}

Networks ($N,M$) are either the inactive network
$\emptyN$ or parallel compositions of processes
$\actor{\pid p}{v}{B}$, where $\pid p$ is the name of the process, $v$
its stored value, and $B$ its behaviour.

We comment on behaviours.
Expressions and labels are as in CC.
A send term $\asend{\pid q}{e};B$ sends the evaluation of expression $e$ to $\pid q$, proceeding as $B$.
Term $\arecv{\pid p};B$, the dual receiving action, stores the value received from $\pid p$ in the process
executing the behaviour, proceeding as $B$.
A selection term $\asel{\pid q}{l};B$ sends $l$ to $\pid q$.
Dually, a branching term $\abranch{\pid p}{\{ l_i : B_i\}_{i\in I}}$ receives one of the labels $l_i$ and proceeds
as $B_i$.
A process offers either: a single branch (labeled $\lleft$ or $\lright$); or two branches (with distinct labels).
In a conditional $\cond{\eqcom{\pcont}{\pid q}}{B_1}{B_2}$, the process receives a value from process $\pid q$ and
compares it with its own value to choose the continuation $B_1$ or $B_2$.
The other terms (definition/invocation of recursive procedures, termination) are standard.

The syntax of MP is obtained by disallowing the terms $\asel{\pid q}{l};B$ and
$\abranch{\pid p}{\{ l_i : B_i\}_{i\in I}}$.

\subsection{Semantics}

The reduction rules for SP are mostly standard, from process calculi, and are included in Figure~\ref{fig:sp_semantics}.
The key difference from CC is that execution is now distributed over processes.
\begin{figure}
\begin{eqnarray*}
&\infer[\rname{S}{Com}]
{
	\actor{\pid p}{v}{\asend{\pid q}{e};B_1}
	\ \parp\ 
	\actor{\pid q}{w}{\arecv{\pid p};B_2}
	\ \to \ 
	\actor{\pid p}{v}{B_1}
	\ \parp\ 
	\actor{\pid q}{u}{B_2}
}
{
	u = e[v/\pcont]
}
\\[1ex]
&\infer[\rname{S}{Sel}]
{
	\actor{\pid p}{v}{\asel{\pid q}{l_j};B}
	\ \parp\ 
	\actor{\pid q}{w}{\abranch{\pid p}{\{ l_i : B_i\}_{i\in I}}}
	\ \to \
	\actor{\pid p}{v}{B}
	\ \parp\ 
	\actor{\pid q}{w}{B_j}
}
{j \in I}
\\[1ex]
&
\infer[\rname{S}{Cond}]
{
	\actor{\pid p}{v}{\cond{\eqcom{\pcont}{\pid q}}{B_1}{B_2}}
	\ \parp\ 
	\actor{\pid q}{w}{\asend{\pid p}{e};B'}
	\ \to \ 
	\actor{\pid p}{v}{B_i}
	\ \parp \ 
	\actor{\pid q}{w}{B'}
}
{
	i = 1 \ \text{if } v = e[w/\pcont],\quad
	i = 2 \ \text{otherwise}
}
\\[1ex]
&
\infer[\rname{S}{Ctx}]
{
\actor{\pid p}{v}{\rec{X}{B_2}{B_1}}
\quad \to \quad
\actor{\pid p}{v}{\rec{X}{B_2}{B_1'}}
}{
  \actor{\pid p}{v}{B_1} \quad \to \quad \actor{\pid p}{v}{B_1'}
}
\\[1ex]
&\infer[\rname{S}{Par}]
{
	N \parp M \quad \to \quad N' \parp M
}
{
	N \ \to\  N'
}
\qquad
\infer[\rname{S}{Struct}]
{
	N \quad \to \quad N'
}
{
	N \precongr M & M\ \to\ M' & M' \precongr N'
}
\end{eqnarray*}
\caption{Stateful Processes, Semantics.}
\label{fig:sp_semantics}
\end{figure}

Rule \rname{S}{Com} follows the standard communication rule in process calculi.
A process $\pid p$ executing a send action towards a process $\pid q$ can synchronise with a receive-from-$\pid p$
action at $\pid q$; in the reduct, $\pid q$'s value is updated with the value sent by $\pid p$, obtained by
replacing the placeholder $\pcont$ in $e$ with the value of $\pid p$.
Rule \rname{S}{Sel} is selection from session types~\cite{HVK98}, with the sender selecting one of the branches
offered by the receiver.
In rule \rname{S}{Cond}, $\pid p$ (executing the conditional) acts as a receiver for the value sent by the process
whose value it wants to read ($\pid q$).
\newtext
All other rules are standard, and use a structural precongruence that includes associativity and commutativity of parallel composition, together with the rules in Figure~\ref{fig:sp_precongr}, supporting
recursion unfolding
and garbage collection of terminated processes and unused definitions.
\oldtext
\begin{figure}
\centering
\begin{eqnarray*}
&\infer[\rname{S}{Unfold}]
{
	\rec{X}{B_2}{B_1[X]} \ \precongr \ \rec{X}{B_2}{B_1[B_2]}
}
{}
\\[1ex]
&
\infer[\rname{S}{PZero}]
{\actor{\pid p}{v}{\nil} \ \precongr \ \nil}
{}
\qquad
\infer[\rname{S}{NZero}]
{N \parp {\emptyN} \ \precongr \ N}
{}
\\[1ex]
&
\infer[\rname{S}{ProcEnd}]
{\rec{X}{B}{\nil}\ \precongr\ \nil }
{}
\end{eqnarray*}
\caption{Stateful Processes, Structural Precongruence.}
\label{fig:sp_precongr}
\end{figure}

As for CC, we can define function implementation in SP.

\begin{definition}[Function implementation in SP]
  A network $N$ \emph{implements} a function $f:\NN^n\to\NN$
  with input processes $\pid p_1,\ldots,\pid p_n$ and output process
  $\pid q$ if $N \precongr (\prod_{i\in[1,n]}\actor{\pid p_i}{v_i}{B_i}) \parp\actor{\pid q}{w}{B'}\parp N'$
  and, for all $x_1,\ldots,x_n\in\NN$:
  \begin{itemize}
  \item if $f(\til x)$ is defined, then $N(\til x)\to^\ast\actor{\pid q}{\numeral{f(\til x)}}{\nil}$;
  \item if $f(\til x)$ is not defined, then $N(\til x)\not\to^\ast\nil$.
  \end{itemize}
  where $N(\til x)$ is a shorthand for $N[\wtil{\numeral{x_i}/v_i}]$, the network obtained by replacing in $N$ the
  values of the input processes with the arguments of the function.
\end{definition}

\subsection{EndPoint Projection}
\label{sec:epp}
We now define an EndPoint Projection (EPP) from CC to SP.

We first discuss the rules for projecting the behaviour of a single
process $\pid p$, a partial function $\epp{C}{\pid p}$ defined
by the rules in Figure~\ref{fig:epp}.
Selections are projected similarly to communications, and $\nil$ is projected to $\nil$.
\begin{figure*}
\begin{eqnarray*}
&
\epp{\gencom;C}{\pid r} =
	\begin{cases}
		\asend{\pid q}{e};\epp{C}{\pid r} & \text{if } \pid r = \pid p \\
		\arecv{\pid p};\epp{C}{\pid r} & \text{if } \pid r = \pid q \\
		\epp{C}{\pid r} & \mbox{o.w.}
	\end{cases}
\\[1ex]
&
\epp{\gensel;C}{\pid r} =
	\begin{cases}
		\asel{\pid q}{l};\epp{C}{\pid r} & \text{if } \pid r = \pid p \\
		\abranch{\pid p}{\{ l : \epp{C}{\pid r} \}} & \text{if } \pid r = \pid q \\
		\epp{C}{\pid r} & \mbox{o.w.}
	\end{cases}
\\[1ex]
&\epp{\gencond}{\pid r} =
	\begin{cases}
		\cond{\eqcom{\pcont}{\pid q}}{\epp{C_1}{\pid r}}{\epp{C_2}{\pid r}} & \text{if } \pid r = \pid p \\
		\asend{\pid p}{\pcont}; ( \epp{C_1}{\pid r} \merge \epp{C_2}{\pid r} ) & \text{if } \pid r = \pid q \\
		\epp{C_1}{\pid r} \merge \epp{C_2}{\pid r} & \mbox{o.w.}
	\end{cases}
\\[1ex]
&
	\epp{\rec{X^{\pids p}}{C_2}{C_1}}{\pid r} =
		\begin{cases}
			\rec{X}{\epp{C_2}{\pid r}}{\epp{C_1}{\pid r}}
			& \text{if } \pid r \in \pids p \\
			\epp{C_1}{\pid r} & \mbox{o.w.}
		\end{cases}
\\[1ex]
&\epp{\nil}{\pid r} = \nil
\qquad\qquad
\epp{X^{\pids p}}{\pid r} =
	\begin{cases}
		\call{X}
		& \text{if } \pid r \in \pids p \\
		\nil & \mbox{o.w.}
	\end{cases}
\end{eqnarray*}
\caption{Core Choreographies, Behaviour Projection.}
\label{fig:epp}
\end{figure*}
All rules follow the intuition of projecting, for each choreography
term, the local action performed by the process that we are
projecting.
For example, for a communication term $\gencom$, we project a send
action for the sender $\pid p$, a receive
action for the receiver $\pid q$, or just
the continuation otherwise.
The rule for selection is similar.

\newtext
The rules for projecting recursive definitions and recursive calls are defined assuming
that procedure names have been annotated with the process names appearing
inside the body of the procedure, in order to avoid projecting
unnecessary procedure code -- see~\cite{CHY12}.
This is an easy preprocessing of the choreography, which we assume is performed before using 
projection. The preprocessing recursively visits the choreography and leaves 
it untouched, except when it meets a recursive definition term $\rec{X}{C_2}{C_1}$. In 
that case, it annotates the choreography -- $\rec{X^{\pids p}}{C_2\{X^{\pids p}/X\}}{C_1\{X^{\pids 
p}/X\}}$ where $\{\pids p\} = \pn(C_2)$ -- and then proceeds recursively. In other words, the 
definition of $X$ and all its invocations in $C_2$ and $C_1$ are annotated with ${\pids p}$.
\oldtext

The rule for projecting a conditional is more involved, using the partial merging operator $\merge$ to merge
the possible behaviours of a process that does not know which branch will be chosen.
The formal definition is found in Figure~\ref{fig:epp_merge}.
\begin{figure}
\centering
\begin{eqnarray*}
\left(\asend{\pid q}{e};B\right) &\merge& \left(\asend{\pid q}{e};B'\right)
\quad = \quad \asend{\pid q}{e};(B \merge B')
\\[1ex]
\left(\arecv{\pid p};B\right) &\merge& \left(\arecv{\pid p};B'\right)
\quad = \quad \arecv{\pid p};(B\merge B')
\\[1ex]
\left(\asel{\pid q}{l};B\right)&\merge& \left(\asel{\pid q}{l};B'\right)
\quad = \quad \asel{\pid q}{l};(B \merge B')
\\[1ex]
\abranch{\pid p}{\{l_i:B_i\}_{i\in J}} &\merge&
\abranch{\pid p}{\{l_i:B'_i\}_{i\in K}} \quad = \quad \\
&& \hspace*{-2cm} \abranch{\pid p}{\left(\{l_i:(B_i \merge B'_i)\}_{i\in J\cap K}\cup\{l_i:B_i\}_{i\in J \setminus K}\cup\{l_i:B'_i\}_{i\in K \setminus J}\right)}
\\[1ex]
\left(\cond{\eqcom{\pcont}{\pid q}}{B_1}{B_2}\right) &\merge&
\left(\cond{\eqcom{\pcont}{\pid q}}{B'_1}{B'_2}\right) \quad = \quad \\
&& \hspace*{1cm}
\left(\cond{\eqcom{\pcont}{\pid q}}{(B_1 \merge B'_1)}{(B_2 \merge B'_2)}\right)
\\[1ex]
\call X&\merge& \call X \quad = \quad \call X
\\[1ex]
\left(\rec{X}{B_2}{B_1}\right) &\merge& \left(\rec{X}{B'_2}{B'_1}\right)
\quad = \quad \\
&& \hspace*{1cm} \left(\rec{X}{(B_2 \merge B'_2)}{(B_1 \merge B'_1)}\right)
\\[1ex]
B_1 &\merge& B_2 \quad = \quad B'_1 \merge B'_2 \quad \left(\text{if } B_1 \precongr B'_1 \text{ and } B_2 \precongr B'_2\right)
\end{eqnarray*}
\caption{Core Choreographies, Merge Operator in Behaviour Projection.}
\label{fig:epp_merge}
\end{figure}

Merging is a homomorphic binary operator; for all terms but branchings it requires isomorphism, 
$\asend{\pid q}{e};B \merge \asend{\pid q}{e};B' = \asend{\pid q}{e};(B \merge B')$.
The only case where branching terms can have unmergeable continuations is when they are guarded by distinct
labels, in which case merge returns a larger branching including all options (merging branches with the same
label).

Merging explains the role of selections in CC, common in choreography
models~\cite{BCDLDY08,CHY12,CM13,HYC08,DGGLM17,QZCY07}.
Recall the choreographies from Example~\ref{ex:selection}.
In choreography $C$, the behaviour of $\pid r$ cannot be projected because we cannot merge its different behaviours in
the two branches of the conditional (a send with a receive).
Choreography $C'$ is projectable, and the behaviour of $\pid r$ is
$\epp{C}{\pid r} = \abranch{\pid p}{\{\lleft:\arecv{\pid p}; \nil,\ \lright:\asend{\pid p}{\pcont};\nil\}}$.

\begin{definition}[EPP from CC to SP]
Given a choreography $C$ and a state $\sigma$, the \emph{endpoint projection} of $C$ and $\sigma$ is the parallel
composition of the projections of the processes in $C$.
\[
\epp{C,\sigma}{}  =
\prod_{\pid p \in \pn(C)} \actor{\pid p}{\sigma(\pid p)}{\epp{C}{\pid p}}
\]
\end{definition}
The EPP from MC to MP is defined by restricting the EPP from CC to SP to the relevant cases.
Since choreographies in MC do not have selections, process projections of choreographies in MC never have branchings.
This means that, in the case of MC, the merging operator $\sqcup$ used in EPP is exactly syntactic equality (since the 
only nontrivial case was that of branchings).
Consequently, we can replace the rule for projecting conditionals with the following, simpler, one.

\[
\epp{\gencond}{\pid r} =
	\begin{cases}
		\cond{\eqcom{\pcont}{\pid q}}{\epp{C_1}{\pid r}}{\epp{C_2}{\pid r}} & \text{if } \pid r = \pid p \\
		\asend{\pid p}{\pcont}; \epp{C_1}{\pid r} & \text{if } \pid r = \pid q \text{ and } \epp{C_1}{\pid r} = 
\epp{C_2}{\pid r} \\
		\epp{C_1}{\pid r} & \mbox{if } \pid r \not\in \{\pid p, \pid q\} \text{ and } \epp{C_1}{\pid r} = 
\epp{C_2}{\pid r}
	\end{cases}
\]

Since the state function $\sigma$ is total, $\epp{C,\sigma}{}$ is defined for some $\sigma$ iff
$\epp{C,\sigma'}{}$ is defined for all other $\sigma'$.
In this case, we say that $C$ is \emph{projectable}.

\begin{example}
Recall the definition of $\macro{inc}(\pid p, \pid t)$.
\[
\macro{inc}(\pid p, \pid t) \defeq \com{\pid p.\pcont}{\pid t};\ \com{\pid t.(\suc \pcont)}{\pid p};\ \nil
\]
Given any $\sigma$, the EPP of $\macro{inc}(\pid p, \pid t)$ is:
\[
\epp{\macro{inc}(\pid p, \pid t),\sigma}{} =
\ \actor{\pid p}{\sigma(\pid p)}{\asend{\pid t}{\pcont}; \arecv{\pid t}; \nil}
\ \parp\
\actor{\pid t}{\sigma(\pid t)}{\arecv{\pid p}; \asend{\pid p}{\suc\pcont}; \nil}
\]
\end{example}

\newtext
In order to characterize the correspondence between choreographies and their projections, we introduce a \emph{pruning relation} between networks, inspired by~\cite{CHY12,CM13}.

\begin{definition}
  The pruning relation $\sqsupseteq$ on networks is defined as $\nil\sqsupseteq\nil$ and 
$\actor{\pid p}vB\parp N\sqsupseteq\actor{\pid p}v{B'}\parp N'$ if $B\merge B'=B$ and $N\sqsupseteq 
N'$.
\end{definition}
Intuitively, $N\sqsupseteq N'$ if both $N$ and $N'$ are parallel compositions of the same processes, 
and each process in $N$ offers at least the same behaviours as in $N'$, with potentially 
extra options in its branching terms.

Using pruning, we can formalise the operational correspondence guaranteed by EPP.


\begin{theorem}[Operational Correspondence (CC $\leftrightarrow$ SP)]
\label{thm:oc-cc-sp}
Let $C$ be a projectable choreography. Then, for all $\sigma$:
\begin{description}
\item[Completeness:] If $C,\sigma \to C',\sigma'$, then $\epp{C,\sigma}{} \to \sqsupseteq \preceq 
\epp{C',\sigma'}{}$;
\item[Soundness:] If $\epp{C,\sigma}{} \to N$, then there exist $C'$, $\sigma'$ and $N'$ such that
  $C,\sigma \to C',\sigma'$ and $N \precongr N' \sqsupseteq \epp{C',\sigma'}{}$.
\end{description}
\end{theorem}
\begin{proof}
  The proof is an adaptation of the proofs of similar results in~\cite{CHY12,CM13}; we sketch a few typical cases, also to illustrate where pruning plays a role.
  \begin{description}
  \item[Completeness:] by induction on the derivation of the reduction $C,\sigma\to C',\sigma'$.
    \begin{itemize}
    \item Consider the case of rule \rname{C}{Com}. Then $C=\gencom;C'$, $\sigma'=\sigma[\pid q\mapsto e[\sigma(\pid p)/\pcont]]$ and
      \begin{align*}
        \epp{C,\sigma}{}
        &=\actor{\pid p}{\sigma(\pid p)}{\asend{\pid q}{e};B_{\pid p}}\parp\actor{\pid q}{\sigma(\pid q)}{\arecv{\pid p};B_{\pid q}}\parp N'' \\
        &\to\actor{\pid p}{\sigma(\pid p)}{B_{\pid p}}\parp\actor{\pid q}{e[\sigma(\pid p)/v]}{B_{\pid q}}\parp N''
      \end{align*}
      If $B_{\pid p},B_{\pid q}\not\preceq\nil$, then the result of the reduction is precisely $\epp{C',\sigma'}{}$. If one or both of these behaviours are $\nil$, then $\preceq$ needs to be applied to remove the corresponding processes from the network in order to obtain $\epp{C',\sigma'}{}$.
    \item Consider the case of rule \rname{C}{Cond} with $i=1$ (i.e.\ $\sigma(\pid p)=\sigma(\pid q)$). Then $C=\gencond$, $C'=C_1$, $\sigma'=\sigma$ and
      \begin{align*}
        \epp{C,\sigma}{}
        &=\actor{\pid p}{\sigma(\pid p)}{\cond{\eqcom{\pcont}{\pid q}}{B_1}{B_2}}\parp\actor{\pid q}{\sigma(\pid q)}{\asend{\pid p}{\pcont};B_{\pid q}}\parp N'' \\
        &\to\actor{\pid p}{\sigma(\pid p)}{B_1}\parp\actor{\pid q}{\sigma(\pid q)}{B_{\pid q}}\parp N''
      \end{align*}
      Assume first for simplicity that $B_1,B_{\pid q}\not\preceq\nil$.
      For each process $\pid r\neq\pid p$, the network obtained above is $\epp{C_1,\sigma}{\pid r}\merge\epp{C_2,\sigma}{\pid r}$, which in general does not coincide with $\epp{C_1,\sigma}{\pid r}$. However, since merge is trivially associative and idempotent, it immediately follows that $\actor{\pid p}{\sigma(\pid p)}{B_1}\parp\actor{\pid q}{\sigma(\pid q)}{B_{\pid q}}\parp N''\sqsupseteq\epp{C_1,\sigma}{}$.

      In the case that $B_1,B_{\pid q}\not\preceq\nil$, structural precongruence also needs to be applied to remove the respective processes from the resulting network.
    \item Finally consider the case when \rname{C}{Struct} applies, i.e.\ $C\preceq C_1$, $C_1,\sigma\to C'_1,\sigma'$ and $C'_1\preceq C'$.
      The thesis then follows by observing that rules \rname{C}{Eta-Eta}, \rname{C}{Eta-Cond}, \rname{C}{Eta-Rec} and \rname{C}{Cond-Cond} do not change the projected choreographies, while the behaviour of \rname{C}{Unfold} is reproducible by \rname{S}{Unfold}.
      Therefore $\epp{C,\sigma}{}\preceq\epp{C_1,\sigma}{}$ and the induction hypothesis applies.
      It remains to be shown that $\epp{C'_1,\sigma'}{}\preceq\epp{C',\sigma'}{}$, which also requires considering rule \rname{C}{ProcEnd} -- whose behaviour again can be directly replicated by \rname{S}{ProcEnd}.
    \end{itemize}
  \item[Soundness:] the proof is now by induction on the structure of $C$. We detail one representative case.
    \begin{itemize}
    \item Suppose that $C=\gencom;C'$. Then $\epp{C,\sigma}{}=\actor{\pid p}{\sigma(\pid p)}{\asend{\pid q}{e};B_{\pid p}}\parp\actor{\pid q}{\sigma(\pid q)}{\arecv{\pid p};B_{\pid q}}\parp N''$, and there are two cases.

      If $\epp{C,\sigma}{}\to N$ involves the communication between $\pid p$ and $\pid q$, then the choreography can mimic this action by consuming $\gencom$ and evolving to $C'$.
      Any applications of structural precongruence to $\epp{C,\sigma}{}$ must involve only rules \rname{S}{Unfold} and \rname{S}{ProcEnd}, which can be directly applied also to $C$ (and may require additional applications of the same rules to other processes in $N$, namely those that occur in the definition of the relevant variable $X$).
      Likewise, applications of structural precongruence to $N$ either involve the same rules (and the same argument applies to $\epp{C',\sigma}{}$) or garbage collection (which is transparent, since it can never be applied to $\epp{C',\sigma}{}$).
      Finally, additional garbage collection rules may be needed to obtain $\epp{C',\sigma'}{}$.
      In any case, $N\preceq\epp{C',\sigma'}{}$.

      If $\epp{C,\sigma}{}\to N$ does not involve $\pid p$ and $\pid q$, then we observe that necessarily $N$ has the form $\actor{\pid p}{\sigma(\pid p)}{\asend{\pid q}{e};B_{\pid p}}\parp\actor{\pid q}{\sigma(\pid q)}{\arecv{\pid p};B_{\pid q}}\parp N''$, where $\epp{C',\sigma}{}\to\actor{\pid p}{\sigma(\pid p)}{B_{\pid p}}\parp\actor{\pid q}{\sigma(\pid q)}{B_{\pid q}}\parp N''=N^-$.
      Then we can apply the induction hypothesis to $C'$, and find $C'',\sigma'$ such that $N^-\preceq N'\sqsupseteq\epp{C'',\sigma'}{}$ for some $N'$.
      Since the transition from $C'$ to $C''$ does not involve $\pid p$ or $\pid q$, a simple induction argument shows that $\gencom;C',\sigma\to\gencom;C'',\sigma'$, and from the definition of $\preceq$ and $\sqsupseteq$ it follows that $N\preceq N'''\sqsupseteq\epp{\gencom;C'',\sigma'}{}$ where $N'''$ is obtained from $N'$ by prepending the original communication actions between $\pid p$ and $\pid q$ at the head of these processes' behaviours.
    \end{itemize}
  \end{description}
\end{proof}
\oldtext
As a consequence of Theorems~\ref{thm:df-by-design} and~\ref{thm:oc-cc-sp},
choreography projections never deadlock.

\begin{theorem}[Deadlock-freedom by construction]
\label{cor:df-sp}
Let $N = \epp{C,\sigma}{}$ for some $C$ and $\sigma$. Then, either $N \precongr \emptyN$
($N$ has terminated), or $N \to N'$ for some $N'$ ($N$ can reduce).
\end{theorem}
\begin{proof}
  If $N\precongr\nil$ then the theorem clearly holds.
  Otherwise, the thesis follows from Theorems~\ref{thm:df-by-design} and~\ref{thm:oc-cc-sp}.
\end{proof}

\subsection{Choreography Amendment}
An important property of CC is that all unprojectable choreographies can be made projectable by adding some
selections.
We annotate recursion variables as for EPP, assuming that $\pn(X^{\pids p}) = \{\pids p\}$.

\begin{definition}[Amendment]
\label{def:amend}
Given $C$ in CC, the transformation $\amend(C)$ repeatedly applies the following procedure until no
longer possible, starting from the innermost subterms in $C$.
For each conditional subterm $\gencond$ in $C$, let $\pids r \subseteq (\pn(C_1)\cup\pn(C_2))$ be the largest set
such that
$\epp{C_1}{\pid r} \merge \epp{C_2}{\pid r}$ is undefined for all $\pid r \in \pids r$; then $\gencond$ in $C$ is replaced with:
\[
\m{if}\, (\eqcom{\pid p}{\pid q}) \, \m{then} \, (\sel{\pid p}{\pid r_1}{\lleft};\cdots;\sel{\pid p}{\pid r_n}{\lleft};C_1) \, \m{else} \,
 (\sel{\pid p}{\pid r_1}{\lright};\cdots;\sel{\pid p}{\pid r_n}{\lright};C_2)
\]
\end{definition}

\begin{lemma}[Amendment Lemma]
  \label{lem:amend}
  For every choreography $C$:
  \begin{description}
  \item[Completeness:] $\amend(C)$ is defined; 
  \item[Projectability:] for all $\sigma$, $\epp{\amend(C),\sigma}{}$ is defined;
  \item[Correspondence:] for all $\sigma$, $C,\sigma \to^* C',\sigma'$
    iff $\amend(C),\sigma \to^* \amend(C'),\sigma'$.
  \end{description}
\end{lemma}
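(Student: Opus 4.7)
The plan is to treat the three clauses in sequence, proving \textbf{Completeness} and \textbf{Projectability} by structural induction on $C$, and \textbf{Correspondence} by induction on reduction length in each direction. The guiding observation throughout is that $\amend$ only inserts selection prefixes into conditional branches, and selections do not alter $\sigma$ or the set of free/bound process names, so they interact cleanly with both the semantics and with $\equiv$.

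For \textbf{Completeness}, I first argue termination of the rewriting: the inner-most-first strategy processes each conditional subterm at most once, because the rewrite only prepends selection actions (which are not themselves conditionals). Since $C$ has finitely many conditional subterms, the procedure halts, producing $\amend(C)$. For \textbf{Projectability}, I induct on the structure of $\amend(C)$. The non-conditional cases are immediate from the corresponding EPP rule applied to the inductively projectable subterms. For a conditional subterm $\gencond$ rewritten by $\amend$, the set $\pids r$ consists exactly of the processes whose merge was previously undefined; after rewriting, for any $\pid r_i\in\pids r$ the projections of the two branches become $\abranch{\pid p}{\{\lleft:B_i\}}$ and $\abranch{\pid p}{\{\lright:B_i'\}}$, whose merge is the two-label branching $\abranch{\pid p}{\{\lleft:B_i,\lright:B_i'\}}$; for processes not in $\pids r$, the merge was already defined by choice of $\pids r$, hence projection succeeds at every process.

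For \textbf{Correspondence}, the ``only if'' direction follows by induction on the length of $C,\sigma\to^\ast C',\sigma'$, case-analysing on the first step. A non-conditional action step reduces identically in $\amend(C)$ to a choreography whose structure is $\amend$ of the corresponding reduct of $C$ (because $\amend$ is homomorphic on action prefixes, recursion and calls). A conditional step $\gencond,\sigma\to C_i,\sigma$ is matched in $\amend(C)$ by first firing the conditional, then firing in order the $n$ inserted selections $\sel{\pid p}{\pid r_j}{l}$; none of them touch $\sigma$, so we end at $\amend(C_i),\sigma$, and the inductive hypothesis completes the argument. For the ``if'' direction, I map each step of $\amend(C),\sigma\to^\ast\amend(C'),\sigma'$ either to the identity step in $C$ (when it fires a selection introduced by $\amend$) or to the corresponding step in $C$; since the target is an amended form, every inserted selection that has been fired along the way must have been fully consumed, so the residual amended choreography is precisely $\amend$ of some reduct of $C$.

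The main obstacle will be the ``if'' direction of correspondence, because the swap rules $\rname{C}{Eta-Eta}$, $\rname{C}{Eta-Cond}$, and $\rname{C}{Cond-Cond}$ let the reduction sequence in $\amend(C)$ interleave inserted selections with genuine actions from $C$ in orders that have no direct counterpart in $C$. To handle this cleanly, I expect to need a strengthened invariant: at every intermediate state reached from $\amend(C),\sigma$, the residual choreography is $\equiv$-equivalent to $S_1;\cdots;S_k;\amend(D)$ for some sequence $S_1,\ldots,S_k$ of pending amend-selections and some choreography $D$ with $C,\sigma\to^\ast D,\sigma''$. This invariant is preserved by each reduction step, because the inserted selections share their process names with the guard of the conditional they annotate and thus satisfy the same swap-independence conditions as that conditional. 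Once this invariant is in place, reaching a literal $\amend(C')$ forces $k=0$ and $D=C'$, giving $C,\sigma\to^\ast C',\sigma'$ as required.
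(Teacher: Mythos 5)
The paper does not actually prove Lemma~\ref{lem:amend}: it is asserted to follow ``from the definitions of $\amend$, EPP and the semantics of MC,'' with no argument given. Your proposal therefore supplies details the paper omits, and its skeleton --- termination of the rewriting for Completeness, the observation that amended branches project to single-label branchings whose merge is the two-label branching for Projectability, and a step-by-step simulation in each direction for Correspondence --- is exactly the argument the authors are implicitly relying on. Two points worth adding to Projectability: the inner-most-first order is what guarantees that $\epp{C_i}{\pid r}$ is already \emph{defined} (not merely mergeable) by the time an outer conditional is processed, and the guard process $\pid p$ needs no merge at all, since its projection keeps the two branches separate inside its own local conditional.

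The one step that does not go through as written is the conclusion of the ``if'' direction. Your invariant (every residual is $\equiv$ to $S_1;\cdots;S_k;\amend(D)$ with $C,\sigma\to^\ast D,\sigma''$) is the right tool, but ``reaching a literal $\amend(C')$ forces $k=0$'' is false: the residual $\sel{\pid p}{\pid r}{\lleft};C_1$ obtained by firing only the conditional is itself literally $\amend(\sel{\pid p}{\pid r}{\lleft};C_1)$, yet $\sel{\pid p}{\pid r}{\lleft};C_1$ is not a reduct of $C$. This is arguably an imprecision in the lemma's own phrasing (the intended reading is that every run of $\amend(C)$ completes to an amended reduct and every amended reduct is reachable), but your proof must either adopt that reading explicitly or quantify $C'$ over reducts of $C$ only. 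Separately, your justification for preservation of the invariant --- that the inserted selections ``satisfy the same swap-independence conditions as that conditional'' --- is not literally true, since $\sel{\pid p}{\pid r_j}{l}$ has process names $\{\pid p,\pid r_j\}$ rather than $\{\pid p,\pid q\}$; the selections are in fact \emph{more} swappable than the conditional, which only helps you, and the fact you actually need is that the inserted selections carry distinct labels in the two branches and hence can never be hoisted out of an unfired conditional by rule \rname{C}{Eta-Cond}.
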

\begin{proof}
  This procedure clearly terminates, since each iteration processes one conditional that is never changed again.
  By construction, the resulting choreography is projectable: if $\epp{C_1,\sigma}{\pid r}$ and $\epp{C_2,\sigma}{\pid r}$ are not mergeable, then the corresponding terms in $\epp{\amend(\gencond),\sigma}{\pid r}$ are both branching terms, making them mergeable.
  Correspondence is immediate, since the only new actions in $\amend(C')$ are label selections, which do not change the state when executed.
\end{proof}

\begin{example}
Applying $\amend$ to the choreography $C$ in Example~\ref{ex:selection} yields the choreography $C'$ in the
same example.
\qed
\end{example}

\begin{example}
Thanks to merging, amendment can also recognise some situations where additional selections are not needed.
For example, in the choreography
$C=\cond{\eqcom{\pid p}{\pid q}}{\left(\com{\pid p.(\suc \pcont)}{\pid r};\nil\right)}{\left(\com{\pid p.(\pcont)}{\pid r};\nil\right)}$,
$\pid r$ does not need to know the choice made by $\pid p$,
as it always performs the same input action.
Here, $C$ is projectable and $\amend(C) = C$.
\qed
\end{example}

Note that amending a choreography from MC returns a choreography in CC.
In \S~\ref{sec:selection} we show that we can prove an amendment lemma for MC, but this will require
much more work.


\section{Turing Completeness of MC and Its Consequences}
\label{sec:rec}
We now move to our main result: the set of choreography projections of CC (the processes synthesised by EPP) is
not only deadlock-free, but also capable of computing all
partial recursive functions, as defined by Kleene~\cite{Kleene52}, and hence Turing complete.
To this aim, the design and properties of CC give us a considerable pay off.
First, by Theorem~\ref{thm:oc-cc-sp}, the problem reduces to establishing that a projectable fragment of CC
is Turing complete. Second, by Lemma~\ref{lem:amend}, this simpler problem is reduced to establishing that
MC is Turing complete, since any choreography in MC can be amended to one in CC that is projectable and
computes the same values.
We also exploit the concurrent semantics of CC and Theorem~\ref{thm:par} to parallelise independent sub-computations
(Theorem~\ref{thm:par-compute}).
By projecting our choreographies via EPP, we obtain corresponding function implementations in the process calculus SP.

Establishing that CC is Turing complete is long, but not difficult.
Our proof is in line with other traditional proofs of computational completeness~\cite{Cutland80,Kleene52,Turing36},
where data and programs are distinct.
This differs from other proofs of similar results for, e.g.,
$\pi$-calculus~\cite{SW01} and $\lambda$-calculus~\cite{barendregt}, which encode data as
particular programs. The advantages are: our proof can be used to build choreographies that
compute particular functions; and we can parallelise independent sub-computations in functions (Theorem~\ref{thm:par-compute}).

\subsection{Partial Recursive Functions}
Our definition of the class of partial recursive functions $\RR$ is slightly simplified, but equivalent to, that
in~\cite{Kleene52}, where it is also shown that $\RR$ is the class of functions computable by a Turing machine.
$\RR$ is defined inductively as follows.

\begin{description}
\item[Unary zero:] $Z\in\RR$, where $Z:\NN\to\NN$ is s.t.~$Z(x)=0$ for all $x\in\NN$.
\item[Unary successor:] $S\in\RR$, where $S:\NN \to \NN$ is s.t.~$S(x)=x+1$ for all $x\in\NN$.
\item[Projections:] If $n\geq 1$ and $1\leq m\leq n$, then $P^n_m \in \RR$, where $P^n_m:\NN^n\to\NN$ is
  s.t.~$P^n_m(x_1,\ldots,x_n)=x_m$ for all $x_1,\ldots,x_n\in\NN$.
\item[Composition:] if $f,g_i\in\RR$ for $1\leq i\leq k$, with each $g_i:\NN^n\to\NN$ and $f:\NN^k\to\NN$, then
  $h=C(f,\til g)\in\RR$, where $h:\NN^n\to\NN$ is defined by composition from $f$ and $g_1,\ldots,g_k$ as:
  $h(\til x) = f(g_1(\til x),\ldots,g_k(\til x))$.
\item[Primitive recursion:] if $f,g\in\RR$, with $f:\NN^n\to\NN$ and $g:\NN^{n+2}\to\NN$, then $h=R(f,g)\in\RR$,
  where $h:\NN^{n+1}\to\NN$ is defined by primitive recursion from $f$ and $g$ as:
  $h(0,\til x)=f(\til x)$ and $h(x_0+1,\til x)=g(x_0,h(x_0,\til x),\til x)$.
\item[Minimization:] If $f\in\RR$, with $f:\NN^{n+1}\to\NN$, then $h=M(f)\in\RR$, where $h:\NN^n\to\NN$ is defined
  by minimization from $f$ as: $h(\til x)=y$ iff (1)~$f(\til x,y)=0$ and (2)~$f(\til x,y)$ is defined and different
  from $0$ for all $z<y$.
\end{description}

\begin{example}[Addition and Subtraction]
  \label{ex:sumsub}
  We show that $\Sum\in\RR$, where $\Sum:\NN^2 \to \NN$ adds its two arguments.
  Since $\Sum(0,y)=y$ and $\Sum(x+1,y)=\Sum(x,y)+1$, we can define $\Sum$ by recursion as
  $\Sum = R\big(P^1_1,C(S,P^3_2)\big)$. 
  Indeed, the function $y\mapsto\Sum(0,y)$ is simply $P^1_1$, whereas $1+\Sum(x,y)$ is $h(x,\Sum(x,y),y)$
  where $h(x,y,z)=y+1$, which is the composition of the successor function with $P^3_2$.

  From addition, we can define subtraction by minimization, since $\Sub(x,y)=x-y$ is the smallest $z$ such
  that $y+z = x$ (subtraction is not defined if $y>x$).
  We use an auxiliary function $\Eq(x,y)$ that returns $0$ if $x=y$ and a non-zero value otherwise, which is
  known to be partial recursive.
  Then we can define subtraction as $\Sub=M\Big(C\big(\Eq,C(\Sum,P^3_2,P^3_3),P^3_1\big)\Big)$.
  Indeed, composing $\Sum$ with $P^3_2$ and $P^3_3$ produces $(x,y,z)\mapsto y+z$, and the outer composition
  yields $(x,y,z)\mapsto\Eq(y+z,x)$.
  This function evaluates to $0$ precisely when $z=y-x$, and applying minimization computes this value from
  $x$ and $y$.
\end{example}

\subsection{Encoding Partial Recursive Functions in MC and CC}
\label{sec:encoding}
 
All functions in $\RR$ can be implemented in CC, in the sense of Definition~\ref{defn:implementation}.
Since selections can be inferred by amendment, we develop our encoding in MC and discuss projectability later.

Given $f:\NN^n\to\NN$, we denote its implementation
\newtext
in MC
\oldtext
by $\enc f{\pids p \mapsto\pid q}{}$, where $\pids p$ and $\pid
q$ are parameters.
All choreographies we build have a single exit point, and we combine them using the sequential 
composition operator
$\gseq$ from \S~\ref{sec:cc}.

We use auxiliary processes ($\pid r_0, \pid r_1,\ldots$) for intermediate computation, and annotate the encoding
with the index $\ell$ of the first free auxiliary process name ($\enc{f}{\pids p\mapsto\pid q}\ell$).
To alleviate the notation, the encoding assigns mnemonic names to these processes and their correspondence to the
actual process names is formalised in the text using $\pi(f)$ for the number of auxiliary processes needed for
encoding $f:\NN^n\to\NN$, defined by
\begin{align*}
  \pi(S)=\pi(Z)=\pi\left(P^n_m\right)&=0 &
  \pi(R(f,g)) &= \pi(f)+\pi(g)+3 \\
  \pi\left(C(f,g_1,\ldots,g_k)\right) &= \textstyle \pi(f)+\sum_{i=1}^k\pi(g_i)+k &
  \pi(M(f)) &= \pi(f)+3
\end{align*}

For simplicity, we write $\pids p$ for $\pid p_1,\ldots,\pid p_n$ (when $n$ is known) and $\{A_i\}_{i=1}^n$ for
$A_1\gseq\ldots\gseq A_n$.

The encoding of the base cases is straightforward.
\[\enc Z{\pid p\mapsto\pid q}\ell = \com{\pid p.\varepsilon}{\pid q}
  \qquad
  \enc S{\pid p\mapsto\pid q}\ell = \com{\pid p.(\s\cdot\pcont)}{\pid q}
  \qquad
  \enc{P^n_m}{\pids p\mapsto\pid q}\ell = \com{\pid p_m.\pcont}{\pid q}
\]
Composition is also simple.
Let $h=C(f,g_1,\ldots,g_k):\NN^n\to\NN$.
Then:
\begin{align*}
  \enc h{\pids p\mapsto\pid q}\ell =
    &\left\{\enc{g_i}{\pids p\mapsto\pid r'_i}{\ell_i}\right\}_{i=1}^k\gseq\,\enc f{\pid r'_1,\ldots,\pid r'_k\mapsto\pid q}{\ell_{k+1}}
\end{align*}
where $\pid r'_i=\pid r_{\ell+i-1}$, $\ell_1 = \ell+k$ and $\ell_{i+1} = \ell_i+\pi(g_i)$.
Each auxiliary process $\pid r'_i$ connects the output of $g_i$ to the corresponding input of $f$.
Choreographies obtained inductively use these process names as parameters; name clashes are prevented by increasing
$\ell$.
By definition of $\gseq$ $\enc{g_{i+1}}{}{}$ is substituted for the (unique) exit point of
$\enc{g_i}{}{}$, and $\enc f{}{}$ is substituted for the exit point of $\enc{g_k}{}{}$.
The resulting choreography also has only one exit point (that of $\enc f{}{}$).
Below we discuss how to modify this construction slightly so that the $g_i$s are computed in parallel.

For the recursion operator, we need to use recursive procedures.
Let $h=R(f,g):\NN^{n+1}\to \NN$.
Then, using the macro $\macro{inc}$ from Example~\ref{ex:inc_add} for brevity:
\begin{align*}
  \enc h{\pid p_0,\ldots,\pid p_n\mapsto\pid q}\ell ={}\quad
    &\m{def}\,{T} =\m{if}\,(\eqcom{\pid r_c}{\pid p_0})\,\m{then}\,(\com{\pid q'.\pcont}{\pid q};\,\nil)\\
    &\qquad\qquad\m{else}\, \enc g{\pid r_c,\pid q',\pid p_1,\ldots,\pid p_n\mapsto\pid r_t}{\ell_g}\gseq\,
     \com{\pid r_t.\pcont}{\pid q'};\,
     \macro{inc}(\pid r_c,\pid r_t)\gseq\, T\\
    & \m{in}\, \enc f{\pid p_1,\ldots,\pid p_n\mapsto\pid q'}{\ell_f}\gseq\,\com{\pid r_t.\varepsilon}{\pid r_c};\,T
\end{align*}
where
$\pid q' = \pid r_{\ell}$,
$\pid r_c = \pid r_{\ell+1}$,
$\pid r_t = \pid r_{\ell+2}$,
$\ell_f = \ell+3$ and
$\ell_g = \ell_f+\pi(f)$.
Process $\pid r_c$ is a counter, $\pid q'$ stores intermediate results, and $\pid r_t$ is temporary storage; $T$
checks the value of $\pid r_c$ and either outputs the result or recurs.
Note that $\enc h{}{}$ has only one exit point (after the communication from $\pid r$ to $\pid q$), as the exit
points of $\enc f{}{}$ and $\enc g{}{}$ are replaced by code ending with calls to $T$.

The strategy for minimization is similar, but simpler.
Let $h=M(f):\NN^n\to\NN$.
Again we use a counter $\pid r_c$ and compute successive values of $f$, stored in $\pid q'$, until a zero is
found.
This procedure may loop forever, either because $f(\til x,x_{n+1})$ is never $0$ or because one of the evaluations
itself never terminates.
\begin{align*}
  \enc h{\pid p_1,\ldots,\pid p_{n+1}\mapsto\pid q}\ell ={}
    &\m{def}\,{T} =\enc f{\pid p_1,\ldots,\pid p_n,\pid r_c\mapsto\pid q'}{\ell_f}\gseq\,
     \com{\pid r_c.\varepsilon}{\pid r_z}; \\
    &\qquad\m{if}\,(\eqcom{\pid r_z}{\pid q'})\,\m{then}\,(\com{\pid r_c.\pcont}{\pid q};\,\nil)\,
     \m{else}\,(\macro{inc}(\pid r_c,\pid r_z)\gseq\,T)\\
    & \m{in}\,\com{\pid r_z.\varepsilon}{\pid r_c};\,T
\end{align*}
where
$\pid q' = \pid r_{\ell}$,
$\pid r_c = \pid r_{\ell+1}$,
$\pid r_z = \pid r_{\ell+2}$,
$\ell_f = \ell+3$ and
$\ell_g = \ell_f+\pi(f)$.
In this case, the whole if-then-else is inserted at the exit point of
$\enc f{}{}$; the only exit point of this choreography is again after communicating
the result to $\pid q$.

\begin{definition}[Encoding]
  Let $f\in\mathcal R$.
  The \emph{encoding} of $f$ in MC
  is $\enc f{\pids p\mapsto\pid q}{}=\enc f{\pids p\mapsto\pid q}0$.
\end{definition}

\begin{example}
We illustrate this construction by showing the encoding of the $\Sum$
and $\Sub$ functions given in Example~\ref{ex:sumsub}.
Recall that $\Sum=R(P^1_1,C(S,P^3_2))$.
Expanding $\enc{\Sum}{\pid p_x,\pid p_y\mapsto\pid q}{}$
we obtain:
\begin{align*}
  \lefteqn{\enc{\Sum}{\pid p_x,\pid p_y\mapsto\pid q}0 ={}} \\
    &\m{def}\,{T} = \m{if}\,(\eqcom{\pid r_1}{\pid p_x})\,\m{then}\,(\com{\pid r_0.\pcont}{\pid q};\,\nil)\\
  &\qquad\qquad\m{else}\,%
    \underbrace{\com{\pid r_0.\pcont}{\pid r_3}}_{\enc{P^3_2}{\pid r_1,\pid r_0,\pid p_y\mapsto\pid r_3}{4}};\,
    \underbrace{\com{\pid r_3.(\s\cdot\pcont)}{\pid r_2}}_{\enc{S}{\pid r_3\mapsto\pid r_2}{4}};\,
    \com{\pid r_2.\pcont}{\pid r_0};
      \underbrace{\com{\pid r_1.\pcont}{\pid r_2};\,\com{\pid r_2.(\s\cdot\pcont)}{\pid r_1}}_{\macro{inc}(\pid r_1,\pid r_2)};\,
    T\\
    & \m{in}\,%
    \underbrace{\com{\pid p_y.\pcont}{\pid r_0}}_{\enc{P^1_1}{\pid p_y\mapsto\pid r_0}{3}};\,%
    \com{\pid r_2.\varepsilon}{\pid r_1};\,T
\end{align*}
The first two actions in the \m{else} branch correspond to
$\enc{C(S,P^3_2)}{\pid r_1,\pid r_0,\pid p_y\mapsto\pid r_2}3$.

For subtraction, we first show how to implement equality directly in MC, without resorting to its proof of
membership in $\mathcal R$.
This choreography is not the simplest possible because we want it to have only one exit point; its
construction illustrates how any choreography can be transformed to have this property.
\begin{multline*}
\macro{eq}(\pid p_x,\pid p_y,\pid q,\pid r) \defeq{}
  \m{def}\,T=(\com{\pid r.\pcont}{\pid q};\,\nil)\,\m{in}\\
  \m{if}\,(\eqcom{\pid p_x}{\pid p_y})\,\m{then}\,(\com{\pid p_x.\varepsilon}{\pid r};\,T)\,\m{else}\,(\com{\pid p_x.(\suc\pcont)}{\pid r};\,T)
\end{multline*}
Recall now that $\Sub=M(C(\Eq,C(\Sum,P^3_2,P^3_3),P^3_1))$.
Unfolding the encoding of minimization and composition, we obtain
that $\enc{\Sub}{\pid p_x,\pid p_y\mapsto\pid q}0$ is
\begin{align*}
\m{def}\,T ={}
  &\enc{P^3_2}{\pid p_x,\pid p_y,\pid r_1\mapsto\pid r_5}7\gseq\,%
  \enc{P^3_2}{\pid p_x,\pid p_y,\pid r_1\mapsto\pid r_6}7\gseq\,%
  \enc{\Sum}{\pid r_5,\pid r_6\mapsto \pid r_3}7\gseq\\
  &\enc{P^3_1}{\pid p_x,\pid p_y,\pid r_1\mapsto\pid r_4}{11}\gseq\,
  \macro{eq}(\pid r_3,\pid r_4,\pid r_0,\pid r_{11})\gseq\,\com{\pid r_1.\varepsilon}{\pid r_2};\\
  &\m{if}\,\eqcom{\pid r_2}{\pid r_0}\,\m{then}\,(\com{\pid r_1.\pcont}{\pid q};\nil)\,
  \m{else}\,(\macro{inc}(\pid r_1,\pid r_2)\gseq\, T)\\
  \m{in}\,& \com{\pid r_2.\varepsilon}{\pid r_1};\,T
\end{align*}
The first line in the definition of $T$ is
$\enc{C(\Sum,P^3_2,P^3_3)}{\pid p_x,\pid p_y,\pid r_1\mapsto \pid r_3}5$;
the first five processes composed therewithin are
$$\enc{C(\Eq,C(\Sum,P^3_2,P^3_3),P^3_1)}{\pid p_x,\pid p_y,\pid r_1\mapsto \pid r_0}3\,.$$

Fully unfolding the base cases, we obtain
\begin{align*}
\enc{\Sub}{\pid p_x,\pid p_y\mapsto\pid q}0=
  \m{def}\,&T = \com{\pid p_y.\pcont}{\pid r_5};\,\com{\pid r_1.\pcont}{\pid r_6};\\
  &\m{def}\,{R} = \m{if}\,(\eqcom{\pid r_8}{\pid r_5}) \\
  &\quad\m{then}\,\com{\pid r_7.\pcont}{\pid r_3};\,\com{\pid p_x.\pcont}{\pid r_4};\\
  &\qquad\m{def}\,S=\com{\pid r_{11}.\pcont}{\pid r_0};\,\com{\pid r_1.\varepsilon}{\pid r_2};\\
  &\qquad\quad\m{if}\,(\eqcom{\pid r_2}{\pid r_0})\,\m{then}\,(\com{\pid r_1.\pcont}{\pid q};\,\nil)\,
   \m{else}\,(\com{\pid r_1.\pcont}{\pid r_2};\,\com{\pid r_2.(\s\cdot\pcont)}{\pid r_1};\,T)\\
  &\qquad\m{in}\,\m{if}\,(\eqcom{\pid r_3}{\pid r_4})\,\m{then}\,(\com{\pid r_3.\varepsilon}{\pid r_{11}};\,S)\,
   \m{else}\,(\com{\pid r_3.(\suc\pcont)}{\pid r_{11}};\,S)\\
  &\quad\m{else}\,\com{\pid r_7.\pcont}{\pid r_{10}};\,\com{\pid r_{10}.(\s\cdot\pcont)}{\pid r_9};\\
  &\qquad\com{\pid r_9.\pcont}{\pid r_7};\,\com{\pid r_8.\pcont}{\pid r_9};\,\com{\pid r_9.(\s\cdot\pcont)}{\pid r_8};\, R\\
  &\m{in}\,\com{\pid r_6.\pcont}{\pid r_7};\,\com{\pid r_9.\varepsilon}{\pid r_8};\,R\\
  \m{in}\,&\com{\pid r_2.\varepsilon}{\pid r_1};\,T
\end{align*}
Due to the way sequential composition works, the structure of the
definition of $\Sub$ is no longer clear in this fully unfolded
encoding.
\end{example}

\subsection{Soundness and Main Results}
\label{sec:sound}

By induction we now show that the construction presented above is sound.
In the proof, we use partial specifications of states.
For example, $C,\sstate{\pid p\mapsto v}\to C',\sstate{\pid q\mapsto w}$ denotes that execution of $C$ from
\emph{any} state where $\pid p$ contains value $v$ will yield $C'$ in \emph{some} state where $\pid q$
contains value $w$.

\begin{theorem}[Turing completeness of MC]
  \label{teo:sound1}
  If $f:\NN^n\to\NN$ and $f\in\mathcal R$, then, for every
  $k$, $\enc f{\pids p\mapsto\pid q}k$ implements $f$ with input
  processes $\pids p=\pid p_1,\dots,\pid p_n$ and output process $\pid q$.
\end{theorem}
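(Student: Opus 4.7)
The plan is to proceed by structural induction on the inductive definition of $f \in \mathcal R$, matching the case analysis in the definition of $\enc{\cdot}{\pids p \mapsto \pid q}k$. Throughout, I will rely on the accounting done by the $\ell$-index and the function $\pi$ to guarantee that the auxiliary process names introduced by the encoding are pairwise disjoint from each other and from the input/output processes, so inductive hypotheses about sub-encodings can be applied without state interference. For the base cases ($Z$, $S$, $P^n_m$), the encoding reduces to a single $\rname{C}{Com}$ step whose effect on $\sigma$ is exactly the intended value at $\pid q$, so termination at $\nil$ gives the result directly.

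For composition $h = C(f, g_1, \ldots, g_k)$, I would use the IH to conclude that each $\enc{g_i}{\pids p \mapsto \pid r'_i}{\ell_i}$ either terminates with $\pid r'_i$ holding $\numeral{g_i(\til x)}$ or diverges (when $g_i(\til x)$ is undefined), and similarly for the outer call to $\enc f{}{}$. Splicing these together using $\gseq$ and applying parts~(1)--(3) of Lemma~\ref{lem:seqcomp} yields both termination with $\pid q \mapsto \numeral{f(g_1(\til x), \ldots, g_k(\til x))}$ when defined, and divergence otherwise. Here I would also observe that the auxiliary processes $\pid r'_i$ do not appear in the other sub-encodings, so the state $\sigma$ restricted to the names seen by $\enc{g_j}{}{}$ for $j \neq i$ is unchanged by the execution of $\enc{g_i}{}{}$.

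For primitive recursion $h = R(f, g)$, I would perform a secondary induction on the value of $x_0 = \sigma(\pid p_0)$. The initial call to $\enc f{}{}$ establishes $\pid q' = \numeral{f(\til x)} = \numeral{h(0, \til x)}$ and $\pid r_c = \numeral 0$; then the recursive procedure $T$ either matches ($\pid r_c = \pid p_0$), firing the exit branch $\com{\pid q'.\pcont}{\pid q}$ with the correct final value, or unfolds once, at which point the IH for $g$ together with the inner IH (and $\macro{inc}$) gives $\pid q' = \numeral{g(k, h(k, \til x), \til x)} = \numeral{h(k+1, \til x)}$ and $\pid r_c = \numeral{k+1}$. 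If either $f(\til x)$ or some $g$-call diverges, Lemma~\ref{lem:seqcomp}(2) propagates divergence through $\gseq$ and Rule $\rname{C}{Ctx}$.

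The main obstacle is the minimization case $h = M(f)$. When $h(\til x) = y$, a finite induction on $y$ suffices: after $y$ unfoldings of $T$ we have $\pid r_c = \numeral y$ and $\pid q' = \numeral 0$, so the conditional takes the exit branch and $\pid q$ is set to $\numeral y$. The subtle part is the divergent case, where we must show $\enc h{}{}, \sigma \not\to^\ast \nil$. Two sub-cases arise: either $f(\til x, z_0)$ is undefined for some minimal $z_0$ with $f(\til x, z) \ne 0$ for all $z < z_0$, in which case the IH for $f$ and Lemma~\ref{lem:seqcomp}(2) give divergence inside the $(z_0\!+\!1)$-th unfolding; or $f(\til x, z)$ is defined and nonzero for every $z$, in which case I would show by induction on the number of unfoldings $n$ that $T$ executes at least $n$ loop iterations, so Rule $\rname{C}{Ctx}$ produces an infinite reduction sequence and thus $\not\to^\ast\nil$. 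The delicate point here is to rule out that an intermediate state of an auxiliary process ever accidentally enables the exit branch before the intended comparison; this is handled by checking that $\pid r_z$ is freshly overwritten to $\numeral 0$ inside $T$ and then compared against the just-computed $\pid q'$, so the branch choice depends only on $f(\til x, k)$ at the current counter value $k$, faithfully reflecting the definition of $M(f)$.
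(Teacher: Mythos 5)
Your overall strategy coincides with the paper's: structural induction on the definition of $\RR$, base cases by direct computation, composition via Lemma~\ref{lem:seqcomp}, and inner inductions on the counter value for primitive recursion and minimization; your case split for divergence of $M(f)$ also matches. There is, however, one genuine gap: the induction hypothesis you carry is too weak, and the place where you try to compensate for this does not work. In the composition case you argue that because the auxiliary processes $\pid r'_i$ are pairwise disjoint, ``the state restricted to the names seen by $\enc{g_j}{}{}$ for $j\neq i$ is unchanged by the execution of $\enc{g_i}{}{}$''. This does not follow: all the sub-encodings $\enc{g_i}{\pids p\mapsto\pid r'_i}{\ell_i}$ share the \emph{input} processes $\pids p$, so disjointness of the auxiliary names says nothing about whether running $\enc{g_i}{}{}$ overwrites some $\pid p_j$ before $\enc{g_{i+1}}{}{}$ reads it. The same issue recurs, unaddressed, in the recursion and minimization cases: the loop body of $R(f,g)$ calls $\enc g{\pid r_c,\pid q',\pid p_1,\ldots,\pid p_n\mapsto\pid r_t}{\ell_g}$ with $\pid r_c$ and the $\pid p_i$ as inputs, and the exit test $\eqcom{\pid r_c}{\pid p_0}$ only behaves correctly if $\pid p_0$ still holds $\numeral{x_0}$ after arbitrarily many iterations; likewise in $M(f)$ the counter $\pid r_c$ is an input to $\enc f{}{}$ and must survive each call intact.

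The paper closes this gap by explicitly strengthening the induction hypothesis: one proves not merely that $\enc f{\pids p\mapsto\pid q}k$ terminates with $\sigma'(\pid q)=\numeral{f(\til x)}$ when $f(\til x)$ is defined, but also that $\sigma'(\pid p_i)=\numeral{x_i}$, i.e.\ the encoding never modifies the contents of its input processes. This is a true property of the encoding (input processes only ever appear as senders, never as targets of a communication), and each inductive case verifies it easily, but it has to be part of what is proved by induction --- it cannot be recovered afterwards from name disjointness alone. With that strengthening added, the rest of your argument goes through essentially as written and coincides with the paper's proof.
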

\begin{proof}
  The proof is by induction on the definition of the set of partial
  recursive functions.
  We use a stronger induction hypothesis -- namely, that if
  $\sigma(\pid p_i)=\numeral{x_i}$ and $f(\til x)$ is defined,
  then $\enc f{\pids p\mapsto\pid q}k,\sigma\to^\ast\sigma'$ where
  $\sigma'(\pid p_i)=\numeral{x_i}$ and
  $\sigma'(\pid q)=\numeral{f(\til x)}$.
  The extra assumption that the input values are not changed during
  execution is essential for the inductive step.
  In the case where $f(\til x)$ is not defined, we assume as before that
  $\enc f{\pids p\mapsto\pid q}k,\sigma\not\to^\ast\nil$.
  \begin{enumerate}
  \item For each base case, it is straightforward to compute the
    sequence of reductions from the rules and the definition of the
    corresponding choreography.
    We exemplify this with successor.
  \[
    \enc S{\pid p\mapsto\pid q}\ell:
    \com{\pid p.(\s\cdot\pcont)}{\pid q},\sstate{\pid p\mapsto\numeral{x}}
    \to\nil,\sstate{\pid p\mapsto\numeral{x}\\ \pid q\mapsto\numeral{x+1}}
    \]
\item Let $h=C(f,g_1,\ldots,g_k):\NN^n\to\NN$.
  The result follows directly from the induction hypothesis and Lemma~\ref{lem:seqcomp}.
\item Let $h=R(f,g):\NN^{n+1}\to\NN$.
  By induction hypothesis, choreographies
  $\enc f{\pid p_1,\ldots,\pid p_n\mapsto\pid q}{\ell_f}$ and
  $\enc{g}{\pid p_1,\ldots,\pid p_{n+2}\mapsto\pid q}{\ell_g}$
  implement $f$ and $g$, respectively, for all $\pids p$, $\pid q$, $\ell_f$ and
  $\ell_g$.
  Again, assume first that $h(x_0,\til x)$ is defined.
  Then:
  \begin{align*}
    \enc h{\pid p_0,\pids p\mapsto\pid q}\ell:\quad
    &\m{def}\,{T}=(\ldots)\,\m{in}\, \enc f{\pids p\mapsto\pid q'}{\ell_f}\gseq\com{\pid r_t.\varepsilon}{\pid r_c};\,T,\sstate{\pid p_i\mapsto\numeral{x_i}}\\
    \lto{IH}^\ast{}&
    \m{def}\,{T}=(\ldots)\,\m{in}\, \com{\pid r_t.\varepsilon}{\pid r_c};\,T,%
    \sstate{\pid p_i\mapsto\numeral{x_i}\\ \pid q'\mapsto\numeral{f(\til x)}}\\
    \to{}&\m{def}\,{T}=(\ldots)\,\m{in}\, T,%
    \sstate{\pid p_i\mapsto\numeral{x_i}\\ \pid q'\mapsto\numeral{h(0,\til x)}\\ \pid r_c\mapsto\numeral{0}}
  \end{align*}

  We now prove that
  \[
    \m{def}\,{T}=(\ldots)\,\m{in}\, T,%
    \sstate{\pid p_i\mapsto\numeral{x_i}\\ \pid q'\mapsto\numeral{h(k,\til x)}\\ \pid r_c\mapsto\numeral{k}}
    \to^\ast
    \m{def}\,{T}=(\ldots)\,\m{in}\, T,%
    \sstate{\pid p_i\mapsto\numeral{x_i}\\ \pid q'\mapsto\numeral{h(k+1,\til x)}\\ \pid r_c\mapsto\numeral{k+1}}
  \]
  for all $k<x_0$.
  We only need to unfold $T$ once, so we omit the
  $\m{def}\,T=(\ldots)\,\m{in}$ wrapper in the next reduction sequence.

  Since $k<x_0$, the definition of $T$ reduces to the \m{else} branch:
  \begin{align*}
    T \to^\ast{}&\enc g{\pid r_c,\pid q',\pids p\mapsto\pid r_t}{\ell_g}\gseq\,\com{\pid r_t.\pcont}{\pid q'};\,
      \com{\pid r_c.\pcont}{\pid r_t};\,\com{\pid r_t.(\s\cdot\pcont)}{\pid r_c};\, T,
      \sstate{\pid p_i\mapsto\numeral{x_i}\\ \pid q'\mapsto\numeral{h(k,\til x)}\\ \pid r_c\mapsto\numeral k}\\
      \lto{IH}^\ast{}&\com{\pid r_t.\pcont}{\pid q'};\,%
      \com{\pid r_c.\pcont}{\pid r_t};\,\com{\pid r_t.(\s\cdot\pcont)}{\pid r_c};\, T,
      \sstate{\pid p_i\mapsto\numeral{x_i}\\ \pid q'\mapsto\numeral{h(k,\til x)}\\ \pid r_c\mapsto\numeral k\\ \pid r_t\mapsto\numeral{g(k,h(k,\til x),\til x)}}\\
      \to^\ast{} &T,\sstate{\pid p_i\mapsto\numeral{x_i}\\ \pid q'\mapsto\numeral{h(k+1,\til x)}\\ \pid r_c\mapsto\numeral{k+1}\\ \pid r_t\mapsto\numeral k}
  \end{align*}
  which establishes the thesis, ignoring the value in $\pid r_t$.

  By induction on $x_0$ we obtain that
  \begin{align*}
    \enc h{\pid p_0,\pids p\mapsto\pid q}\ell,\sstate{\pid p_i\mapsto\numeral{x_i}}
    \to^\ast{}&\m{def}\,T=(\ldots)\,\m{in}\,%
    T,\sstate{\pid p_i\mapsto\numeral{x_i}\\ \pid q'\mapsto\numeral{h(x_0,\til x)}\\ \pid r_c\mapsto\numeral{x_0}}\\
    \lto{(1)}{}&\m{def}\,T=(\ldots)\,\m{in}\,%
    \com{\pid q'.\pcont}{\pid q};\,\nil%
    ,\sstate{\pid p_i\mapsto\numeral{x_i}\\ \pid q'\mapsto\numeral{h(x_0,\til x)}\\ \pid r_c\mapsto\numeral{x_0}}\\
    \to^\ast{}&\m{def}\,T=(\ldots)\,\m{in}\,\nil%
    ,\sstate{\pid p_i\mapsto\numeral{x_i}\\ \pid q'\mapsto\numeral{h(x_0,\til x)}\\ \pid r_c\mapsto\numeral{x_0}\\ \pid q\mapsto\numeral{h(x_0,\til x)}}
  \end{align*}
  and the last process is equivalent to $\nil$.
  In $(1)$ we used the fact that the contents of $\pid r_c$ and $\pid p_0$ are
  both equal to $\numeral{x_0}$.

  If $h(x_0,\til x)$ is not defined, there are two possible cases.
  If $f(\til x)$ is not defined, then
  $\enc f{\pids p\mapsto\pid q'}{\ell_f}$ diverges from any
  state where each $\pid p_i$ contains $\numeral{x_i}$, whence so does
  $\enc h{p_0,\pids p\mapsto\pid q}{\ell}$ by
  Lemma~\ref{lem:seqcomp} and rule~\rname{C}{Ctx}.
  If $g(k,h(k,\til x),\til x)$ is undefined for some $k<x_0$, then
  divergence is likewise obtained from the fact that
  $\enc g{\pid r_c,\pid q',\pids p}{\ell_g}$ diverges from any state
  where $\pid r_c$ contains \numeral{k}, $\pid q'$ contains
  \numeral{h(k,\til x)}, and $\pid p_i$ contains $\numeral{x_i}$.

  \item The case where $h=M(f):\NN^n\to\NN$ is very similar, the
    auxiliary result now stating that
  \[
    \m{def}\,{T}=(\ldots)\,\m{in}\, T,%
    \sstate{\pid p_i\mapsto\numeral{x_i}\\ \pid r_c\mapsto\numeral k}
    \to^\ast
    \m{def}\,{T}=(\ldots)\,\m{in}\, T,%
    \sstate{\pid p_i\mapsto\numeral{x_i}\\ \pid r_c\mapsto\numeral{k+1}}
  \]
  as long as $f(\til x,k)$ is defined and different from $0$.

  The only new aspect is that non-termination may arise from the
  fact that $f(\til x,k)$ is defined and non-zero for every $k\in\NN$,
  in which case we get an infinite reduction sequence
  \begin{align*}
    \enc h{\pids p\to\pid q}\ell,\sstate{\pid p_i\mapsto\numeral{x_i}}
    \to^\ast{}&\m{def}\,{T}=(\ldots)\,\m{in}\, T,%
    \sstate{\pid p_i\mapsto\numeral{x_i}\\ \pid r_c\mapsto\numeral0}\\
    \to^\ast{}&
    \m{def}\,{T}=(\ldots)\,\m{in}\, T,%
    \sstate{\pid p_i\mapsto\numeral{x_i}\\ \pid r_c\mapsto\numeral n}\\
    \to^\ast{}&\ldots
  \end{align*}
  \end{enumerate}
\end{proof}

Since MC is a fragment of CC, this result trivially implies Turing completeness of CC.
Let $\text{\CPP{}}=\{\epp{C,\sigma}{}\mid\epp{C,\sigma}{}\mbox{ is defined}\}$
be the set of the projections of all projectable choreographies in CC.
By Theorem~\ref{cor:df-sp}, all terms in \CPP{} are deadlock-free.
By Lemma~\ref{lem:amend}, for every function $f$ we can amend $\enc f{}{}$
to an equivalent projectable choreography.
Then \CPP{} is Turing complete by Theorems~\ref{thm:oc-cc-sp} and~\ref{teo:sound1}.

\begin{corollary}[Turing completeness of \CPP{}]
  \label{cor:cp-tc}
  Every partial recursive function is implementable in \CPP{}.
\end{corollary}
\begin{proof}
Let $f\in\RR$. By Theorem~\ref{teo:sound1}, $C=\enc f{\pids p \mapsto \pid q}{}$ for any suitable
$\pids p$ and $\pid q$
implements $f$. By Lemma~\ref{lem:amend}, $\amend(C)$ is projectable and operationally equivalent to $C$.
Hence, by Theorem~\ref{thm:oc-cc-sp}, $\epp{\amend(C),\sigma}{}$ is a term in SP that correctly implements $f$.
\end{proof}

We finish this section by showing how to optimize our encoding and obtain parallel process
implementations of independent computations. If $h$ is defined by composition
from $f$ and $g_1,\ldots,g_k$, then in principle the computation of the $g_i$s
could be completely parallelised. However, $\enc{}{}{}$ does not fully
achieve this, as $\enc{g_1}{}{}$,\ldots,$\enc{g_k}{}{}$ share the processes containing the input.
We define a modified variant $\altenc{}{}{}$ of $\enc{}{}{}$ where, for
$h=C(f,g_1,\ldots,g_k)$, $\altenc h{\pids p\mapsto\pid q}\ell$ is
\[
\left\{\com{\pid p_j.\pcont}{\pid p_j^i}\right\}_{1\leq i\leq k,1\leq j\leq n}\gseq
\left\{\altenc{g_i}{\pids{p^i}\mapsto\pid r'_i}{\ell_i}\right\}_{i=1}^k\gseq\,
\altenc f{\pid r'_1,\ldots,\pid r'_k\mapsto\pid q}{\ell_{k+1}}
\]
with a suitably adapted label function $\ell$.
Now Theorem~\ref{thm:par} applies, yielding:

\begin{theorem}
  \label{thm:par-compute}
  Let $h=C(f,g_1,\ldots,g_k)$.
  For all $\pids p$ and $\pid q$, if $h(\til x)$ is defined and $\sigma$ is
  such that $\sigma(\pid p_i)=\numeral{x_i}$, then all the
  $\altenc{g_i}{\pids{p^i}\mapsto\pid r'_i}{\ell_i}$
  run in parallel in $\altenc h{\pids p\mapsto\pid q}{}$.
\end{theorem}
\begin{proof}
  By induction on $k$.
  For $k=1$ there is nothing to prove.
  By definition of $\altenc{}{}{}$,
  $\altenc{g_{k+1}}{\pids{p^{k+1}}\mapsto\pid r'_{k+1}}{\ell_{k+1}}$ and
  $\altenc{g_i}{\pids{p^i}\mapsto\pid r'_i}{\ell_i}$ do not share any process names for $i\leq k$,
  hence Theorem~\ref{thm:par} implies that 
  $\altenc{g_{k+1}}{\pids{p^{k+1}}\mapsto\pid r'_{k+1}}{\ell_{k+1}}$ and
  $\left\{\altenc{g_i}{\pids{p^i}\mapsto\pid r'_i}{\ell_i}\right\}_{i=1}^k$ run in parallel in
  $\left\{\altenc{g_i}{\pids{p^i}\mapsto\pid r'_i}{\ell_i}\right\}_{i=1}^{k+1}$,
  and hence in $\altenc h{\pids p\mapsto\pid q}{}$.
\end{proof}

This parallelism is preserved by EPP, through Theorem~\ref{thm:oc-cc-sp}.


\section{Removing Selections}
\label{sec:selection}

As we saw earlier, MC is the fragment of CC that does not contain selections, and MC choreographies can be
amended into projectable CC choreographies.
We now show that selections are not necessary to ensure projectability: we can encode
those selections introduced by amendment using only conditionals and extra communications.

\subsection{Selections as Value Communications}
Selections are used pervasively for projectability in previous choreography languages, so the fact 
that they are technically unnecessary is both interesting and somewhat unexpected.
This construction increases the size of the choreography exponentially, but both the number of 
processes and
the size of the endpoint projections grow only by a linear factor.

We motivate our construction with an example.

\begin{example}
Consider a choreography where $\pid p$ makes a choice depending on the value stored by $\pid q$, and then
$\pid r$ needs to be notified of the result (because, e.g., it is involved in further communications in one or both
of the branches).
As an example, we take $C$ to be the choreography
$\cond{\eqcom{\pid p}{\pid q}}{\sel{\pid p}{\pid r}{\lleft};C_1}{\sel{\pid p}{\pid r}{\lright};C_2}$, where $\pid r$ 
has different behaviours in $C_1$ and $C_2$.

In order to eliminate the label selections, $\pid r$ must be able to perform a conditional that is guaranteed to
choose the same branch as taken by $\pid p$.
With this in mind, we introduce an auxiliary process $\pid p^\ast$ and add communications from $\pid p$ to $\pid p^\ast$
of $\zero$ (\m{then} branch) or $\s\cdot\pcont$ (\m{else} branch).
Then $\pid r$ can recover this information by first setting its contents to $\zero$ and then comparing them with
$\pid p^\ast$; this requires another auxiliary process $\pid r^\ast$ to store $\pid r$'s value in the meantime.
Furthermore, even though we know at a global level what the result of the comparison will be, the EPP
(in particular, merging) demands that we consider both branches in both cases.
We therefore rewrite $C$ as follows.
\begin{align*}
\hspace*{-1em}\m{if}\,{\eqcom{\pid p}{\pid q}}\,
&\m{then}\left(\,%
  \com{\pid p.\zero}{\pid p^\ast};\,
  \com{\pid r.\pcont}{\pid r^\ast};\,
  \com{\pid r^\ast.\zero}{\pid r};\,
  \cond{\eqcom{\pid r}{\pid p^\ast}}{%
    \left(\com{\pid r^\ast.\pcont}{\pid r};\,C_1\right)%
  }{%
    \left(\com{\pid r^\ast.\pcont}{\pid r};\,C_2\right)%
}\right)\\
&\m{else}\left(\,%
  \com{\pid p.\s\cdot\pcont}{\pid p^\ast};\,
  \com{\pid r.\pcont}{\pid r^\ast};\,
  \com{\pid r^\ast.\zero}{\pid r};\,
  \cond{\eqcom{\pid r}{\pid p^\ast}}{%
    \left(\com{\pid r^\ast.\pcont}{\pid r};\,C_1\right)%
  }{%
    \left(\com{\pid r^\ast.\pcont}{\pid r};\,C_2\right)%
\right)}
\end{align*}
Observe that the behaviour of the processes not performing conditionals ($\pid p^\ast$ and $\pid r^\ast$) is the same
in all four branches,
while $\pid p$ and $\pid r$ have two possible behaviours that are independent of each other's choices.
This guarantees that merging will work for all projections.
\qed
\end{example}

Recall that the definition of amendment guarantees that selections only occur in branches of conditionals, and
that they are always paired and in the same order.
These properties are essential to our construction.
The fragment of CC obtained by amending choreographies in MC can be inductively generated by
\[
  C ::= \gencom; C \acspar \gencondmin \acspar \genrec \acspar \gencall \acspar \nil
\]
where $S(\pid p,\pids r,\ell,C)$ prepends selections of label $\ell$ from $\pid p$ to all processes in the
list $\pids r$.
Formally, $S$ is defined as
\[
  S(\pid p,\emptyset,\ell,C) = C
  \qquad
  S(\pid p,\pid r::\pids r,\ell,C) = \sel{\pid p}{\pid r}\ell;\,S(\pid p,\pids r,\ell,C)
\]

\begin{definition}[Selection elimination]
  Let $C$ be a choreography obtained by amending a choreography in MC.
  The encoding $\delsel C^+$
  of $C$ in MC uses processes $\pid p,\pidb p$ for each $\pid p\in\pn(C)$, 
  plus a special process $\pid z$, and is defined in Figure~\ref{fig:delsel}.
  \begin{figure}
  \begin{align*}
    &\delsel C^+ = \com{\pid p.\varepsilon}{\pid z};\,\delsel C \\[2ex]
    \hline\\
    &\delsel\nil = \nil
    \quad\delsel{\gencom;C} = \gencom;\delsel C
    \quad\delsel{\genrec} = \rec{X}{\delsel{C_2}}{\delsel{C_1}}
    \\
    &\delsel{\gencall} = \gencall
    \qquad\delsel{\gencond} = \m{if}\,\eqcom{\pid p}{\pid q}\,
                              \m{then}\,\delselp{C_1}{C_2}_1\,\m{else}\,\delselp{C_1}{C_2}_2\\[2ex]
    \hline\\
    &\delselp{C_1}{C_2} = \langle\delsel{C_1},\delsel{C_2}\rangle
                          \mbox{ if $C_1$ and $C_2$ do not begin with a selection} \\
    &\delselp{\sel{\pid p}{\pid q}{\lleft};\,C_1}{\sel{\pid p}{\pid q}{\lright};\,C_2} =\\
    &\quad\left\langle
                            \com{\pid q.\pcont}{\pidb q};\,\com{\pid p.\varepsilon}{\pid q};\,
                            \m{if}\,\eqcom{\pid q}{\pid z}\,
                            \m{then}\,\com{\pidb q.\pcont}{\pid q};\,\delselp{C_1}{C_2}_1\,
                            \m{else}\,\com{\pidb q.\pcont}{\pid q};\,\delselp{C_1}{C_2}_2,\right.\\
    &\qquad\left.
                            \com{\pid q.\pcont}{\pidb q};\,\com{\pid p.\s\pcont}{\pid q};\,
                            \m{if}\,\eqcom{\pid q}{\pid z}\,
                            \m{then}\,\com{\pidb q.\pcont}{\pid q};\,\delselp{C_1}{C_2}_1\,
                            \m{else}\,\com{\pidb q.\pcont}{\pid q};\,\delselp{C_1}{C_2}_2\right\rangle
  \end{align*}
\caption{Elimination of selections from amended choreographies.}
\label{fig:delsel}
\end{figure}
\end{definition}
\newtext
The auxiliary function $\delsel{C_1,C_2}$ is used to eliminate selections in conditionals, by simultaneously traversing both branches.
As usual, we write $\delsel{C_1,C_2}_1$ to denote the first component of the resulting pair, and likewise for the second component.
\oldtext

This definition significantly exploits the structure of amended choreographies, where selections are always
paired at the top of the two branches of conditionals.
It follows from it that $|\pn(\delsel{C}^+)|=2|\pn(C)|+1$ and that $|\delsel{C}^+|\leq 2^{|C|}$.
However, the EPP from MC to MP collapses all branches of conditionals, hence
$|\epp{\delsel{C}^+}{\pidb q}|\leq|\epp{\delsel{C}^+}{\pid q}|\leq 3|\epp{C}{\pid q}|$ for every
$\pid q\in\pn(C)$.

\begin{theorem}[Selection elimination]
  \label{thm:delsel-wd}
  For every choreography $C\in\mbox{MC}$, $\epp{\delsel{\amend(C)}}{}$ is defined.
\end{theorem}

For convenience, we split the proof of this result in several lemmas.

\begin{lemma}
  \label{lem:delsel1}
  If $\pid q\in\pids r$, then
  $\epp{\delselp{S(\pid p,\pids r,\lleft,C_1)}{S(\pid p,\pids r,\lright,C_2)}_1}{\pid q}=
  \epp{\delselp{S(\pid p,\pids r,\lleft,C_1)}{S(\pid p,\pids r,\lright,C_2)}_2}{\pid q}$.
\end{lemma}
\begin{proof}
  By induction on the length of $\pids r$.
  If $\pids r=\emptyset$, then the result is vacuously true.
  If $\pids r$ does not start with $\pid q$, then the result follows trivially from the induction hypothesis.
  So consider the case where $\pids r=\pid q::\pids{r'}$.
  In this case,
  $\delselp{S(\pid p,\pids r,\lleft,C_1)}{S(\pid p,\pids r,\lright,C_2)}$ unfolds to
  \begin{align*}
    \langle &
      \com{\pid q.\pcont}{\pidb q};\,\com{\pid p.\varepsilon}{\pid q};\,
      \m{if}\,\eqcom{\pid q}{\pid z}\,
      \m{then}\,\com{\pidb q.\pcont}{\pid q};\,\delselp{S(\pid p,\pids{r'},\lleft,C_1)}{S(\pid p,\pids{r'},\lright,C_2)}_1\\
      &\hspace*{37mm}
      \m{else}\,\com{\pidb q.\pcont}{\pid q};\,\delselp{S(\pid p,\pids{r'},\lleft,C_1)}{S(\pid p,\pids{r'},\lright,C_2)}_2,\\
      & \com{\pid q.\pcont}{\pidb q};\,\com{\pid p.\s\pcont}{\pid q};\,
      \m{if}\,\eqcom{\pid q}{\pid z}\,
      \m{then}\,\com{\pidb q.\pcont}{\pid q};\,\delselp{S(\pid p,\pids{r'},\lleft,C_1)}{S(\pid p,\pids{r'},\lright,C_2)}_1\\
      &\hspace*{39mm}
      \m{else}\,\com{\pidb q.\pcont}{\pid q};\,\delselp{S(\pid p,\pids{r'},\lleft,C_1)}{S(\pid p,\pids{r'},\lright,C_2)}_2\rangle
  \end{align*}
  and the endpoint projections of both choreographies for $\pid q$ become
    \begin{multline*}
      \asend{\pidb q}{\pcont};\,\arecv{\pid p};\,\m{if}\,\eqcom{\pcont}{\pid z}\,
    \m{then}\,\arecv{\pidb q};\,\epp{\delselp{S(\pid p,\pids{r'},\lleft,C_1)}{S(\pid p,\pids{r'},\lright,C_2)}_1}{\pid q}\\
    \m{else}\,\arecv{\pidb q};\,\epp{\delselp{S(\pid p,\pids{r'},\lleft,C_1)}{S(\pid p,\pids{r'},\lright,C_2)}_2}{\pid q}
    \end{multline*}
    which are defined and identical.
\end{proof}

\begin{lemma}
  \label{lem:delsel2}
  If $\epp{\delsel{C_1}}{\pid q}=\epp{\delsel{C_2}}{\pid q}$ and $\pid p\neq\pid q\not\in\pids r$, then
  $\epp{\delselp{S(\pid p,\pids r,\lleft,C_1)}{S(\pid p,\pids r,\lright,C_2)}_1}{\pid q}=
  \epp{\delselp{S(\pid p,\pids r,\lleft,C_1)}{S(\pid p,\pids r,\lright,C_2)}_2}{\pid q}$.
\end{lemma}
\begin{proof}
  By induction on the length of $\pids r$.
  If $\pids r=\emptyset$, then the result reduces to the hypothesis.
  Otherwise,
  $\delselp{S(\pid p,\pid r::\pids r,\lleft,C_1)}{S(\pid p,\pid r::\pids r,\lright,C_2)}$ unfolds to
  \begin{align*}
    \langle &
      \com{\pid r.\pcont}{\pidb r};\,\com{\pid p.\varepsilon}{\pid r};\,
      \m{if}\,\eqcom{\pid r}{\pid z}\,
      \m{then}\,\com{\pidb r.\pcont}{\pid r};\,\delselp{S(\pid p,\pids r,\lleft,C_1)}{S(\pid p,\pids r,\lright,C_2)}_1\\
      &\hspace*{37mm}
      \m{else}\,\com{\pidb r.\pcont}{\pid r};\,\delselp{S(\pid p,\pids r,\lleft,C_1)}{S(\pid p,\pids r,\lright,C_2)}_2,\\
      & \com{\pid r.\pcont}{\pidb r};\,\com{\pid p.\s\pcont}{\pid r};\,
      \m{if}\,\eqcom{\pid r}{\pid z}\,
      \m{then}\,\com{\pidb r.\pcont}{\pid r};\,\delselp{S(\pid p,\pids r,\lleft,C_1)}{S(\pid p,\pids r,\lright,C_2)}_1\\
      &\hspace*{39mm}
      \m{else}\,\com{\pidb r.\pcont}{\pid r};\,\delselp{S(\pid p,\pids r,\lleft,C_1)}{S(\pid p,\pids r,\lright,C_2)}_2\rangle
  \end{align*}
  and since $\pid q\neq\pid r$ there are three cases to consider.
  \begin{itemize}
  \item $\pid q$ is $\pidb r$: then both endpoint projections become
    \[\arecv{\pid r};\,
    \left(\asend{\pid r}\pcont;\,\epp{\delselp{S(\pid p,\pids r,\lleft,C_1)}{S(\pid p,\pids r,\lright,C_2)}_1}{\pidb q}\right)
    \merge\left(\asend{\pid r}\pcont;\,\epp{\delselp{S(\pid p,\pids r,\lleft,C_1)}{S(\pid p,\pids r,\lright,C_2)}_2}{\pidb q}\right)\]
    and by induction hypothesis the two processes being merged are identical, so the result is defined.
  \item $\pid q$ is $\pid z$: then both endpoint projections become
    \[\asend{\pid r}\pcont;\,
    \epp{\delselp{S(\pid p,\pids r,\lleft,C_1)}{S(\pid p,\pids r,\lright,C_2)}_1}{\pid z}
    \merge\epp{\delselp{S(\pid p,\pids r,\lleft,C_1)}{S(\pid p,\pids r,\lright,C_2)}_2}{\pid z}\]
    and again by induction hypothesis the two processes being merged are identical, so the result is defined.
  \item $\pid q$ is another process: then both endpoint projections become simply
    \[\epp{\delselp{S(\pid p,\pids r,\lleft,C_1)}{S(\pid p,\pids r,\lright,C_2)}_1}{\pid q}
    \merge\epp{\delselp{S(\pid p,\pids r,\lleft,C_1)}{S(\pid p,\pids r,\lright,C_2)}_2}{\pid q}\]
    whence the induction hypothesis guarantees again that the two processes being merged are identical, so the result is defined.\qedhere
  \end{itemize}
\end{proof}

\begin{lemma}
  \label{lem:delsel}
  For every choreography $C$ in MC and every process $\pid r$, $\epp{\delsel{\amend(C)}}{\pid r}$ is defined.
\end{lemma}
\begin{proof}
  By structural induction on $\amend(C)$.
  The only non-trivial case is that where $\amend(C)$ is $\gencondmin$, where we need to consider the possible
  cases for $\pid r$.
  If $\pid r=\pid p$, then the induction hypothesis establishes the thesis with induction over $\pids r$.
  If $\pid r\in\pids r$, then Lemma~\ref{lem:delsel1} guarantees that both branches of the conditional will be
  equal, hence the endpoint projection is again defined.
  Finally, if $\pid r\not\in\pids r$, then by definition of amendment
  $\epp{\amend(C_1)}{\pid r}=\epp{\amend(C_2)}{\pid r}$, whence Lemma~\ref{lem:delsel2} applies and
  establishes the thesis as in the previous case.
\end{proof}

\begin{proof}[Proof (Theorem~\ref{thm:delsel-wd}).]
  Straightforward consequence of Lemma~\ref{lem:delsel}.
\end{proof}

The operational semantics of $C$ and $\delsel{\amend(C)}$ are related by the following results, which are
straightforward to prove by structural induction.

\begin{lemma}
  Choreographies $C$ and $\delsel{\amend(C)}^+$ are equivalent wrt $\pn(C)$.
\end{lemma}

\begin{lemma}
  If $C,\sigma\to C',\sigma'$ and $\sigma^+$ is such that $\sigma^+(\pid p)=\sigma(\pid p)$ for
  $\pid p\in\pn(C)$ and $\sigma^+(\pid z)=\varepsilon$, then
  $\delsel{\amend(C)},\sigma^+\to^\ast\delsel{C'},{\sigma'}^+$ for some ${\sigma'}^+$ similarly related to $\sigma'$.
  Furthermore, the latter reduction consists of only one step except for the case when the former uses rule
  \rname{C}{Cond}.

  Conversely, if $\delsel{\amend(C)},\sigma^+\to C',\sigma'$, then $C,\sigma\to C'',\sigma''$ where
  $C',\sigma'\to^\ast\delsel{\amend(C'')},{\sigma''}^+$.
  Furthermore, the latter reduction is non-empty only in the case when the former uses rule \rname{C}{Cond}.
\end{lemma}

\begin{corollary}
  With the notation of the previous lemma, if $C,\sigma\to^\ast C',\sigma'$, then
  $\delsel{\amend(C)}^+,\sigma^+\to^\ast\delsel{\amend(C')},{\sigma'}^+$.
\end{corollary}

As a consequence, the set $\text{\MPP{}}=\{\epp{C,\sigma}{}\mid\epp{C,\sigma}{}\mbox{ is defined}\}$ of
projections of minimal choreographies is also Turing complete.
\begin{corollary}[Turing completeness of \MPP{}]
  \label{cor:mp-tc}
  Every partial recursive function is implementable in \MPP{}.
\end{corollary}

\subsection{Discussion}

\begin{figure}
  \centering
  \resizebox{\textwidth}{!}{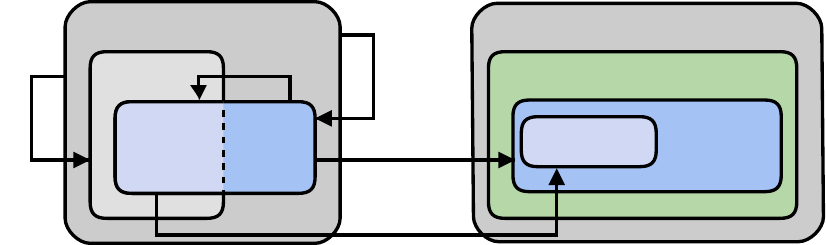}
%
%
  \caption{Summary of the different mappings between calculi.}
  \label{fig:inclusions}
\end{figure}

\newtext
Figure~\ref{fig:inclusions} displays the connections among the different calculi that we studied 
in this work. We now discuss some consequences of these connections.
\oldtext

The results in this section show that label selection is not a necessary primitive in a 
choreography calculus, and thus we could take MC (rather than CC) as our core choreography language.
Furthermore, the construction in \S~\ref{sec:encoding} shows that selections are not needed for implementing
computable functions in CC; they are used only for obtaining projectable choreographies, via amendment.

There is however a strong argument for including label selections in a core choreography calculus.
The advantages of eliminating selections are a simpler choreography language, a simpler definition of EPP
(without merging), and a simpler process language (without selection and branching).
The main drawback is that eliminating a selection needed for projectability makes the choreography  exponentially
larger and requires the addition of extra processes and communications; this significantly changes the structure of
the choreography, potentially making it unreadable.
Selections are also present in virtually all choreography models~\cite{BCDLDY08,CHY12,CM13,HYC08,PGGLM15,QZCY07},
therefore we believe that a core model such as CC should have them (in addition to the drawback we mentioned).

\newtext
The combination of our results on amendment and selection elimination 
suggests the viability of a particular implementation strategy for choreographic programming. 
Programmers could write choreographies without label selections, ignoring how information about 
control flow is propagated between processes. Then, our results could be used to translate 
these choreographies to process implementations in a simple language that does not include label 
communications (like \MPP{}). This would simplify the target language, since it would not require
primitives for selection and branching. The exponential growth of the intermediate choreography 
representation can be bypassed by using shared data structures for the syntax tree, since the 
generated choreographies contain a lot of duplicate terms.
\oldtext

\newtext
However, such a methodology makes use of amendment, and relying exclusively on amendment removes an 
important ability provided in CC and all other standard choreography
calculi: deciding at which point of execution selections should be performed.
In more expressive languages than CC, processes can perform complex internal 
computations~\cite{ourPCstuff}.
For example, assume that $\pid p$ had to assign tasks to other two processes $\pid r$ and $\pid s$ based on a 
condition. In one case, $\pid r$ would run a slow task and $\pid s$ a fast one; otherwise, $\pid r$ would run a fast 
task and $\pid s$ a slow one. In this case, $\pid p$ should begin by sending a selection to the process with the slow 
task and then by sending it the necessary data for its computation, before it sends the selection to the process with 
the fast task.
(Note that it is only amendment that causes this issue, since our selection elimination procedure 
preserves the control flow specified by a choreography with selections.)
\oldtext


\section{Minimality in Choreographies}
\label{sec:implications}
We now discuss our choice of primitives for CC, showing that it is indeed a minimal core language for
choreographic programming.
We first show that if we remove or simplify any primitive from MC we are no longer able to compute all partial
recursive functions using projectable choreographies.
Since label selection can be encoded in MC, we also discuss why it should be included in a core language.
Then we discuss the implications of our results for other choreography languages.

\subsection{Minimality of MC}
\label{sec:minimality}
We proceed by analysing each primitive of MC.
Recall that Turing completeness of MC is a pre-requisite for the Turing completeness of choreography projections.
In most cases, simplifying MC yields a decidable termination problem (thus breaking Turing completeness).
We start with the easiest terms.

\begin{lemma}
  Let $C$ be a choreography with no exit points.
  Then $C$ does not terminate.
\end{lemma}
\begin{proof}
  Straightforward by structural induction on $C$.
\end{proof}

\begin{lemma}
  Let $C$ be a choreography with no communications.
  If $C$ implements a function $f:\NN^n\to\NN$, then, for all inputs $\vec x\in\NN^n$, either $f(\vec x)=x_i$
  for some $i$ or $f(\vec x)$ is undefined.
\end{lemma}
\begin{proof}
  By the semantics of MC, only communication actions can change the state $\sigma$, hence structural induction
  on $C$ shows that $C\sigma\not\to C'\sigma'$ with $\sigma'\neq\sigma$.
  The thesis is a consequence of the definition of function implementation.
\end{proof}

Observe further that the syntax of expressions is trivially minimal: $\zero$ (zero) is the only terminal,
removing $\pcont$ makes termination decidable (since values become statically defined), and likewise for
$\s$ (since no new values can be computed).

\begin{lemma}
  Let $C$ be a choreography with no recursive definitions.
  Then $C$ always terminates.
\end{lemma}
\begin{proof}
  Without recursive definitions, rule \rname{C}{Unfold} is never applicable, hence execution $C$ always
  reduces the size of the choreography.
\end{proof}

Again we observe that recursive definitions are already severely restricted: MC supports only tail recursion
and definitions are not parameterised.

Removing conditionals naturally also breaks Turing completeness.
\begin{theorem}
  \label{lem:nocond-dec}
  Let $C$ be a choreography with no conditionals.
  Then termination of $C$ is decidable and independent of the initial state.
\end{theorem}
\begin{proof}
  The second part is straightforward, since rule~\rname{C}{Cond} is
  the only rule whose conclusion depends on the state.

  For the first part, we reduce termination to a decidable graph
  problem.
  Define $\mathcal G_C=\langle V,E\rangle$ to be the graph whose
  set of vertices $V$ contains $C$ and $\nil$, and is closed under the
  following rules.
  \begin{itemize}
  \item if $\eta;C\in V$, then $C\in V$;
  \item if $\genrec\in V$, then $C_1\in V$;
  \item if $\m{def}\,X=C_2\,\m{in}\,\eta;C_1\in V$,
    then $\genrec\in V$;
  \item if $\m{def}\,X=C_2\,\m{in}\,\eta;X\in V$,
    then $\m{def}\,X=C_2\,\m{in}\,\eta;C_2\in V$.
  \end{itemize}
  This set is finite: all rules add smaller choreographies to $V$,
  except the last one, which can only be applied once for each
  variable in $C$.

  There is an edge between $C_1$ and $C_2$ iff
  $C_1,\sigma\to C_2,\sigma'$ for some $\sigma,\sigma'$ without using
  rule $\rname{C}{Eta-Eta}$.
  This is decidable, as the possibility of a reduction does not depend
  on the state (as observed above).
  Also, if there is a reduction from $C_1$, then there is always an
  edge from $C_1$ in the graph, as swapping communication actions
  cannot unblock execution.

  Then $C$ terminates iff there is a path from $C$ to $\nil$, which
  can be decided in finite time, as $\mathcal G_C$ is finite.
\end{proof}

More interestingly, limiting processes to evaluating only their own local values in conditions makes
termination decidable.
Intuitively, this is because a process can only hold a value at a time and thus no process can compare its
current value to that of another process anymore.

\begin{theorem}
  \label{lem:ifc-dec}
  If the conditional is replaced by $\m{if}\,{\pid p.\pcont=v}\,\m{then}\,{C_1}\,\m{else}\,{C_2}$,
  where $v$ is a value, and rule $\rname{C}{Cond}$ by
\[\infer
{
	\m{if}\,{\pid p.\pcont=v}\,\m{then}\,{C_1}\,\m{else}\,{C_2}, \sigma \ \to \   C_i, \sigma
}
{
	i = 1 \ \text{if } \sigma(\pid p) = v,\
	i = 2 \ \text{otherwise}
}\, ,
\] then termination is decidable.
\end{theorem}

\begin{proof}
  We first show that termination is decidable for processes of the
  form $\m{def}\,X=C_2\,\m{in}\,X$ and comparison with $0$.
  The proof is by induction on the number of recursive definitions in
  $C_2$.

  Consider first the case where $C_2$ has no recursive definitions,
  and let $P$ be the set of all process names occurring in $C_2$.
  We define an equivalence relation on states by
  $$\sigma\equiv_P\sigma'\mbox{ iff }(\forall \pid p\in P,\,\sigma(\pid p)=\varepsilon\mbox{ iff }\sigma'(\pid p)=\varepsilon)\,.$$
  The vertices of the graph are the $2^{|P|}$ equivalence classes of
  states wrt~${\equiv_P}$, plus $\top$.
  Note that ${\equiv_P}$ is compatible with the transition relation
  excluding rule \rname{C}{Eta-Eta}: for any choreography $C$ using
  only process names in $P$, $\sigma_1\equiv\sigma_2$ and
  $C,\sigma_i\to\sigma'_i$, then $\sigma'_1\equiv\sigma'_2$.

  The edges in the graph are defined as follows.
  There is an edge from $[\sigma]$ to $[\sigma']$ if
  $C_2,\sigma\to X,\sigma'$, and there is an edge from $[\sigma]$ to
  $\top$ if $C_2,\sigma\to\nil,\sigma'$ or $C_2,\sigma\to Y,\sigma'$
  for some $Y\neq X$.
  This is constructible, as reductions in $C_2$ are always finite, and
  well-defined, as alternative reduction paths always end in the same
  state. 

  Since reductions are deterministic and ${\equiv_P}$ is compatible
  with reduction, every node has exactly one edge leaving from it,
  except for $\top$.
  Therefore, we can decide if $\m{def}\,X=C_2\,\m{in}\,X$ terminates
  from an initial state $\sigma$ by simply following the path starting
  at $\sigma$ and returning Yes if we reach $\top$ and No if we pass
  some node twice.
  This procedure terminates, as the graph is finite.

  For the inductive step, proceed as above but add an extra node to
  the graph, labeled $\bot$.
  When constructing the edges in the graph, if $C_2$ reduces to a
  variable $Y$ different than $X$, we split into two cases.
  If $Y$ is not bound in $C_2$, we proceed as in the previous case.
  If $Y$ is bound, then we apply the induction hypothesis to the
  choreography $\m{def}\,Y=C_Y\,\m{in}\,Y$ (where $Y=C_Y$ is the same
  as in $C_2$) to decide whether the reduction from $Y$ will
  terminate; if this is not the case, we add an edge to $\bot$,
  otherwise we proceed with the simulation.
  At the end, we return No in the case that the path followed leads
  to~$\bot$.

  The general case follows, as $C$ has the same behaviour as
  $\m{def}\,X=C\,\m{in}\,X$ for some $X$ not occurring in $C$.

  If we allow comparisons with other values, the strategy is the same,
  but the relation $\equiv_P$ has to be made finer.
  The key observation is that only a finite number of values can be
  used in comparisons, so we can identify states if they only differ
  on processes whose contents are larger than all values used in
  conditionals.
\end{proof}

Summarising, simplifying MC in any of the ways described above makes it no longer a representative model
of choreographic programming.

\subsection{CC as a Core Language}
In the Appendix, we formally present an embedding of CC into the choreography model from~\cite{CM13}, which we refer to in this work as Channel Choreographies (ChC).
ChC is a very rich choreography language designed to be projected to a variant of the session-typed $\pi$-calculus~\cite{BCDLDY08}, which we refer to as Channel Processes (ChP).
Communications in ChP are based on channels, instead of process names as in SP.
This layer of indirection means that a process performing an I/O action does not know
\newtext
which other process it is going to communicate with,
\oldtext
and that there can be race conditions on the usage of channels.
ChC comes with a typing discipline for checking that the usage of channels specified in a choreography does
not cause errors in the process code generated by EPP.
In particular, we show that our embedding always yields well-typed and projectable ChC choreographies.

\paragraph{Channel Choreographies}
Our formal translation from CC to ChC shows that many primitives of ChC are not needed to achieve Turing completeness,
including: asynchronous communications, creation of sessions and processes, channel mobility, parameterised
recursive definitions, arbitrary local computation, unbounded memory cells at processes, and multiparty sessions.
While useful in practice, these primitives come at the cost of making the formal treatment of ChC
technically involved.
In particular, ChC (as well as its implementation Chor) requires a sophisticated type system, linearity analysis, and
definition of EPP to ensure correctness of projected processes.
These features are not needed in CC.

\paragraph{Other Choreography Languages}
The language WS-CDL from W3C~\cite{wscdl} and the formal models inspired by it -- e.g.,~\cite{CHY12} -- are very 
similar to ChC, and a similar translation from CC could be formally developed, with similar implications as above. The 
same applies to the choreography language developed in~\cite{PGGLM15}, which adds higher-order features to 
choreographies to achieve runtime adaptation. Finally, the language of compositional choreographies presented 
in~\cite{MY13} is an extension of ChC, and therefore our translation applies directly. This implies that adding 
modularity to choreographies does not add any computational power, as expected.

\paragraph{Process Languages}
Our embedding of CC in ChC identifies a fragment of ChP, via EPP, that is also Turing complete.
This fragment is isomorphic to value-passing CCS~\cite{CCS}: since we only have one channel,
we can interpret the constructs $\cpgensend$ and $\cpgenrecv$ as sending and receiving over a 
channel with name $k\role{AB}$. We thus obtain a deadlock-free and Turing complete fragment of 
value-passing CCS.
\newtext
Deciding whether a given CCS process lies in this fragment is undecidable by Rice's Theorem, but 
it is possible to define a procedure that establishes deadlock-freedom for a large class of such 
processes. Both results are thoroughly discussed in~\cite{ourextractionstuff}, where we also
present a procedure to extract a choreography that represents a particular network in SP.
\oldtext

Since ChC has also been translated to the Jolie programming language~\cite{saveriostuff,MGZ14}, our reasoning
also applies to the latter and, in general, to service-oriented languages based on message correlation. Namely, our 
results identify a deadlock-free and Turing complete fragment of Jolie.

%


\section{Related Work and Discussion}
\label{sec:related}
\newtext
\paragraph{Choreographies}
The origins of choreographic programming~\cite{M13:phd} stem from the efforts of using 
choreographic descriptions for the specification of interactions among web services. 
In particular, the Web Services Choreography Description Language by the W3C, WS-CDL for short, 
is a choreography language for describing the observable interaction behaviour of web services from 
a global viewpoint~\cite{wscdl}.
The usefulness of formal methods was recognised early on in the development of WS-CDL, and spawned 
a successful line of research based on choreographies. This led to the introduction of the notion 
of EndPoint Projection (EPP), which has been formalised in process 
calculi~\cite{CHY07,QZCY07}, and also adopted in implementations~\cite{savara:website}.

Already in the early works on choreographies, it was evident that choices at the choreography 
level (like our conditionals) play an important role for EPP: their wrong programming may 
lead to projections that behave incorrectly, as we discussed in \S~\ref{sec:selections} 
and \S~\ref{sec:epp}.
The solution of checking whether the other processes could distinguish which branch they should 
execute based on the labels that they receive -- or, equivalently, operation or method names, as in 
service-oriented or object-oriented programming respectively -- was given together with some early 
proposals of EPP~\cite{CHY07,LGMZ08}. The idea of merging was introduced in~\cite{CHY07}, for a 
different calculus than CC.

In the first choreography languages~\cite{wscdl,CHY07}, the construct for 
performing a communication includes both a carried value and a label. This corresponds to having 
both value communication and label selections in a single construct. It is a choice motivated by 
practice, since that is how invocations work in service-oriented computing (where labels are the 
operations offered by a service) and object-oriented programming (where labels are the names of the 
methods offered by an object).

Very soon afterwards, however, a series of choreography models with
separate constructs for value communications and label selections (as we have in 
CC) started emerging. This transition was influenced by the interaction with the 
research line on session types for process calculi, which have the same distinction~\cite{HVK98}. 
There are two main reasons for having this distinction: it makes the model more foundational, since 
each construct is more primitive; and it allows for studying the two primitives separately, which 
is useful since labels are statically defined whereas communicated values are computed at runtime 
(for example, the fact that the two labels are distinct can be statically computed is what allows 
merging to be defined).
The interaction between session types and choreographies spawned a prolific research area where 
choreographies are used as types for protocols in process calculi and concurrent 
languages~\cite{HLVCCDMPRT16,Aetal16}.
The seminal work in this direction is the theory of Multiparty Session Types~\cite{HYC08,HYC16}.
The theory of session types have been recently found to be in a propositions-as-types 
correspondence with linear logic~\cite{CP10,W12}, where the action of making a choice 
(corresponding to a label selection) is again distinct. This correspondence extends 
naturally to choreographies used as types~\cite{CMSY17}.

Thus, selections are important for practical reasons -- for 
example, they model the selection of an operation, or method, from the interface offered by a 
service, or an object -- and for theoretical reasons -- they are primitives identified in 
foundational theories like session types and linear logic. Indeed, all current implementations 
of choreography languages have them~\cite{YHNN13,chor:website,DGGLM17}.

Our encoding from CC to MC and Corollary~\ref{cor:mp-tc} show, for the first time, that as far as 
the computational expressivity of choreography projections is concerned we need only consider value 
communications. Technically, the communication of a choice can be simulated by communicating 
appropriate values and then using local conditionals to understand which choice was transmitted by 
the sender.
This is possible because in choreographic programming we have complete control of the local 
computations performed by each process, differently from works where choreographies are used as 
types (which abstract from computation).

Which model between CC and MC should we then use when studying choreographies?

If the aim is to determine whether a given choreography language is Turing 
complete, then MC is the obvious choice. For example, if a language has (an equivalent version of) 
MC as a fragment, then it is obviously Turing complete. Also, the simple EPP for MC gives a 
procedure for constructing a process implementation for all possible computations.

Another common aim is to observe the communications enacted by a choreography, either statically 
(e.g., for verification) or at runtime (e.g., monitoring). Analysing the flow of choice 
communications is a key element of all behavioural type systems for choreographies to 
date~\cite{CHY12,CM13,MY13,CMS14,G16,chor:website,savara:website}, which aim at checking that a 
choreography correctly implements some protocol specifications.
For these, CC is a representative model that can be used as foundations to start from.
For other kinds of analyses, like information flow in choreographies~\cite{LNN15}, selections are 
not strictly necessary and MC may be the better choice.
Similar considerations apply to monitoring, since typically that involves checking whether the 
actions performed by a choreographed system respect some specifications: if the specifications 
contain information about explicit choice communications, then selections are necessary (CC), 
otherwise MC may offer a simpler base model.
In general, CC is preferrable for all frameworks that have mechanisms where 
operations (or methods) can be selected out of the interface offered by some component (like a 
service or an object).

\paragraph{Choreographic Programming and Applications of our Development}
Essentially, choreographic programming applies the ideas of choreographies and EPP to 
synthesise correct-by-construction implementations of concurrent 
processes.
Languages for choreographic programming are typically more complicated than choreography languages 
for specifications. For example, we have 
choreographic programming languages for: service-oriented computing~\cite{CM13}, including 
notions such as dynamic networks; adaptable computing~\cite{DGGLM17}, including runtime code 
updates; and cyber-physical systems~\cite{LNN16,LH17}, including broadcasts and failures.

Thus, so far, most expressivity results have been investigated for choreographic specifications. For 
example, we know of a strong characterisation result for multiparty session types: a variant of 
multiparty session types corresponds to communicating finite state machines~\cite{BZ83} that 
respect the property of multiparty compatibility~\cite{DY13}.
By contrast, for choreographies used as concrete
implementations (our interest here), this question has barely been scratched before this work:
session-typed choreographies with finite traces correspond to proofs in multiplicative-additive
linear logic~\cite{CMS14}. The language in~\cite{CMS14} does not include any
constructs for programming repetitive behaviour.
To the best of our knowledge, MC is the first choreography language to be identified as minimally 
Turing complete.

Just as it happened for process calculi, the field of choreographic programming is evolving into a 
workshop of different languages developed for different purposes~\cite{M15}. In this context, it 
makes sense to develop new notions by following a minimalistic approach, such that they remain as 
easy to adopt in other choreography models as possible.
The results that we have presented here in extended form have already been useful in this sense. We 
briefly report on some developments.

%
In~\cite{ourasyncstuff}, we showed that CC is unable to encode asynchronous communication, 
and then discussed which primitives to add in order to be able to mimic this behaviour in a 
synchronous semantics.
Thanks to CC, in addition to the technical convenience of working in a setting that is as simple as 
possible, our study again identifies a minimal set of primitives that must be present in a 
choreography language to build such an encoding, and the encoding itself can be structurally 
extended to most choreography languages.

In~\cite{CM17:ice}, we formally defined what a ``good'' asynchronous semantics for 
choreographies is, and show that it is possible to equip MC with such a semantics, obtaining a 
version of MC with an asynchronous reduction semantics.
Again, our construction is simple yet general and is easy to adopt in more sophisticated 
choreography languages.

In~\cite{ourextractionstuff}, we considered the problems of (i) deciding whether a process 
implementation can be described by a choreography, and (ii) synthesising such a choreography in the 
affirmative case; we showed that, for CC and SP, both problems are solvable in exponential time.
Since all languages for choreographic programming introduced so far include CC, the technical 
challenges identified for these problems are present in all such languages, and the 
identified exponential complexity is the current lower bound for all of them.

We have also used CC as a basis to obtain expressive choreography languages for 
more practical purposes. In~\cite{ourPCstuff}, we studied how to extend CC with general sequential 
composition -- a feature that is unavailable in most choreography calculi equipped with recursion -- 
together with other commonly occurring primitives that are useful in practice (process spawning and 
name passing). The resulting calculus is simple, yet expressive enough to capture different 
parallel computing algorithms, like parallel versions of gaussian elimination and fast fourier 
transform~\cite{lcf:fm:16a}.
These two works illustrate how concepts defined for CC (for example, the notion of ``running in 
parallel'', Definition~\ref{defn:parrun}) are naturally applicable in more complex settings.

Our results support the claim that there are substantial advantages to gain from first studying 
choreographic programming in itself, abstracting from features that are specific to other models 
(like channels in process calculi), and then applying the obtained insights to particular scenarios.
We believe that CC is a useful step in this direction, and that it will serve as a stepping stone 
for the future 
developments of the paradigm of choreographic programming in general.

\oldtext


\smallpar{Full $\beta$-reduction and Nondeterminism.}
Execution in CC is nondeterministic due to the swapping of communications allowed by the structural 
precongruence $\precongr$.
This recalls full $\beta$-reduction for
$\lambda$-calculus, where sub-terms can be evaluated whenever possible.
However, the two mechanisms are actually different.
Consider the choreography
$
C \ \defeq \ \com{\pid p.\pcont}{\pid q}; \com{\pid q.\zero}{\pid r};\nil
$.
If CC supported full $\beta$-reduction, we should be able to reduce the second communication before the
first one, since there is no data dependency between the two. Formally, for some $\sigma$:
$
C, \sigma \ \to \ \com{\pid p.\pcont}{\pid q}; \nil, \sigma[\pid r \mapsto \zero]
$.
However, this reduction is 
disallowed by our semantics: rule $\rname{C}{Eta-Eta}$ cannot be applied because $\pid q$ is present in both
communications.
This difference is a key feature of choreographies, stemming from their practical origins: controlling
sequentiality by establishing causalities using process identifiers is important for the implementation
of business processes~\cite{wscdl}.
For example, imagine that the choreography $C$ models a payment transaction and
that the message from $\pid q$ to $\pid r$ is a confirmation that $\pid p$ has sent its credit card information to
$\pid q$; then, it is a natural requirement that the second communication happens only after the first.
Note that we would reach the same conclusions even if we adopted an asynchronous messaging semantics for SP, since the
first action by $\pid q$ is a blocking input.

While execution in CC can be nondeterministic, computation results are deterministic as in many other choreography 
languages~\cite{CM13,CMS14,MY13}:
if a choreography terminates, the result will always be the same regardless of how its execution is scheduled,
recalling the Church--Rosser Theorem for the $\lambda$-calculus~\cite{CR36}.
Nondeterministic computation is not necessary for our results. Nevertheless, it can be easily added to CC.
Specifically,
we could augment CC with the syntax primitive
$C_1 \oplus^{\pid p} C_2$
and the reduction rule $C_1 \oplus^{\pid p} C_2 \ \to \ C_i$ for $i=1,2$.
Extending SP with an internal choice $B_1 \oplus B_2$ and our definition of EPP is straightforward:
in SP, we would also allow $B_1 \oplus B_2 \ \to \ B_i$ for $i=1,2$, and define
$\epp{C_1 \oplus^{\pid p} C_2}{\pid r}$
to be
${\epp{C_1}{\pid r}}\oplus {\epp{C_2}{\pid r}}$
if $\pid r=\pid p$ and
$\epp{C_1}{\pid r} \merge \epp{C_2}{\pid r}$
otherwise.

\smallpar{Merging and Amendment.}
Amendment was first studied by~\citet{LMZ13} for
a simple language with finite traces (thus not Turing complete). Our definition is different, since it uses merging for 
the first time.

%
%
We could define our amendment procedure in different ways, e.g., by propagating
 selections from a process to another as a chain, rather than from one
 process to all the others. This would not influence our results.
%

%
\smallpar{Actors and Asynchrony.}
Processes in SP communicate by using direct references to each other, recalling actor systems.
However, there are notable differences: communications are synchronous and
inputs specify the intended sender. The first difference comes from minimality:
asynchrony would add possible behaviours to CC, which are unnecessary to establish Turing completeness. We 
leave an investigation of asynchrony in CC to future work.
The second difference arises because CC is a choreography calculus, and communication primitives in choreographies 
typically express both sender and receiver.

\paragraph{Register Machines}
The computational primitives in CC recall those of the
Unlimited Register Machine (URM)~\cite{Cutland80},
but CC and URM differ in two main aspects.
First, URM programs contain go-to statements, while CC supports only tail recursion.
Second, in the URM there is a single sequential program manipulating the cells, whereas
in CC computation is distributed among the various cells (the
processes), which operate concurrently.

Simulating the URM is an alternative way to prove Turing completeness of CC.
However, our proof using partial recursive functions is more direct and gives an algorithm to 
implement any function
in CC, given its proof of membership in $\RR$.
It also yields the natural interpretation of parallelisation stated
in Theorem~\ref{thm:par-compute}.
Similarly, we could establish Turing completeness of CC using only a bounded number of
processes. However, such constructions encode data using G\"odel numbers, which is not in the spirit 
of our declarative
notion of function implementation. They also restrict concurrency, breaking 
Theorem~\ref{thm:par-compute}.


%

\section*{Acknowledgements}
We thank Hugo Torres Vieira and Gianluigi Zavattaro for their useful comments.
Montesi was supported by CRC (Choreographies for Reliable and efficient Communication software), grant no.\ 
DFF--4005-00304 from the Danish Council for Independent Research.


\section*{References}

\bibliographystyle{elsarticle-harv}
\bibliography{biblio}

\appendix
\newtext
\section{CC as a Core Language: Channel Choreographies}
\oldtext
\label{sec:translation}

CC is representative of the body of previous work on choreographic programming, where choreographies are
used for implementations, for example~\cite{CHY12,CM13,chor:website,MY13,PGGLM15,wscdl}.
All the primitives of CC (and therefore of MC) can be encoded in such languages.
Thus, we obtain a notion of function implementation for these languages, induced by that for CC, for which they
are Turing complete.

In this section we make this claim precise for the model in~\cite{CM13}, which we refer to in this work as
Channel Choreographies (ChC).
ChC is designed to be projected to a variant of the session-typed $\pi$-calculus~\cite{BCDLDY08}, which we
refer to as Channel Processes (ChP).
Communications in ChP are based on channels, instead of process names as in SP.
This layer of indirection means that a process performing an I/O action does not know
\newtext
which other process it is going to communicate with,
\oldtext
and that there can be race conditions on the usage of channels.
ChC comes with a typing discipline for checking that the usage of channels specified in a choreography does
not cause errors in the process code generated by EPP.

\subsection{Channel Choreographies}


\paragraph{Syntax}
We report the full syntax of ChC in Figure~\ref{fig:chc_syntax}.
%
\begin{figure}
\begin{align*}
C & ::= \eta; C \quad \mid \quad \scgencond \quad \mid \quad \nil
\\
& \phantom{::=} \mid\  \scgenrec \quad \mid \quad \scgencall \quad \mid\quad \mlgbox{\res r C}
\\[1ex]
\eta & ::= \mlgbox{\scgenstart} \quad \mid \quad \scgencom 
\\
& \phantom{::=} \mid\  \scgensel \quad \mid \quad \mlgbox{\scgendel}
\\[1ex]
D & ::= \pid p(\til x,\til k)
\qquad\qquad
E ::= \pid p(\til e, \til k)
\end{align*}
\caption{Channel Choreographies, Syntax.}
\label{fig:chc_syntax}
\end{figure}

Several terms are unnecessary for our translation; we \lgbox{box} such terms in our presentation of the
syntax.
In the original presentation of ChC, expressions $e$ may contain any basic values (integers, strings, etc.) or
computable functions, making the language trivially Turing complete. Also, labels $l$ range over an infinite set. Here, 
for our development, we need only to consider expressions of the form $\zero$ or $\suc x$, and labels $\lleft$ and 
$\lright$ (as in CC).
The major difference between CC and ChC is the usage of \emph{public channels} $a$ and \emph{session channels}
$k$.
Public channels are used to create new processes and channels at runtime, whereas session channels are used
for point-to-point communications between processes.
We only need a single session channel in our development.

An interaction $\eta$ in ChC can be either a start, a value communication, a selection, or a delegation.
In a start term $\scgenstart$, the processes $\pids p$ on the left synchronise at the public channel $a$ in
order to create a new private session $k$ and spawn some new processes $\pids q$ ($k$ and $\pids q$ are bound
to the continuation).
Each process is annotated with the role it plays in the created session.
Roles are ranged over by $\role A, \role B, \role C, \ldots$.
They are used in the typing discipline of
\newtext
ChC
\oldtext
to check whether sessions are used according to protocol
specifications, given as multiparty session types~\cite{HYC16}.

In a value communication $\scgencom$, process $\pid p$ sends its evaluation of expression $e$ over session $k$ to 
process $\pid q$, which stores the result in its local variable $x$; the name $x$ appearing under $\pid q$ is bound to 
the continuation.
Differently from CC, where each process has only one memory cell accessed through the placeholder $\pcont$, in
ChC each process has an unbounded number of cells (variables).
Selections in ChC, of the form $\scgensel$, are very similar to those in CC: the only difference is that we
also have to write which role each process plays and the session used for communicating.
In a delegation term $\scgendel$, process $\pid p$ delegates its role $\role C$ in session $k'$ to process
$\pid q$; delegation in ChC is a typed form of channel mobility, inspired by the $\pi$-calculus.

In a conditional, process $\pid p$ chooses a continuation based on whether the expressions $e$ and $e'$
evaluate to the same value according to its own local state.
The restriction term $\res r C$ is standard and binds the scope of $r$ (which can be either a process name
$\pid p$ or a session channel name $k$) to $C$.
Finally, in the definition of a recursive procedure, the parameters $\til D$ indicate which processes are used
in the body of the procedure and which variables and sessions are used by each process.
In the invocation of a procedure $\scgencall$, each process can pass generic expressions as parameters to itself.

\paragraph{Semantics}
ChC was originally presented with an asynchronous semantics~\cite{CM13}.
We first present our results using only the (simpler) synchronous variant of the semantics of ChC, and defer the discussion
of the general asynchronous case to the end of this section.
This semantics is given in terms of a reduction relation, presented in Figure~\ref{fig:chc_semantics}.
%
\begin{figure}
\begin{eqnarray*}
&\infer[\rname{Ch}{Com}]
{
	\scgencomv;C \
	\to
	\
	C[v/x@\pid q]
}
{
}
\qquad
\infer[\rname{Ch}{Sel}]
{
	\scgensel;C \ \to \ C
}
{}
\\[1ex]
&\mlgbox{
  \infer[\rname{Ch}{Del}]
{
	\scgendel; C \ \to \ C
}{}
}
\qquad
\mlgbox{
\infer[\rname{Ch}{Start}]
{
	\scgenstart; C \ \to \ \res{\pids q,k} C
}
{}
}
\\[1ex]
&\infer[\rname{Ch}{Cond}]
{
	\scgencondv \ \to \   C_i
}
{
	i = 1 \ \text{if } v = w,\ 
	i = 2 \ \text{otherwise}
}
\\[1ex]
&\infer[\rname{Ch}{Struct}]
{
	C_1\ \to \  C'_1
}
{
	C_1\, \precongr \,C_2
	& C_2 \ \to \  C'_2
	& C'_2 \, \precongr\, C'_1
}
\\[1ex]
&\infer[\rname{Ch}{Ctx}]
{
	\scgenrec \ \to \ 
	\rec{X(\til D)}{C_2}{C'_1}
}
{
	C_1 \ \to \   C'_1
}
\end{eqnarray*}
\caption{Channel Choreographies, Semantics.}
\label{fig:chc_semantics}
\end{figure}

Rule $\rname{Ch}{Com}$ is the key rule, where the value sent from a process $\pid p$ is received by a process
$\pid q$.
Technically, this is modelled by replacing variable $x$ with $v$ in the continuation $C$, but only when it
appears under the process name $\pid q$ (the \emph{smart substitution} $C[v/x@\pid q]$).
Rule $\rname{Ch}{Cond}$ models an internal choice: $\pid p$ chooses a continuation depending on whether the two
values $v$ and $w$ are the same.
Rules $\rname{Ch}{Del}$ and $\rname{Ch}{Start}$ implement the informal semantics of delegation and start described 
earlier; we do not use them in our development.
The other rules are similar to those of CC.
The structural precongruence $\precongr$ is defined as expected, following the same intuition as that for CC.
In particular, it supports swapping two terms whenever they involve disjoint process names.

As expected, ChC offers a deadlock-freedom-by-design property in the style of
Theorem~\ref{thm:df-by-design}~\cite{CM13}.

\subsection{Channel Processes (ChP)}
We now present Channel Processes (ChP), the target language that choreographies in ChC can be projected to.
We discuss only the terms used in our work (see~\cite{CM13} for a complete presentation).

\paragraph{Syntax}
The relevant part of the syntax of processes ($P,Q$) is reported in Figure~\ref{fig:chp_syntax}.
Binding occurrences are denoted by the usage of round parentheses.
%
\begin{figure*}
\begin{align*}
P,Q & ::= \quad
\cpgensend;P \cpspar  \cpgenrecv;P
\cpspar
\cpgensel;P \cpspar  
\cpgenbranch
\\
& \phantom{::=} |\quad\  P \parp Q \ \cpspar\  \cpgencond
\cpspar \cpgenrec \cpspar \cpgencall \cpspar 
\nil
\end{align*}
\caption{Channel Processes, Syntax (selection).}
\label{fig:chp_syntax}
\end{figure*}

Terms $\cpgenreq$, $\cpgenacc$ and $\cpgenserv$ are used to start a new session $k$ by synchronising on the
public channel $a$, and model respectively: the process requesting the creation of the session (responsible
for playing the first role in $\roles A$); a process accepting to play role $\role A$ in the session; and,
finally, a replicated process that will spawn a fresh process for playing role $\role A$.
In the first line we have the terms for in-session communications.
In term $\cpgensend;P$, as role $\role A$ on session $k$, we send the value of expression $e$ to $\role B$ on
the same session; then, we proceed as $P$.
Dually, term $\cpgenrecv;P$ receives a message for role $\role B$ from role $\role A$ on session $k$ and
stores it in variable $x$.
Terms $\cpgensel$ and $\cpgenbranch$ model, respectively, branch selection and offering.
Finally, terms $\cpgendels$ and $\cpgendelr$ capture channel mobility.
The other terms are the standard parallel composition, procedure definition, procedure call, conditional
(restricted to checking for equality), and terminated process.

\paragraph{Semantics}
As before, we discuss only the synchronous semantics of ChP.
We discuss only the communication rules, shown in Figure~\ref{fig:chp_semantics}, as all the other rules are
standard -- see~\cite{KY13}.
As in typical calculi for multiparty sessions equipped with roles, each role in a session is a distinct
communication endpoint.
Therefore, a send action on a session $k$ from a role $\role A$ towards a role $\role B$ synchronises with a
receive action on the same session $k$ by the target role $\role B$ wishing to receive from the sender role
$\role A$.
%
\begin{figure*}
\begin{eqnarray*}
  &\infer[\rname{CP}{Com}]
        {
	\cpgensendv;P \ \parp \ \cpgenrecv;Q
	\quad \to \quad
	P \ \parp \ Q[v/x]
        }{}
\\[1ex]
&\infer[\rname{CP}{Sel}]
      {
	\cpgenselj;P \ \parp \ \cpgenbranchQ
	\quad\!\! \to \quad\!\!
	P \ \parp \ Q_j
      }{
	(j \in I)
      }
\end{eqnarray*}
\caption{Channel Processes, Semantics (selection).}
\label{fig:chp_semantics}
\end{figure*}

\subsection{Endpoint Projection and Typing}

As for CC, the Endpoint Projection from ChC to ChP is defined by first defining how to project the behaviour
of a single process.
The projection of a process $\pid p$ from a choreography $C$, written $\epp{C}{\pid p}$, is inductively
defined on the structure of $C$ in a similar way as the behaviour projection given in \S~\ref{sec:epp}.
The complete EPP procedure from ChC to ChP is technically involved, because the start term $\scgenstart$ found
in ChC enables the reuse of the same services exposed at a public channel $a$ for spawning processes with
potentially different behaviour.
However, since start terms and restriction of names are unnecessary for our development, we can use a much
simpler definition -- see~\cite{CM13} for the general case.
We report the rules for projecting value communications and conditionals in Figure~\ref{fig:chc_epp}.
%
\begin{figure*}
\begin{align*}
&\epp{\scgencom;C}{\pid r} =
	\begin{cases}
		\cpgensend;\epp{C}{\pid r} & \text{if } \pid r = \pid p \\
		\cpgenrecv;\epp{C}{\pid r} & \text{if } \pid r = \pid q \\
		\epp{C}{\pid r} & \text{otherwise}
	\end{cases}
\\[1ex]
&\left[\!\!\left[\begin{array}{l}
      \m{if}\ \pid p.(e = e')
      \\ \m{then}\ C_1\ \m{else}\ C_2
    \end{array}\right]\!\!\right]_{\pid r} =
  \begin{cases}\!\!\!
    \begin{array}{l}
      \cond{e=e'\\\!}{\epp{C_1}{\pid r}}{\epp{C_2}{\pid r}}
    \end{array} & \text{if } \pid r = \pid p \\\\[-2mm]
    \epp{C_1}{\pid r} \merge \epp{C_2}{\pid r} & \text{otherwise}
  \end{cases}
\end{align*}
\caption{Channel Choreographies, EndPoint Projection (relevant cases).}
\label{fig:chc_epp}
\end{figure*}

The merging operator $P \merge Q$ works as in CC: it is isormorphic to $P$ and $Q$ aside from input
branches with distinct labels, which are instead included in a larger input branching.


\subsection{Typing ChC}
Differently from CC, the EPP of a choreography in ChC does not always yield correct results.
Consider the following choreography:
\[
C \ = \ \scsel{\pid p[\role A]}{\pid q[\role B]}{k}{\lleft};\  \sccom{\pid q[\role B].\zero}{\pid p[\role A].x}{k};\ \scsel{\pid r[\role A]}{\pid q[\role B]}{k}{\lleft}
\]
The choreography $C$ above always terminates by reaching $\nil$ (by using rules $\rname{Ch}{Sel}$, then
$\rname{Ch}{Com}$, and then $\rname{Ch}{Sel}$ again).
However, its EPP (albeit defined) may get stuck:
\begin{displaymath}
\begin{array}{l@{\ }c@{\quad }c@{}c}
& \epp{C}{\pid p} & & \epp{C}{\pid r}\\
\epp{C}{} =  & \overbrace{\cpsel{k}{\role A}{\role B}{\lleft};\ \cprecv{k}{\role A}{\role B}{x}}
& \parp & \overbrace{ \cpsel{k}{\role A}{\role B}{\lleft} } 
\\[2mm]
& \multicolumn{3}{c}{
\!\!\! \!\!\parp \ \underbrace{ \cpbranch{k}{\role B}{\role A}{\{ \lleft:\ \cpsend{k}{\role B}{\role A}{\zero}; \
\cpbranch{k}{\role B}{\role A}{\{ \lleft: \nil \} \}} } }
}
\\
& \multicolumn{3}{c}{\phantom{\parp \ } \epp{C}{\pid q} }
\end{array}
\end{displaymath}

Above, we have a race between the projections of process $\pid p$ and process $\pid r$ for the selection of
label $\lleft$ offered by process $\pid q$.
This is because both $\pid p$ and $\pid r$ play the same role $\role A$ in session $k$ and therefore the
receiver (the projection of process $\pid q$) cannot distinguish them.
In the case where the race is won by the projection of process $\pid r$, not only do we obtain a reduction not
defined by the originating choreography, but we even get into a deadlocked situation:
\[
\epp{C}{}\ \to\  \cpsel{k}{\role A}{\role B}{\lleft};\ \cprecv{k}{\role A}{\role B}{x} \ \parp \ \cpsend{k}{\role B}{\role A}{\zero}; \
\cpbranch{k}{\role B}{\role A}{\{ \lleft: \nil \} } 
\]
To avoid such situations, ChC comes with a typing discipline based on multiparty session types that guarantees
the absence of races.

A typing judgement for CC has the form $\tjudge{\Gamma;\Theta}{C}{\Delta}$, where $\Delta$ types the usage of
sessions, $\Theta$ the ownership of roles by processes, and $\Gamma$ variables and public channels.

Formally, the typing environment $\Gamma$ contains variable typings of the form $x@\pid p:S$, typing variable
$x$ at $\pid p$ with data type $S$ (which can only be $\tnat$ in our case).
An environment $\Theta$ contains ownership typings of the form $\pid p:k[\role A]$, read ``process $\pid p$
owns role $\role A$ in $k$'' (when writing $\Theta,\pid p:k[\role A]$, it is assumed that no other process
owns the same role for the same session in $\Theta$).
The environment $\Delta$ contains session typings of the form $k:G$, where $G$ is a global type~\cite{HYC16}.
The syntax of global types is given in Figure~\ref{fig:gt_syntax}.
%
%
\begin{figure}
\begin{align*}
G & ::= \gtgencom; G \ | \ \gtgenbranch \ | \  \gtgenrec;G \ | \ \gtgencall \ | \  \gtend
\\[2mm]
S & ::= \tnat \  | \ \tstring \ | \ \ldots \qquad\qquad l ::= \lleft \ |\  \lright
\end{align*}
\caption{Global Types, Syntax.}
\label{fig:gt_syntax}
\end{figure}

A global type $G$ abstracts a communication between two roles in a session.
A value communication is abstracted by $\gtgencom$ (we restrict values to be natural numbers).
A global type $\gtgenbranch$ allows any selection from $\role A$ to $\role B$ of one of the labels $l_i$,
provided that then the session proceeds as specified by the corresponding continuation $G_i$.
The other terms are for recursion ($\gtgenrec$ and $\gtgencall$) and termination ($\gtend$).

We discuss the most relevant typing rules for ChC, given in Figure~\ref{fig:chc_types}.
\begin{figure}
\begin{eqnarray*}
&\infer[\rname{T}{Com}]
{
	\tjudge{\Gamma;\Theta}{\scgencom;C}{\Delta, k: \gtcom{\role A}{\role B}{S}; G}
}
{
	\Gamma \seq e@\pid p: S
	\!&\!
	\Theta \seq \pid p: k[\role A], \pid q: k[\role B]
	\!&\!
	\tjudge{\Gamma,x@\pid q:S;\Theta}{C}{\Delta, k: G}
}
\\[1ex]
&\infer[\rname{T}{Sel}]
{
	\tjudge{\Gamma;\Theta}{\scgenselj;C}{\Delta, k: \gtgenbranch}
}
{
	\Theta \seq \pid p: k[\role A], \pid q: k[\role B]
	&
	j \in I
	&
	\tjudge{\Gamma;\Theta}{C}{\Delta, k: G_j}
}
\\[1ex]
&\infer[\rname{T}{Cond}]
{
	\tjudge{\Gamma;\Theta}{\scgencond}{\Delta}
}
{
	\tjudge{\Gamma;\Theta}{C_1}{\Delta}
	&
	\tjudge{\Gamma;\Theta}{C_2}{\Delta}
}
\\[1ex]
&\infer[\rname{T}{Call}]{
	\tjudge{\Gamma,X(\til D):(\Gamma';\Theta;\Delta');\Theta}{X\langle\til D\rangle}{\Delta,\Delta'}
}{\Delta \mbox{ \m{end} only}
&
\Gamma' \subseteq \Gamma
}
\\[1ex]
&\infer[\rname{T}{Def}]{
	\tjudge{\Gamma;\Theta}{\scgenrec}{\Delta}
}{
\begin{array}{ll}
	\tjudge{\Gamma,X(\til D):(\Gamma';\Theta';\Delta'); \Theta}{C_1}{\Delta} &
	\Gamma' \subseteq \Gamma
	\\[1ex]
	\tjudge{\Gamma', X(\til D):(\Gamma';\Theta';\Delta'); \Theta'}{C_2}{\Delta'} &
	\Theta' \subseteq \Theta
	\end{array}
}
\end{eqnarray*}
\caption{Channel Choreographies, Typing Rules (selection).}
\label{fig:chc_types}
\end{figure}
Rule $\rname{T}{Com}$ checks that, in a value communication on session $k$, the sender and receiver processes
own their respective roles in session $k$ ($\Theta \seq \pid p: k[\role A], \pid q: k[\role B]$), that the
protocol for session $k$ expects a communication for their respective roles ($k: \gtcom{\role A}{\role B}{S};
G$), and that the expression sent by the sender has the expected type $S$.
Rule $\rname{T}{Sel}$ checks that a selection uses one of the labels expected by the protocol for the session
($j\in I$).
Rule $\rname{T}{Cond}$ is standard, requiring both branches to have the same typing; observe that different
communication behaviour in the two branches may still occur, because of rule $\rname{T}{Sel}$.
Rules $\rname{T}{Call}$ and $\rname{T}{Def}$ type, respectively, recursive calls and recursive procedures. These rules 
are simplified compared to the presentation in~\cite{CM13}, taking into account that our encoding always calls 
procedures with exactly the same arguments (processes and variables) as they are declared.

Well-typedness is preserved by reductions.
Furthermore, using this type system we get an operational correspondence result for EPP from ChC to ChP.
\begin{theorem}[Operational Correspondence (ChC $\leftrightarrow$ ChP)~\cite{CM13}]
  \label{thm:oc-chc-chp}
  Let $C$ be a well-typed channel choreography without start subterms (terms of the form $\scgenstart$) and
  such that its endpoint projection $\epp{C}{}$ is defined.
  Then:
  \begin{itemize}
  \item (Completeness) $C \to C'$ implies $\epp{C}{} \to \succ \epp{C'}{}$;
  \item (Soundness) $\epp{C}{} \to P$ implies $C \to C'$ and $\epp{C'}{} \prec P$.
  \end{itemize}
  where $\prec$ is the pruning relation defined in~\cite{CM13}.
\end{theorem}


As for CC, 
the EPP of a well-typed channel choreography never deadlocks.

\subsection{Embedding CC into ChC}

Defining an embedding from CC to ChC is nontrivial, as the communication primitives of CC and ChC are
different.
In CC, messages are passed directly between processes: each process knows whom it is sending to or receiving
from in each communication step; in ChC, communication is between roles in a session channel.
To translate core choreographies into channel choreographies, we therefore assign to each process a role
syntactically identical to its name, and perform all communication over a fixed channel $k$.

Conditional terms are also not directly translatable, as ChC evaluates guards in a single process.
For this reason, each translated process uses two variables: $x$, storing its internal value, and $y$, used
exclusively for temporary storage of a value required for a test.

For recursion, we recall that  $\pn(C)$ returns the set of process names in $C$.

\begin{definition}[Embedding of CC in ChC]
  The \emph{embedding} of a core choreography $C$ in ChC is $\emb{C}$, inductively defined as follows.
  \begin{align*}
    \emb{\gencom;C} &= \com{\pid{p[p]}.e[x/\pcont]}{\pid{q[q]}.x}:k;\emb{C} \\
    \emb{\gensel;C} &= \com{\pid{p[p]}}{\pid{q[q]}}:k[l];\emb{C} \\
    \emb{\cond{\eqcom{\pid p}{\pid q}}{C_1}{C_2}} &= \com{\pid{q[q]}.x}{\pid{p[p]}.y}:k;\\
      & \qquad\m{if}\,{\pid p.(x = y)}\,\m{then}\,{\emb{C_1}} \,\m{else}\,{\emb{C_2}} \\
    \emb{\rec{X}{C_2}{C_1}} &= \big(\m{def}\,{X(\ast)} = {\emb{C_2}}\big)[{\pnast{C_2}}/\ast]\, \m{in}\,{\emb{C_1}} \\
    \emb\gencall=\call X\langle\ast\rangle
    \qquad&\qquad
    \emb\nil=\nil
  \end{align*}
  where $\pnast A=\{\pid p(\{x,y\},k)\mid\pid p\in\pn(A)\}$.
\end{definition}


\begin{lemma}
  \label{lem:cong}
  Let $C$ and $C'$ be core choreographies.
  Then $C \precongr C'$ if and only if $\emb{C} \precongr \emb{C'}$.
\end{lemma}
\begin{proof}
  For the direct implication, observe that all structural precongruence rules in CC become valid instances of
  precongruence in ChC when mapped by $\emb\cdot$.
  Conversely, given a structural precongruence rule in ChC, if its arguments are in the image of $\emb\cdot$,
  then the rule can be pulled back to a valid precongruence in CC.
\end{proof}

In order to compare the semantics of core and channel choreographies, we need to take the state into account.
This is done by viewing each state as a substitution, replacing all free occurrences of $x$ with the actual
content of the process it belongs to.

\begin{definition}[Substitution induced by state]
  Let $C$ be a core choreography and $\sigma$ be a state.
  The substitution $\sigma_C$ is defined as $\sigma_C=[\sigma(\pid p)/x@\pid p\mid\pid p\in\pn(C)]$, and the
  embedding of $C$ in ChC via $\sigma$ is the channel choreography $\emb C_\sigma=\sigma_C(\emb C)$.
\end{definition}

Below, $\to^+$ denotes a chain of one or more applications of $\to$, and $\to^?$ denotes identity or one
application of $\to$.
\begin{theorem}[Operational Correspondence (CC $\leftrightarrow$ ChC)]
  \label{thm:oc-cc-chc}
  Let $C$ be a choreography in CC. Then, for all $\sigma$:
  \begin{itemize}
  \item\emph{(Completeness)} $C,\sigma \to C',\sigma'$ implies $\emb{C}_\sigma\to^+ \emb{C'}_{\sigma'}$;
  \item\emph{(Soundness)} $\emb{C}_\sigma\to C'$ implies $C,\sigma \to C^\ast,\sigma^\ast$ and
    $C'\to^?\emb{C^\ast}_{\sigma^\ast}$.
  \end{itemize}
\end{theorem}
\begin{proof}
  \begin{itemize}
  \item\emph{(Completeness)} We analyze the possible cases for the rule justifying $C\lto\eta C'$.
    %
    The only non-trivial case is rule \rname{C}{Com}.
    Suppose $C$ is $\gencom;C'$.
    If $C,\sigma\to C',\sigma'$, then
    $\emb{C}_\sigma=\com{\pid p.\sigma_C(e[x/\pcont])}{\pid q.x:k};\emb{C'}_{\sigma''}$ where
    $\sigma''=\sigma_C\setminus\{\sigma(\pid q)/x@\pid q\}$ can make a transition to
    $\emb{C'}_{\sigma''}[e[\sigma_C(\pid p)/\pcont]/x@\pid q]$, which coincides with $\emb{C'}_{\sigma'}$.

  \item\emph{(Soundness)} The proof is again by case analysis on the transition from $\emb{C}_\sigma$ to $C'$,
    noting that this cannot involve delegation or start actions.
    Most cases are straightforward, except when the transition is a communication obtained from translating a
    conditional.
    In this case, $C'$ must execute the full conditional action, and $\emb{C^\ast}_{\sigma^\ast}$ reduces to
    $C'$ by application of rule \rname{Ch}{Cond}.
  \end{itemize}
\end{proof}

We now define a notion of function implementation in ChC.
Since the semantics of ChC does not have state, this definition is slightly different than that for CC.
The embedding of CC is then a Turing complete fragment of ChC, which we show to be projectable.

\begin{definition}[Implementation in ChC]
  A channel choreography $C$ \emph{implements} a function $f:\NN^n\to\NN$ with input variables $\pid
  p_1.z_1$,\ldots,$\pid p_n.z_n$ and output variable $\pid q.z$ if, for all $x_1,\ldots,x_n\in\NN$:
  \begin{itemize}
  \item if $f(\til x)$ is defined, then
    $C[\wtil{\numeral{x_i}/z_i@\pid p_i}]\to^\ast\nil$, and $\pid q$ receives
    exactly one message with $\numeral{f(\til x)}$ as the value
    transmitted;
  \item if $f(\til x)$ is not defined, then
    $C[\wtil{\numeral{x_i}/z_i@\pid p_i}]\not\to^\ast\nil$, and $\pid q$ never
    receives any messages.
  \end{itemize}
\end{definition}

\begin{theorem}[Soundness]
  \label{thm:sound2}
  If $f:\NN^n\to\NN$ is a partial recursive function, then $\emb{\enc f{\pids p\mapsto\pid q}{}}$ implements
  $f$ with input variables $\wtil{\pid p.x}$ and output variable $\pid q.x$.
\end{theorem}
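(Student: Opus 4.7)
The plan is to combine Theorem~\ref{teo:sound1} (soundness of the MC encoding) with Theorem~\ref{thm:oc-ac-cc} (operational correspondence between MC and CC). The embedding is essentially a syntactic rewrite that preserves the computation structure, so most of the work has already been done; the main task is to match up the two notions of ``implementation'' (stateful in MC, substitution-based in CC).

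First I would fix $f:\NN^n\to\NN$ in $\RR$, input values $x_1,\ldots,x_n$, and pick any state $\sigma$ with $\sigma(\pid p_i)=\numeral{x_i}$ for all input processes. By the definition of $\emb{\cdot}_\sigma$, the channel choreography $\emb{\enc f{\pids p\mapsto\pid q}{}}_\sigma$ is obtained from $\enc f{\pids p\mapsto\pid q}{}[\wtil{x/\numeral{x_i}@\pid p_i}]$ by additionally substituting $x$ at each \emph{auxiliary} process with $\sigma$'s (irrelevant) value there. A short observation is needed here: inspecting the encoding in \S~\ref{sec:encoding}, every auxiliary process first performs an input action before reading its variable $x$, so these extra substitutions have no effect on the computation. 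Hence it suffices to analyse $\emb{\enc f{\pids p\mapsto\pid q}{}}_\sigma$.

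For the case where $f(\til x)$ is defined, Theorem~\ref{teo:sound1} gives $\enc f{\pids p\mapsto\pid q}{},\sigma \to^\ast \nil,\sigma'$ with $\sigma'(\pid q)=\numeral{f(\til x)}$. I would apply the completeness part of Theorem~\ref{thm:oc-ac-cc} repeatedly along this reduction chain, obtaining $\emb{\enc f{\pids p\mapsto\pid q}{}}_\sigma \to^+ \emb{\nil}_{\sigma'} = \nil$. To see that $\pid q$ receives exactly one message with value $\numeral{f(\til x)}$, I would verify by structural induction on the construction of $\enc f{\pids p\mapsto\pid q}{}$ that $\pid q$ occurs as receiver in exactly one communication subterm, namely the final communication that transmits the result. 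Tracking this communication through the embedding and using the correspondence between the final state $\sigma'(\pid q)$ and the value transmitted in the CC communication yields the required property.

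For the case where $f(\til x)$ is undefined, Theorem~\ref{teo:sound1} gives $\enc f{\pids p\mapsto\pid q}{},\sigma \not\to^\ast \nil$. Suppose for contradiction that $\emb{\enc f{\pids p\mapsto\pid q}{}}_\sigma \to^\ast \nil$. Then by iterated application of the soundness part of Theorem~\ref{thm:oc-ac-cc}, we would obtain a reduction $\enc f{\pids p\mapsto\pid q}{},\sigma \to^\ast A',\sigma''$ whose embedding further reduces to $\nil$; since $\nil$ in CC only arises from $\nil$ in MC under the embedding, $A'\precongr\nil$, contradicting Theorem~\ref{teo:sound1}. The fact that $\pid q$ never receives any messages follows from the same structural observation as above: $\pid q$'s unique receiving occurrence is inside the final output communication, which is only reached when the whole choreography terminates; if the CC embedding ever fired this communication, completeness together with Theorem~\ref{thm:df-by-design} would let us lift execution back to MC reaching $\nil$, again contradicting Theorem~\ref{teo:sound1}.

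The main obstacle I foresee is the careful bookkeeping for auxiliary processes: in MC they have well-defined state from the outset, whereas in CC their variable $x$ is formally unbound until their first input. Arguing that this discrepancy is harmless requires a syntactic invariant on the encoding (first use of $x$ at any auxiliary process is preceded by an input that binds it), which is plausible by inspection of each clause of $\enc{\cdot}{}{}$ but needs to be stated and verified explicitly; everything else is a routine combination of the two operational correspondence results.
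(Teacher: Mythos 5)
Your proposal is correct and follows exactly the route the paper intends: the paper states Theorem~\ref{thm:sound2} without an explicit proof, relying on the combination of Theorem~\ref{teo:sound1} with the operational correspondence of Theorem~\ref{thm:oc-ac-cc}, which is precisely your strategy. The additional bookkeeping you identify (the harmlessness of the extra substitutions at auxiliary processes and the structural fact that $\pid q$ occurs as receiver exactly once) fills in details the paper leaves implicit, and does so correctly.
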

\begin{proof}
  Consequence of the proof of Theorem~\ref{teo:sound1} and of Theorem~\ref{thm:oc-cc-chc}: as the only free
  variables in $\emb{\enc f{\pids p\mapsto\pid q}{}}$ are $\pid p_i.x$ for $1\leq i\leq n$,
  $\emb{\enc f{\pids p\mapsto\pid q}{}}[\numeral{x_i}/z_i@\pid p_i]$ coincides with
  $\emb{\enc f{\pids p\mapsto\pid q}{}}_\sigma$ whenever $\sigma$ contains $\numeral{x_i}$ at each process
  $\pid p_i$.
\end{proof}

\newtext
We conclude by combining our results to characterise a Turing-complete and deadlock-free fragment of ChP.
\oldtext

Let \ChPP{} be the smallest fragment of ChP containing the projections of all typable and projectable
choreographies in ChC, formally: $\ChPP{}=\{\epp{C}{}\mid\epp{C}{}\mbox{ is defined}\}$.
By Theorem~\ref{thm:oc-chc-chp}, all terms in \ChPP{} are deadlock-free.

We now show that \ChPP{} is also Turing powerful.
The development is similar to that for \CPP{} (\S~\ref{sec:sound}), but we need two additional steps.
First, the operational correspondence theorem for the EPP of ChC (Theorem~\ref{thm:oc-chc-chp}) needs the
projected channel choreography to be well-typed.
Fortunately, this is always the case for the channel choreographies obtained by embedding amended CC terms.

\begin{lemma}
  Let $C$ be a core choreography and $\sigma$ a state.
  Then $C' = \emb{\amend(C)}_\sigma$ implies $\tjudge{\Gamma;\Theta}{C'}{\Delta}$ for some $\Gamma$, $\Theta$
  and $\Delta$.
\end{lemma}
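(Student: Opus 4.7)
The plan is to proceed by structural induction on $A$, since $\amend$ preserves the outer syntactic shape and only inserts selection chains inside conditional branches, and to observe that the substitution induced by $\sigma$ only replaces each variable $x$ by a ground term of type $\tnat$, which does not affect typability. I fix once and for all the contexts $\Theta = \{\pid p : k[\role p] \mid \pid p \in \pn(A)\}$ and $\Gamma = \{x@\pid p : \tnat,\, y@\pid p : \tnat \mid \pid p \in \pn(A)\}$; the real work is to construct a global type $G_A$ such that $\Delta = \{k : G_A\}$ makes $C$ typable.

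The easy inductive cases are routine: for $\nil$ take $G = \gtend$; for $\gencom;C'$ take $\gtcom{\role p}{\role q}{\tnat}; G_{C'}$ and apply $\rname{T}{Com}$; for $\gensel;C'$ take a singleton branching $\com{\role p}{\role q}:\{l:G_{C'}\}$ and apply $\rname{T}{Sel}$; for recursion and procedure calls use a recursive global type built from the body, with a fresh $\gtrecvar t$ replacing the call sites.

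The critical case is the conditional, whose embedding starts with a value communication $\com{\pid{q[q]}.x}{\pid{p[p]}.y}:k$ followed by $\m{if}\,\pid p.(x = y)\,\m{then}\,\emb{C_1}\,\m{else}\,\emb{C_2}$. The initial communication contributes a prefix $\gtcom{\role q}{\role p}{\tnat}$ to the global type, after which I need a single global type $G$ making both branches of the CC conditional type to $\{k : G\}$. By Definition~\ref{def:amend}, the branches $C_1, C_2$ have been guarded by complementary chains of selections $\sel{\pid p}{\pid r_1}{\lleft};\cdots;\sel{\pid p}{\pid r_n}{\lleft}$ and $\sel{\pid p}{\pid r_1}{\lright};\cdots;\sel{\pid p}{\pid r_n}{\lright}$, respectively, with identical target processes. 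I therefore build $G$ as a nested sequence of global-type branchings $\com{\role p}{\role r_i}:\{\lleft : G_1^{(i)}, \lright : G_2^{(i)}\}$, placing at the innermost position the global types inductively obtained for the remainders of the two original branches. Rule $\rname{T}{Sel}$ discharges each selection in turn, picking $\lleft$ in the \emph{then} side and $\lright$ in the \emph{else} side, so that both branches of the CC conditional type to $\{k : G\}$ as required.

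The main obstacle is handling the processes that amendment does \emph{not} notify via selections: inside $C_1$ and $C_2$ they may still exhibit legitimately different, but mergeable, behaviours. The resolution is an auxiliary lemma, proved by induction on the merge operator $\merge$, showing that whenever $\epp{C_1}{\pid r} \merge \epp{C_2}{\pid r}$ is defined the two local behaviours admit a common session typing under a global type whose branching offer at each choice point is the union of the two sides' label sets. With that lemma in hand, the inductive global types for the residual portions of the amended branches can be chosen to coincide, the conditional case concludes, and the recursion and call cases then go through by a standard fixed-point argument over recursive global types, yielding the desired derivation $\tjudge{\Gamma;\Theta}{C}{\Delta}$.
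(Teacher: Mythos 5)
Your proposal matches the paper's proof: the paper likewise fixes $\Theta$ by giving each process its own role, assigns $\tnat$ to every variable in $\Gamma$, and takes $\Delta = k:G$ with $G$ built inductively by abstracting the communications of $C$, noting that amendment lets each conditional be typed either with a single shared global type (when merging succeeds) or with a two-label branching global type (when selections were inserted). Your version is considerably more detailed than the paper's four-sentence argument --- in particular your auxiliary merge lemma fills in the ``same global type'' case that the paper only asserts --- but the approach is essentially the same.
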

\begin{proof}
  Choosing $\Theta$ is trivial, as each process has its own role.
  For $\Gamma$, we assign type $\tnat$ to all variables.
  Finally, $\Delta = k:G$, where $G$ is inferred by abstracting the communications in $C$.
  The inductive construction of the latter is always possible since we applied $\amend$, so we can type each
  conditional with either the same global type or a branching global type with two labels.
\end{proof}

Second, we need to know that the embedding of a projectable core choreography is also projectable in ChC.
\begin{lemma}
  \label{lem:emb_a_projectable}
  If $C$ is projectable, then $\emb{C}_{\sigma}$ is projectable for any $\sigma$.
\end{lemma}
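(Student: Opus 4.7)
The plan is to proceed by structural induction on $A$, strengthening the induction hypothesis to state that, for every process $\pid r$, the CC-projection $\epp{\emb{A}_\sigma}{\pid r}$ exists and has the same branching skeleton as the MC-projection $\epp{A}{\pid r}$: the same procedure definitions and calls, the same nesting of conditionals, and the same selection and branching subterms with identical label sets, arranged in the same tree positions. Only the local communication primitives differ, the MC ones using process names and the CC ones using the fixed session channel $k$ with roles chosen to be syntactically identical to process names. This structural correspondence, not just projectability itself, is what will be needed to carry merges through the embedding.

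For the base cases ($\nil$ and $\gencall$) the claim is immediate. For a value communication $\gencom;A'$ and a label selection $\gensel;A'$, the embedding produces an analogous CC construct (over session $k$, with roles $\role p$, $\role q$), and the CC projection rules emit the dual local action for $\pid p$, $\pid q$, or fall through to $\epp{\emb{A'}_\sigma}{\pid r}$ otherwise; the inductive hypothesis delivers both projectability and the skeleton invariant. The recursion case is similar: the body and the continuation are projected independently, the annotation $\pnast{A}$ carries the process names into the CC parameter list, and the skeleton invariant lifts to the $\m{def}\ldots\m{in}\ldots$ form.

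The only nontrivial case is the conditional $\cond{\eqcom{\pid p}{\pid q}}{A_1}{A_2}$, whose embedding is $\com{\pid{q[q]}.x}{\pid{p[p]}.y}:k$ followed by the local CC test $\m{if}\,\pid p.(x=y)\,\m{then}\,\emb{A_1}\,\m{else}\,\emb{A_2}$. For $\pid p$ the projection is a receive from $\pid q$ guarding a local conditional whose branches are $\epp{\emb{A_1}_\sigma}{\pid p}$ and $\epp{\emb{A_2}_\sigma}{\pid p}$, both defined by the inductive hypothesis; for $\pid q$ it is a send to $\pid p$ followed by $\epp{\emb{A_1}_\sigma}{\pid q}\merge\epp{\emb{A_2}_\sigma}{\pid q}$; for any other $\pid r$ it is $\epp{\emb{A_1}_\sigma}{\pid r}\merge\epp{\emb{A_2}_\sigma}{\pid r}$. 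Projectability of $A$ gives us that the MC merges $\epp{A_1}{\pid r}\merge\epp{A_2}{\pid r}$ are defined for all relevant $\pid r$, and by the strengthened hypothesis the two branch projections have identical branching skeletons to their CC counterparts. Since the merge operator in CC is defined by the same clauses as in MC (isomorphism up to input branchings with distinct labels), the CC merges are defined exactly when the MC ones are, and the resulting skeleton is again inherited, keeping the invariant.

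The main obstacle is stating the structural-correspondence invariant precisely enough that the conditional case reduces to a clean comparison of branching skeletons; in particular, one has to check that the extra prefix communication added by the embedding of a conditional does not disturb the merges (it appears only at $\pid p$ and $\pid q$, where no merge is required, and is stripped out at every other process). Once this invariant is in place, each remaining inductive step is a routine mirror image between the MC and CC projection rules.
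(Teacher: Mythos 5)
The paper states this lemma without proof, so there is nothing to compare against directly; your argument is the natural one and it is correct. A plain induction claiming only ``$\epp{\emb{A}_\sigma}{\pid r}$ is defined'' would not go through at the conditional case, because definedness of the CC merge $\epp{\emb{A_1}_\sigma}{\pid r}\merge\epp{\emb{A_2}_\sigma}{\pid r}$ does not follow from definedness of the two operands alone --- it needs the two operands to be isomorphic up to branching. Your strengthened invariant (the CC projection has the same branching skeleton as the MC projection, with local primitives translated by a fixed injective renaming into actions on session $k$) is exactly what is needed to transfer mergeability from $\epp{A_1}{\pid r}\merge\epp{A_2}{\pid r}$, which is defined because $A$ is projectable, to the CC side. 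You also correctly isolate the one place where the embedding changes the shape of a projection --- the explicit communication $\com{\pid{q[q]}.x}{\pid{p[p]}.y}:k$ prefixed to an embedded conditional --- and observe that it only touches $\pid p$ and $\pid q$, where the MC projection likewise produces a conditional (resp.\ a send followed by a merge), so the skeleton correspondence is preserved there too. This is a complete and correct proof of the lemma as stated.
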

\begin{proof}
  The thesis follows from the fact that: if the projections of two choreographies are mergeable in CC, then
  the projections of their embeddings into ChC are mergeable.
  This is proven by structural induction.
\end{proof}

Using these results, the proof of Corollary~\ref{cor:cp-tc} can be adapted to yield the following property.
\begin{corollary} [Turing completeness of \ChPP{}]
  Every partial recursive function is implementable in \ChPP{}.
\end{corollary}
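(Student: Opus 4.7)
The plan is to mirror the structure of Corollary~\ref{cor:ap-tc} (for $\APP$), composing the three layers MC $\to$ CC $\to$ CP via the operational correspondences and the supporting lemmas already established. Starting from an arbitrary partial recursive function $f:\NN^n\to\NN$, we first use Theorem~\ref{teo:sound1} to obtain a choreography $A=\enc{f}{\pids p\mapsto\pid q}{}$ in MC that implements $f$. This $A$ need not be projectable, so we apply Lemma~\ref{lem:amend} to get $A'=\amend(A)$, which is projectable and, by the correspondence clause of the same lemma, operationally equivalent to $A$. In particular, $A'$ still implements $f$.

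Next, I would move to CC. For any state $\sigma$ setting the input processes $\pids p$ to the encodings of the input values (and everything else to $\varepsilon$), I form the channel choreography $C=\emb{A'}_{\sigma}$. By Lemma~\ref{lem:emb_a_projectable}, $C$ is projectable, and by the unnamed lemma immediately preceding the corollary, $C$ is well-typed in the session-type discipline of CC. Moreover, Theorem~\ref{thm:oc-ac-cc} gives a tight (two-way) simulation between $A',\sigma$ and $\emb{A'}_\sigma$, so the fact that $A'$ implements $f$ in MC lifts to the statement that $C$ implements $f$ in CC in the sense of the CC-implementation definition (terminating to $\nil$ on defined inputs with the output value received exactly once by $\pid q$, and diverging on undefined inputs). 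This is essentially the content of Theorem~\ref{thm:sound2} combined with the amendment step.

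Finally, I would project: set $P=\epp{C}{}$. By definition, $P\in\CPP$ since $C$ is projectable (and well-typed). Applying the operational correspondence Theorem~\ref{thm:oc-cc-cp} (available because $C$ is well-typed, projectable, and contains no $\mathsf{start}$ subterms, the latter being true by construction of the embedding), together with Corollary~\ref{cor:df-cc-epp} for deadlock-freedom, transfers the implementation property from $C$ to $P$: on defined inputs the projection reduces to $\nil$ with $\pid q$ receiving $\numeral{f(\til x)}$ exactly once, and on undefined inputs it diverges. Hence $P$ implements $f$ in $\CPP$.

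The main obstacle is not any single hard step but rather the bookkeeping of matching the three different notions of \emph{implementation} (stateful in MC, substitution-based in CC, network-based in CP) across the two correspondence theorems, ensuring in particular that the pruning relation $\prec$ appearing in Theorem~\ref{thm:oc-cc-cp} does not disturb the observable ``input goes in, output comes out'' behaviour, and that the well-typedness hypothesis needed by Theorem~\ref{thm:oc-cc-cp} is guaranteed by the preceding lemma precisely because we amended before embedding. All these points are either already proved or straightforward, so the corollary will follow by a short combination argument.
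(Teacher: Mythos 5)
Your proposal is correct and follows essentially the same route as the paper: the paper explicitly states that the proof of Corollary~\ref{cor:ap-tc} is to be adapted using the two preceding lemmas (well-typedness of embeddings of amended choreographies and Lemma~\ref{lem:emb_a_projectable}), which is precisely the chain MC $\to$ amend $\to$ embed $\to$ project via Theorems~\ref{teo:sound1}, \ref{thm:oc-ac-cc}, \ref{thm:sound2} and \ref{thm:oc-cc-cp} that you describe. Your closing remarks about reconciling the three notions of implementation and the role of pruning are exactly the bookkeeping the paper leaves implicit.
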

We thus characterise a fragment of the session-based $\pi$-calculus from~\cite{BCDLDY08} that contains only
deadlock-free terms and is Turing complete.

\subsection{Asynchronous semantics}

The original semantics of ChC is made asynchronous by means of an additional rule that allows some transitions
protected by a prefix to be executed, yielding more possible reduction sequences.
In this scenario, the operational correspondence between CC and ChC is no longer as strong as stated in
Theorem~\ref{thm:oc-cc-chc}; in particular, the result for soundness now reads
\begin{quote}
  \emph{(Soundness)} $\emb{C}_\sigma\to^+ C'$ implies $C,\sigma \to^+ C^\ast,\sigma^\ast$ and
  $C'\to^*\emb{C^\ast}_{\sigma^\ast}$
\end{quote}
However, the reduction relation in the asynchronous setting, restricted to the language fragment we consider
(embeddings of core choreographies, which in particular are typable) is still confluent, and it includes all
synchronous executions.
Therefore, Theorem~\ref{thm:sound2} remains valid in this general case.

\end{document}